\documentclass[aps,twocolumn,prx,preprintnumbers, nofootinbib,floatfix,10pt]{revtex4-2}
\usepackage{xpatch}
\makeatletter
\patchcmd{\@ssect@ltx}
    {\addcontentsline{toc}{#1}{\protect\numberline{}#8}}
    {}
    {}
    {}
\makeatother
\pdfoutput=1
\usepackage{color}
\usepackage{enumitem}
\usepackage[dvipsnames]{xcolor}
\usepackage{hyperref}
\definecolor{oxfordblue}{RGB}{0,33,71}
\definecolor{amaranth}{rgb}{0.9, 0.17, 0.31}
\definecolor{forestForestGreen(web)}{rgb}{0.13, 0.55, 0.13}
\definecolor{blue(munsell)}{HTML}{005567}
\definecolor{bblue}{rgb}{0.0, 0.58, 0.71}
\hypersetup{
pdfstartview={FitH}, 
hypertexnames=false, 
pdftitle={}, 
colorlinks=true, 
linkcolor=blue(munsell), 
citecolor=oxfordblue, 
filecolor=oxfordblue, 
urlcolor=oxfordblue
}
\usepackage{pgfplots}
\pgfplotsset{compat=1.18}
\usepackage{amssymb}
\usepackage{amsfonts}
\usepackage{graphicx}
\usepackage{epstopdf}
\usepackage{dcolumn}
\usepackage{amssymb}
\usepackage{amsfonts}
\usepackage{graphicx}
\usepackage{epstopdf}
\usepackage{dcolumn}
\usepackage{amsmath}
\usepackage{mathtools}
\usepackage{latexsym,bm}
\usepackage{amsthm}
\usepackage{latexsym,bm}
\usepackage{amsthm}
\usepackage{slashed}
\usepackage{color}
\usepackage{url}
\usepackage{longtable}
\usepackage{rotating}
\usepackage[normalem]{ulem}
\usepackage{faktor}
\usepackage{tikz-cd}
\usepackage{tensor}
\usepackage{array}
\usepackage{tabularx}
\usepackage{makecell}
\usepackage{changepage}

\usepackage{pgfplots}

\makeatletter
\renewcommand{\p@subsection}{}
\renewcommand{\p@subsubsection}{}
\g@addto@macro\appendix{\renewcommand{\p@subsection}{}}
\makeatother

\numberwithin{equation}{section}

\newcommand{\de}{\partial}

\newcommand{\paragraf}[1]{\par\medskip\noindent{\bf #1}}
\numberwithin{equation}{section}

\long\def\remove#1{{\color [rgb]{.6,.2,.1}#1}}

\usepackage{tikz}
\usepackage{diagbox}
\usetikzlibrary{positioning}
\usetikzlibrary{calc}
\usetikzlibrary{decorations.pathreplacing,calligraphy}

\usepackage{xstring}
\usetikzlibrary{decorations.pathmorphing} 
\usetikzlibrary{decorations.markings} 
\usetikzlibrary{arrows} 
\usetikzlibrary{shapes} 
\usetikzlibrary{matrix} 
\usetikzlibrary{positioning} 
\usepackage[english]{babel} 
\usepackage[autostyle]{csquotes}
\usepackage{pifont}
\usetikzlibrary{shapes.multipart}

\tikzset{->-/.style={decoration={
  markings,
  mark=at position .5 with {\arrow{>}}},postaction={decorate}}}

\newcommand{\bea}{\begin{eqnarray}}
\newcommand{\eea}{\end{eqnarray}}
\newcommand{\be}{\begin{equation}}
\newcommand{\ee}{\end{equation}}
\newcommand{\ba}{\begin{aligned}}
\newcommand{\ea}{\end{aligned}}
\newcommand{\bit}{\begin{itemize}}
\newcommand{\eit}{\end{itemize}}
\newcommand{\ben}{\begin{enumerate}}
\newcommand{\een}{\end{enumerate}}
\newcommand{\nn}{\nonumber}
\newcommand{\id}{\text{id}}

\renewcommand{\ol}{\overline}

\newcommand{\Tr}{\text{Tr}}

\newcommand{\bbI}{\mathbb{I}}

\newcommand{\wt}{\widetilde}

\newcommand{\ot}{\otimes}

\newcommand{\Z}{{\mathbb Z}}

\newcommand{\bC}{{\mathbb C}}

\newcommand{\cE}{\mathcal{E}}
\newcommand{\cF}{\mathcal{F}}

\renewcommand{\dim}{\text{dim}}

\newcommand{\Irr}{\text{Irr}}

\usepackage{comment}

\def\remove#1{{\color [rgb]{1,0,0}\sout{#1}}}
\global\long\def\remove#1{}

\usepackage{physics}

\makeatother

\newcommand{\III}{\text{III}}

\usepackage{makecell}

\newtheorem{theorem}{Theorem}
\newtheorem{proposition}[theorem]{Proposition}
\newtheorem{lemma}[theorem]{Lemma}

\theoremstyle{definition}

\makeatletter
\newcommand\xlabel[2][]{\phantomsection\def\@currentlabelname{#1}\label{#2}}
\makeatother

\makeatletter
\renewcommand\subsubsection{\@startsection{subsubsection}{3}{\z@}%
                                     {-3.25ex\@plus -1ex \@minus -.2ex}%
                                     {2.5ex \@plus .2ex}%
                                     {\centering\itshape\bfseries\small}}
\makeatother

\begin{document}
\title{Exact Recovery for Non-Abelian Surface Codes} 

\author{Alison Warman$\,^1$}
\author{Nathanan Tantivasadakarn$\,^2$}
\author{Sakura Sch\"afer-Nameki$\,^1$}

\affiliation{$^1$ Mathematical Institute, University of Oxford, Woodstock Road, Oxford, OX2 6GG, United Kingdom}
\affiliation{$^2$ C. N. Yang Institute for Theoretical Physics, Stony Brook University, Stony Brook, NY 11794, USA}

\begin{abstract}
\noindent
We study exact recovery for non-Abelian topological surface codes based on the quantum double $D(G)$ of any finite group $G$. 
We determine an orthogonal error basis, that is comprised of group multiplication and irreducible representation operators. 
Whenever the errors are supported on closed loops on the lattice or dual lattice, there is a redundancy in the basis, due to stabilizers. To remedy this, we introduce a gauge-fixing that results in a complete and orthogonal error basis on the code space.  
Assuming as input predetermined, neutral correctable error clusters, we construct charge and flux transfer circuits that move the errors onto ancillas that then get projected out. 
This is an exact and deterministic recovery protocol, applicable to surface codes for any finite, in particular non-Abelian, group $G$. 
\end{abstract}

\maketitle
\tableofcontents

\section{Introduction}

Nearly 30 years ago, Kitaev introduced the 2D surface code for finite groups $G$ based on the quantum double $D(G)$ \cite{Kitaev:1997wr}. This is a framework that incorporates both abelian and non-abelian  topological codes. However, as surface codes (as opposed to using braiding to do quantum computing) the main focus in the literature has thus far been on abelian groups.  
Unfortunately, for 2D Pauli stabilizer codes, including the standard $G=\Z_2$ surface code, constant-depth geometrically local logicals are restricted to be  Clifford by the Bravyi-K\"onig theorem \cite{Bravyi:2012rnv}. This means that in order to achieve universal computation one requires additional resources, such as code switching to higher dimensional topological codes that admit transversal non-Clifford gates~\cite{Bombindimensionaljump16}, or non-topological protocols such as magic state distillation or cultivation~\cite{BravyiKitaev05,GidneyCultivation24}.

Recently, topological codes that host non-Abelian anyons have become more prominent. Efficient state preparation protocols for such non-Abelian states using sequential circuits or adaptive circuits have been developed \cite{Liustringnet22,Tantivasadakarn:2021vel,Verresen:2021wdv,Bravyi:2022zcw,Tantivasadakarn:2022ceu,Tantivasadakarn:2022hgp,Lyons25,Ren25}, as well as state-of-the-art quantum processors have become mature enough to experimentally realize such non-Abelian states~\cite{Iqbal:2023wvm,xu_non-abelian_2024,minev2025realizing,Lo:2026oxm}. This has motivated the use of non-abelian states as magic states \cite{Laubscher:2019rss, Davydova:2025ylx, Huang:2025cvt}, and as means to realize non-Clifford gates via generalized lattice surgery for more general groups \cite{Huang:2025ump, Manjunath:2026swm}. Perhaps more surprisingly, such non-Abelian topological codes can also admit non-Clifford (weakly) transversal gates \cite{Kobayashi:2025cfh, Warman:2025hov, Fechisin:2026ywx}. 
As was shown in~\cite{Warman:2025hov} with specific choices of group and boundary conditions, they can encode a logical qubit and realize non-Clifford gates at an arbitrary level of the Clifford hierarchy via constant-depth circuits. 

However, QEC for non-abelian codes is challenging, for various reasons.
Firstly, there is no abelian stabilizer group, since the standard $A_v^g$ and $B_p^f$ checks do not commute and do not form a group (in particular the $B_p^g$ operators are non-invertible). 
A generalization of a stabilizer group that closes and has inverses can be constructed for certain groups such as the dihedral groups  \cite{Warman:2025hov}, 
but is nevertheless non-abelian. 
Furthermore, the anyon strings of these codes compose based on a non-invertible fusion and thus, unlike standard Pauli strings, do not have inverses -- in particular the associated ribbon operators that create a definite anyon type are in general not unitary. This poses a significant challenge for sound and efficient quantum error correction (QEC).

For specific non-abelian codes, such as the twisted quantum double $D(\Z_2^3, \omega_{\III})$ of an abelian group with type III cocycle, which is equivalent as a phase of matter to the non-abelian quantum double $D(D_4)$ for the dihedral group of order $8$, so-called just-in-time decoders have been developed \cite{Davydova:2025ylx,delaFuente:2026slp}. However these rely on the specific structure of this code that uses a formulation with $\Z_2^3$ as the underlying group. For more general codes such as those in \cite{Warman:2025hov} or other finite non-abelian groups, this is insufficient.

In this paper, we consider a general finite group $G$ and develop a complete error basis for the surface code $D(G)$, and the exact correction unitary.
An important input -- which we do not determine and leave for future work -- is a syndrome extraction circuit and decoder that can provide us with neutral, correctable error clusters. Then, within each cluster, we provide a circuit that moves the errors onto ancillas that can then be decoupled.  
The main result of this paper is generally applicable for any finite group $G$. As this point is rather important to understand the relevance of the present work, we will summarize the current situation in the literature: 

\paragraf{Previous results on Non-Abelian Quantum Error Correction.}  
Here, we summarize existing results on correctability and syndrome extraction for non-Abelian codes. 
The topological property of the $D(G)$, implies correctability of local errors for every finite $G$ \cite{Cui:2019lvb} and recovery thresholds under sufficiently weak local channels have also been established \cite{PKim:2026smb}.  
Although these previous works prove that a recovery {\bf exists}, they do not construct it.\footnote{Explicit correction is  known for Abelian $G$ \cite{Cui:2024kme}.}
Syndrome extraction circuits for coarse charge and flux measurements are available for general groups \cite{Brennen:2009vgn,Manjunath:2026swm}, with detailed treatments of $D(S_3)$ in \cite{Laubscher:2019rss,Chen:2024xkw,Sajith:2025rvy}. 
Such charge and flux outcomes, however, do not resolve the full error.  Charge measurement 
is also possible using constant-depth adaptive circuits for solvable groups \cite{Bravyi:2022zcw,Lyons25,Ren25}. 
This is distinct from the complete syndrome measurements on supplied supports that we consider in the present work. Beyond groups, codes hosting Fibonacci anyons have also been considered \cite{Burton:2015bca,Schotte22}, but they however either are done abstractly without a lattice model or a assume only a single local error per time step. 

QEC strategies for non-abelian codes often require repeated rounds, even under ideal measurement assumption. It is useful to recall these approaches: anyon-level decoders fuse and remeasure iteratively \cite{Wootton:2014kgh,Wootton:2015but,Dauphinais:2016loh} 
and for the $D(D_4)= D(\Z_2^3, \omega_{\text{III}})$, errors can be corrected by Pauli strings with just-in-time decoding
\cite{Davydova:2025ylx,Kobayashi:2025cfh,Warman:2025hov,delaFuente:2026slp}.
Optimal and integer-programming decoders trace out the  internal degrees of freedom of the anyon, however, 
full recovery generally needs repeated rounds of error correction \cite{Jing:2025zuu,Jing:2026uil}. Explicit recovery channels for general surface codes are posed as future work in \cite{PKim:2026smb}. 
Neutral clusters of $D(S_3)$ can be corrected by locally ungauging to $D(\mathbb{Z}_3)$~\cite{Lyons:2026ljw}, a method expected to extend to quantum double of solvable groups, which can be prepared with constant-depth adaptive circuits \cite{Bravyi:2022zcw,Tantivasadakarn:2022hgp}.

\paragraf{Present work.} 
With this overview of the existing literature in mind, it is clear now that 
it would be important to derive an exact, deterministic, constant-round recovery for $D(G)$ for a general finite group $G$. 
A particular complication with non-abelian groups is that anyons fuse non-invertibly. 
Syndromes that measure only anyon occupation will miss their internal degrees of freedom \cite{Jing:2025zuu}. 
It is precisely this important gap that we will fill in the literature. As we have stressed before, we assume a decoder that provides us with neutral correctable error clusters, and we assume perfect syndrome measurements. Recovery then proceeds by moving errors onto plaquette and vertex ancillas while moving along paths of the dual or direct lattice respectively.
Concretely, for every finite group $G$, we determine the following error basis and recovery protocols for Kitaev quantum doubles $D(G)$:
\begin{enumerate}
    \item  
    {\bf Complete Error Basis.} In  Sec.~\ref{sec:errors} we define an orthogonal error basis and determine  
    an explicit ``gauge-fixing'' procedure to remove stabilizer redundancies. The resulting error family obeys the Knill-Laflamme Quantum Error Conditions of Thm.~\ref{thm:Knill-Laflamme} and is an orthogonal basis on the code space. 

\item {\bf Error Transfer Protocols.} In Sec.~\ref{sec:errorcorrection} we introduce so-called flux and charge transfer circuits. These move errors onto ancillas that are then projected out. E.g. any multiplication error on an edge that is the only errored edge of a plaquette can be moved to a plaquette ancilla qudit. Similarly any irrep error on a vertex whose star does not include other irrep errors can be moved to a vertex ancilla. 
We repeat these steps for multi-error configurations, by moving along paths on the dual or direct lattice respectively, returning the errored state to the code-space. We also discuss ribbon errors in App.~\ref{app:ribbon}. 
This results in an exact, deterministic recovery protocol. 
\end{enumerate}
We should stress that our protocol assumes as input neutral error clusters, and that we leave the problem of constructing an optimal syndrome extraction/decoder to identify these for separate work. 
In that sense, this work  serves as an exact quantum recovery step in a decoder, which is applicable to any finite group, including non-solvable ones, such as $A_5$. 
We leave an investigation of compatibility of our correction protocol with particular decoders and noise models, e.g. with the thresholds obtained from statistical-mechanics mappings \cite{Jing:2025zuu,Vadali:2026qon,Sala:2024uom,Sajith:toappear}, to future work.

\section{The Error Basis} \label{sec:errors}
\subsection{Non-Abelian Surface Code Setup}
We review the 2D surface code based on the Kitaev quantum double $D(G)$ \cite{Kitaev:1997wr}. Its local Hilbert space on each lattice edge is the group algebra
\be\label{eq:setup-01}
\bC[G]=\text{Span}\{\ket{g}:g\in G\}\,,
\ee 
with left and right multiplication operators acting as
\be \label{eq:LR_def}
L^h\ket{g}=\ket{hg}\,,\qquad R^h\ket{g}=\ket{gh^{-1}}\,,
\ee
for all $g,h\in G$. We also define the irreducible representation (irrep) operators
\be \label{eq:ZGamma_def}
Z^\Gamma_{ab}\ket{g}=\Gamma(g)_{ab}\ket{g}\,,
\ee
where $\Gamma\in\Irr(G)$, the set of irreducible representations of $G$, $d_\Gamma:=\dim\Gamma$, and $1\le a,b\le d_\Gamma$. We will take all representation matrices $\Gamma(g)$ to be unitary. 

The Hamiltonian for the model \cite{Kitaev:1997wr} is built from vertex and plaquette operators, which, on a square lattice, are:
\be\ba \label{eq:AvBp}
    A_v^{g}= \, & \begin{tikzpicture}[baseline]
\begin{scope}[shift={(0,0)}]
\draw[thick, ->-] (-1,0) to (0,0);
\draw[thick, ->-] (0,0) to (1,0);
\draw[thick, ->-] (0,-1) to (0,0);
\draw[thick, ->-] (0,0) to (0,1);
\node[above] at (0.2, 0) {$_v$};
\node[above] at (-0.7,0.05) {$R^g$};
\node[above] at (0.7,0.05) {$L^g$};
\node[right] at (0.05,0.7) {$L^g$};
\node[right] at (0.05,-0.7) {$R^g$};
\draw[fill=black] (0,0) ellipse (0.05 and 0.05);
\end{scope}
\end{tikzpicture}\,,\\
    B_p^{g}= & \sum\limits_{g_1,g_2,g_3,g_4}\delta_{g,\,g_1g_2g_3^{-1}g_4^{-1}} \times \\ 
{}&
\qquad 
\times  \begin{tikzpicture}[baseline]
\begin{scope}[shift={(0,0)}]
    \draw[->-] (-1,-0.5) to (-1,0.5);
    \node at (-1.5,0) {$\Bigg|$};
    \node at (-1.3,0) {$_{g_1}$};
    \draw[->-] (-1,0.5) to (0,0.5);
    \node at (-0.5,0.8) {$_{g_2}$};
    \draw[->-] (0,-0.5) to (0,0.5);
    \node at (0.3,0) {$_{g_3}$};
    \draw[->-] (-1,-0.5) to (0,-0.5);
    \node at (-0.5,-0.8) {$_{g_4}$};
    \node at (0.5,0) {$\Bigg>$};
    \node at (-0.5,0) {$_p$};
    \begin{scope}[shift={(2.2,0)}]
    \draw[->-] (-1,-0.5) to (-1,0.5);
    \node at (-1.5,0) {$\Bigg<$};
    \node at (-1.3,0) {$_{g_1}$};
    \draw[->-] (-1,0.5) to (0,0.5);
    \node at (-0.5,0.8) {$_{g_2}$};
    \draw[->-] (0,-0.5) to (0,0.5);
    \node at (0.3,0) {$_{g_3}$};
    \draw[->-] (-1,-0.5) to (0,-0.5);
    \node at (-0.5,-0.8) {$_{g_4}$};
    \node at (0.5,0) {$\Bigg|$};
    \node at (-0.5,0) {$_p$};        
    \end{scope}
\end{scope}    
\end{tikzpicture}\,.
\ea\ee
Our setup is a surface code for $D(G)$ on a finite size patch, with boundaries. Each boundary condition is labeled by a subgroup $K\leq G$ (here, we will restrict to trivial $2$-cocycles). The Hamiltonian for the bulk and boundary terms is then given by \cite{Kitaev:1997wr,Beigi:2010htr,Albert:2021vts,Cui:2019lvb,Huang:2025ump,Warman:2025hov}
\be \label{eq:Hamiltonian}
H=-\sum_vA_v-\sum_pB_p-\sum_{s}(A_{s}^{K_s}+B_{s}^{K_s})\,,
\ee
where the bulk operators are
\be\ba \label{eq:Av_Bp}
    A_v:=\frac{1}{|G|}\sum_g A_v^g\,,\qquad B_p:=B_p^{\id}\,,
\ea\ee
and the boundary ones are \cite{Beigi:2010htr}
\be\label{eq:setup-02}
    A_{s}^{K}:=\frac{1}{|K|}\sum_{k\in K} A_{s}^{k}\,,\qquad B_{s}^{K}:=\sum_{k\in K} B_{s}^{k}\,.
\ee
The first two sums run over bulk vertices $v$ and plaquettes $p$ while $s=(v,p)$ runs over the sites on the various boundaries. The truncated $A_{s}^{K_s}$ acts on the three edges at $v$, two along the boundary and one into the bulk, while $B_{s}^{K}$ acts on the single boundary edge $s$, with $B_{s}^k\ket{g}=\delta_{k,g}\ket{g}$.

\paragraf{Code space.} The code space is defined as the $+1$ eigenspace of the operators of the Hamiltonian \eqref{eq:Hamiltonian}: the logical states are its inequivalent ground states. Since the Hamiltonian terms are commuting projectors, each logical state $\ket{\psi}$ has to satisfy:
\be\ba
    A_v\ket{\psi}=B_p\ket{\psi}=A_{s}^{K_s}\ket{\psi}=B_{s}^{K_s}\ket{\psi}=\ket{\psi}\,,
\ea\ee
i.e. $\ket{\psi}$ is stabilized by the Hamiltonian terms.
The code space projector is then:
\be\label{eq:PAB}
P=\left(\prod_vA_v\right)\left(\prod_pB_p\right)\left(\prod_sA_s^{K_s}\right)\left(\prod_sB_s^{K_s}\right)\,.
\ee
and satisfies $P^2=P=P^\dagger$.

The logical operators are realized in terms of anyon ribbon operators that stretch between the top and bottom boundaries and a logical basis is generated by acting with them on the (unnormalized) reference state
\be \label{eq:KvAv}
\ol{\ket{\id}}=\left(\prod_vA_v\right)\left(\prod_sA_s^{K_s}\right)\bigotimes_i\ket{\id}_i\,.
\ee
where $i$ ranges over all lattice edges.
Setups on a square patch with opposite boundary conditions labeled by $K=\{\id\}$ (``rough'') and $K=G$ (``smooth'') were considered in \cite{Huang:2025ump,Manjunath:2026swm} and encode a logical group algebra $\mathbb{C}[G]$, while three different boundary conditions were used to encode a logical qubit for dihedral groups in \cite{Kobayashi:2025cfh,Warman:2025hov}.

\subsection{Construction of the Error Basis}\label{sec:errorbasis}

For an error basis, we can generalize the Pauli error basis for $\Z_2$ to the general finite group $G$ case. We will use the unitary left-multiplication operators $L^h$ for $h\in G$ as generalizations of Pauli $X$ and non-unitary rescaled irrep operators as generalizations of Pauli $Z$. Their products form an orthogonal basis of the operators on each single-site Hilbert space $\mathbb{C}[G]$ and their tensor-products span the operators on the physical Hilbert space $\bigotimes_i\mathbb{C}[G]_i$.

\begin{proposition}\label{prop:Emui_basis_single_site}
An orthogonal\footnote{We use the Hilbert–Schmidt inner product for linear operators: $\langle A,B\rangle_{\text{HS}}=\Tr(A^\dagger B)$ and denote by $\Tr_i$ the trace restricted to edge $i$.} basis of errors acting on each edge $i$ is
\be\label{eq:Emui}
E_i^{h,\mu}:=L_i^hE_i^\mu\,,\qquad E_i^\mu:=\sqrt{d_\Gamma}\,Z^\Gamma_{ab,i}\,.
\ee
Here $h\in G$ is a group element with $L^h$ defined in \eqref{eq:LR_def} while $\mu=(\Gamma,a,b)$, with $\Gamma\in\Irr(G)$ and $1\le a,b\le d_\Gamma$, labels the irrep operators defined in \eqref{eq:ZGamma_def}. They satisfy
\be\label{eq:Emunu_delta}
\Tr_i\!\left[(E_i^{h,\mu})^\dagger E_i^{k,\nu}\right]=|G|\,\delta_{h,k}\delta_{\mu,\nu}\,.
\ee
\end{proposition}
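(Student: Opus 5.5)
The plan is to compute the Hilbert–Schmidt inner product directly in the group basis $\{\ket{g}\}$ of $\bC[G]$ and reduce it to the Schur orthogonality relations. Orthogonality then gives linear independence, and counting the elements gives that the set spans, so it is a basis.

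\emph{Step 1: matrix elements.} I first note that $L^h$ is unitary, with $(L^h)^\dagger=L^{h^{-1}}$. Also $Z^\Gamma_{ab}$ is diagonal in the group basis, and its adjoint is $(Z^\Gamma_{ab})^\dagger\ket{g}=\overline{\Gamma(g)_{ab}}\ket{g}$. Therefore
\be
(E^{h,\mu})^\dagger E^{k,\nu}=\sqrt{d_\Gamma d_{\Gamma'}}\,(Z^\Gamma_{ab})^\dagger L^{h^{-1}k} Z^{\Gamma'}_{a'b'},
\ee
where $\mu=(\Gamma,a,b)$ and $\nu=(\Gamma',a',b')$. Applying this operator to $\ket{g}$ gives
\be
\sqrt{d_\Gamma d_{\Gamma'}}\,\Gamma'(g)_{a'b'}\,\overline{\Gamma(h^{-1}kg)_{ab}}\,\ket{h^{-1}kg}.
\ee

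\emph{Step 2: the trace.} Taking $\bra{g}$ of this vector and summing over $g$ picks out only the diagonal entries. Since $\braket{g}{h^{-1}kg}=\delta_{h,k}$ for every $g$, the trace is $\delta_{h,k}$ times
\be
\sqrt{d_\Gamma d_{\Gamma'}}\sum_{g\in G}\overline{\Gamma(g)_{ab}}\,\Gamma'(g)_{a'b'}.
\ee

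\emph{Step 3: Schur orthogonality.} The grand orthogonality theorem for unitary irreps states that
\be
\sum_{g\in G}\overline{\Gamma(g)_{ab}}\,\Gamma'(g)_{a'b'}=\frac{|G|}{d_\Gamma}\,\delta_{\Gamma\Gamma'}\delta_{aa'}\delta_{bb'}.
\ee
Substituting this into Step 2, the prefactor $\sqrt{d_\Gamma d_{\Gamma'}}$ becomes $d_\Gamma$ on the support of the deltas and cancels the $1/d_\Gamma$. The result is $|G|\,\delta_{h,k}\delta_{\mu,\nu}$, which is \eqref{eq:Emunu_delta}. This step is the only place that needs an outside result; it is standard, and it is exactly why the $\sqrt{d_\Gamma}$ normalization is in \eqref{eq:Emui}.

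\emph{Step 4: completeness.} Nonzero mutually orthogonal operators are linearly independent. The number of labels is $|G|\cdot\sum_\Gamma d_\Gamma^2=|G|^2=\dim\mathrm{End}(\bC[G])$, using $\sum_{\Gamma\in\Irr(G)} d_\Gamma^2=|G|$. So the $E_i^{h,\mu}$ form an orthogonal basis of the operators on edge $i$.

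I expect no real obstacle. The only care needed is in Step 1: tracking the conjugation correctly, so that $\Gamma$ rather than $\Gamma'$ is the one conjugated and the matrix elements match the form Schur's relation requires.
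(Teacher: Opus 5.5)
Your proof is correct and follows the same route as the paper. The trace vanishes unless $h=k$ because $L^{h^{-1}k}$ has no diagonal entries otherwise, Schur orthogonality fixes the remaining sum to $\frac{|G|}{d_\Gamma}\delta_{\Gamma\Gamma'}\delta_{aa'}\delta_{bb'}$, and the count $|G|\sum_\Gamma d_\Gamma^2=|G|^2$ gives completeness; you simply spell out the matrix-element computation that the paper compresses into a single line.
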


\begin{proof}
Since $L_i^g$ has diagonal entries which are all zero for $g\neq\id$, Schur orthogonality of the irreps \cite{Serre:1977} then gives
\be\label{eq:setup-10}\ba
\Tr_i\!\left[(L_i^hZ^\Gamma_{ab,i})^\dagger L_i^kZ^\Lambda_{cd,i}\right]&=\delta_{h,k}\sum_{g}\Gamma(g)_{ab}^*\Lambda(g)_{cd}\cr
&=\delta_{h,k}\frac{|G|}{d_\Gamma}\delta_{\Gamma\Lambda}\delta_{a,c}\delta_{b,d}\,,
\ea\ee
which, using the normalization of \eqref{eq:Emui}, is \eqref{eq:Emunu_delta}. Therefore, these $|G|\sum_\Gamma d_\Gamma^2=|G|^2$ orthogonal operators form a basis of the operators on each edge of the lattice.
\end{proof}

\paragraf{Multi-Qudit Errors.}
It is useful to introduce notation for the tensor product of errors on an ordered list of distinct edges $\vec i=(i_1,\ldots,i_N)$ 
\be\label{eq:multi-error-basis}\ba
E_{\vec i}^{\vec h,\vec\mu}&=L_{\vec i}^{\vec h}E_{\vec i}^{\vec\mu}=\prod_{n=1}^N E_{i_n}^{h_n,\mu_n}\,,\cr
L_{\vec i}^{\vec h}&=\prod_{n=1}^NL_{i_n}^{h_n}\,,\qquad E_{\vec i}^{\vec\mu}=\prod_{n=1}^NE_{i_n}^{\mu_n}\,,
\ea\ee
which, due to \eqref{eq:Emunu_delta}, obey the orthogonality condition 
\be\label{eq:multi-hs}
\Tr_{\vec i}\!\left[(E_{\vec i}^{\vec h,\vec\mu})^\dagger E_{\vec i}^{\vec k,\vec\nu}\right]=|G|^N\delta_{\vec h,\vec k}\delta_{\vec\mu,\vec\nu}\,.
\ee

\paragraf{Single-Qudit Irrep Basis.} Acting with the irrep errors on the uniform superposition of a single $G$-qudit gives the irrep basis states
\be\ba\label{eq:edge-irrep-basis}
\ket1&:=\frac{1}{\sqrt{|G|}}\sum_{g\in G}\ket g\,,\cr
\ket{\Gamma_{ab}}:=E^\mu\ket1&=\sqrt{\frac{d_\Gamma}{|G|}}\sum_{g\in G}\Gamma(g)_{ab}\ket g\,,
\ea\ee
with $\mu=(\Gamma,a,b)$. Since $E^\mu$ is diagonal in the group basis, $\braket{\Gamma_{ab}}{\Lambda_{cd}}=\frac{1}{|G|}\Tr\bigl[(E^\mu)^\dagger E^\nu\bigr]$, so these $|G|$ states are orthonormal by \eqref{eq:Emunu_delta}. {The irrep basis states are therefore linear combinations of the group basis states, and the change of basis is the non-Abelian Fourier transform. Since there are $\sum_\Gamma d_\Gamma^2=|G|$ irrep labels, we order them as $\mu_1=(1,1,1),\ldots,\mu_{|G|}$ and the group elements as $g_1=\id,\ldots,g_{|G|}$, and denote by $\ket{\Gamma,a,b}:=\ket{g_k}$ the group basis state in the same position as the irrep label $\mu_k=(\Gamma,a,b)$. The Fourier transform of a $G$-qudit is then the unitary
which in the group basis $F$ is the $|G|\times|G|$ matrix
\be\ba\label{eq:fourier-gate}
&F=\frac{1}{\sqrt{|G|}}\begin{pmatrix}1&\cdots&1\\\vdots&&\vdots\\\sqrt{d_\Gamma}\,\Gamma(g_1)^*_{ab}&\cdots&\sqrt{d_\Gamma}\,\Gamma(g_{|G|})^*_{ab}\\\vdots&&\vdots\end{pmatrix}\,,
\ea\ee
whose rows, labeled by $\mu=(\Gamma,a,b)$ starting with the trivial irrep, are the complex conjugated coefficients of the irrep states in the group basis, and whose columns are labeled by $g_1,\ldots,g_{|G|}$. By Schur orthogonality and the Peter-Weyl theorem, $F$ is unitary, $FF^\dagger=F^\dagger F=\bbI$, and it maps the irrep basis to the group basis.}

\subsection{Flux and Charge Checks}\label{sec:single-detect}
Isolated errors can be detected from the violations of the stabilizers. Indeed, when an isolated error carries a non-trivial multiplication label it produces non-trivial flux, i.e. a violation of the $B_p$ stabilizer, on its boundary plaquettes (and can produce charge at one vertex), while for non-trivial irrep labels it produces charge, detected by a violation of the $A_v$ stabilizer.

\paragraf{Violated Plaquettes and Vertices.} Consider an error as in \eqref{eq:Emui},
\be\ba\label{eq:general-error-recovery}
E_i^{h,\mu}&=L_i^hE_i^\mu=\sqrt{d_\Gamma}\,L_i^hZ^\Gamma_{ab,i}\,,\cr
h&\in G\,,\qquad\mu=(\Gamma,a,b)\,.
\ea\ee
For an edge $i$, let $p(i)$ and $v(i)$ be the plaquettes containing $i$ and the vertices incident on $i$ that carry a nontrivial constraint,
\be\ba
p(i)&:=\{p:B_p\neq\bbI\,,\ i\in\partial p\}\,,\cr
v(i)&:=\{v:A_v\neq\bbI\,,\ i\in\star v\}\,,
\ea\ee
where $\partial p:=\operatorname{supp}B_p$ and $\star v:=\operatorname{supp}A_v$. We call the vertices of $v(i)$ the {\bf constrained endpoints} of $i$. In the bulk $A_v$ and $B_p$ are never $\bbI$, but at the boundaries they can be, see \eqref{eq:setup-02}: $B_s^K=\bbI$ for $K=G$ (smooth) and $A_s^K=\bbI$ for $K=\{\id\}$ (rough). 

\begin{lemma}[Plaquette violations]\label{lemma:PBEPB}
Let $p\in p(i)$, and suppose that only the edge $i\in\partial p$ carries a non-trivial multiplication error $L_i^h$. Then, for every $h\neq\id$,
\be
B_pL_i^hB_p=0\,,\qquad B_pE_i^{h,\mu}B_p=0\,.
\ee
\end{lemma}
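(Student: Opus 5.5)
The plan is to work directly in the group basis of the edges in $\partial p$, where $B_p$ is diagonal. For a bulk plaquette, $B_p=B_p^{\id}$ is the projector onto configurations whose oriented holonomy around $p$ equals $\id$. For a boundary site, $B_s^{K}$ projects the boundary edge onto values in $K$. In both cases $B_p$ is a sum of diagonal projectors $\ket{\vec g}\bra{\vec g}$ over the allowed configurations $\vec g$ on $\partial p$, tensored with the identity elsewhere. Because $E_i^{h,\mu}=L_i^hE_i^\mu$ with $E_i^\mu$ diagonal, it suffices to prove $B_pL_i^hB_p=0$. Indeed, $B_p$ and $E_i^\mu$ are both diagonal in the group basis, so they commute, and we get $B_pE_i^{h,\mu}B_p=(B_pL_i^hB_p)E_i^\mu=0$.

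The key step is to show that $L_i^h$ maps every allowed configuration to a disallowed one when $h\neq\id$. Take a basis state $\ket{\vec g}$ with $B_p\ket{\vec g}=\ket{\vec g}$. Then $L_i^h$ replaces $g_i$ by $hg_i$ and leaves the other edges of $\partial p$ fixed; the hypothesis that only $i$ carries a multiplication error is used exactly here. For the bulk, the holonomy is a product $g_{j_1}^{\pm1}\cdots g_{j_4}^{\pm1}$ in which $g_i$ appears exactly once. Cyclically rotate the word so that $g_i^{\pm1}$ is at the front. Conjugation preserves triviality, so the holonomy is $\id$ iff the rotated word is $\id$.

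If $g_i$ appears with exponent $+1$, the new rotated holonomy is $h\cdot(\text{old})=h\neq\id$. If it appears with exponent $-1$, we have $(hg_i)^{-1}=g_i^{-1}h^{-1}$, and rotating so that $h^{-1}$ comes last gives $(\text{old})h^{-1}=h^{-1}\neq\id$. Either way, $B_pL_i^h\ket{\vec g}=0$ for every $\ket{\vec g}$ in the image of $B_p$, so $B_pL_i^hB_p=0$.

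For the boundary case $p\in p(i)$, we have $B_s^K\neq\bbI$, so $K\neq G$. Here I must be careful: $L^h$ sends $k\in K$ to $hk$, which lies outside $K$ only when $h\notin K$. So the boundary version actually requires $h\notin K_s$, or equivalently that the error is non-trivial in $G/K$. I expect this to be the main obstacle. I would handle it by reading ``non-trivial'' relative to the boundary subgroup, or by restricting to rough boundaries with $K=\{\id\}$, where $h\neq\id$ suffices. The bulk argument above, with the orientation bookkeeping, is otherwise routine.
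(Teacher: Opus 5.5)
Your proposal is correct and follows essentially the paper's argument: both show that $L_i^h$ sends the holonomy of every flat configuration to an element conjugate to $h$ or $h^{-1}$, and both dispose of $E_i^\mu$ because it is diagonal in the group basis (the paper lists the four edge positions explicitly, while you treat them uniformly by cyclic rotation). Your boundary caveat is also accurate: the paper's proof only treats bulk plaquettes, and its boundary section records exactly your modification, $B_s^KL_s^hB_s^K=0$ for $h\notin K$ but $B_s^KL_s^hB_s^K=L_s^hB_s^K$ for $h\in K$.
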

\begin{proof}
For a bulk plaquette with edges labeled as in \eqref{eq:AvBp}, $B_p$ imposes the trivial flux $\id=g_1g_2g_3^{-1}g_4^{-1}$. The error $L_i^h$ changes the flux depending on the site index $i$ as follows:
\be\ba\label{eq:L_flux_change}
L_1^h:\ \id&\mapsto(hg_1)g_2g_3^{-1}g_4^{-1}=h\,,\cr
L_2^h:\ \id&\mapsto g_1(hg_2)g_3^{-1}g_4^{-1}=g_1hg_1^{-1}\,,\cr
L_3^h:\ \id&\mapsto g_1g_2(hg_3)^{-1}g_4^{-1}=g_4h^{-1}g_4^{-1}\,,\cr
L_4^h:\ \id&\mapsto g_1g_2g_3^{-1}(hg_4)^{-1}=h^{-1}\,.
\ea\ee
The flux of the plaquette $p\in p(i)$ thus lies in the conjugacy classes $[h]$ or $[h^{-1}]$, which are both distinct from $[\id]=\{\id\}$ for $h\neq\id$, so $B_pL_i^hB_p=0$. Finally $Z^\Gamma_{ab,i}$ is diagonal in the group basis and preserves each flux sector, so $B_pE_i^{h,\mu}B_p=0$.
\end{proof}
The generalized plaquette operators \cite{Komar:2017uuc,Ritz-Zwilling:2025ptu}
\be \label{eq:Bp_conj_g}
    B_p^{[f]}:=\sum_{g\in[f]}B_p^g\,,
\ee
project onto the flux conjugacy class $[f]$ of $p$. The error $L_i^h$ moves the flux of a plaquette of $p(i)$ with no other multiplication errors from $[\id]$ to $[h]$ or $[h^{-1}]$, which is detected by $B_p^{[h]}$ or $B_p^{[h^{-1}]}$.

\paragraf{Vertex Violations.} The diagonal error $E_i^\mu$ commutes with every plaquette projector but is detected by the nontrivial vertex constraints on $v(i)$. We use the generalized vertex operators \cite{Komar:2017uuc,Ritz-Zwilling:2025ptu} with irrep labels $(\Gamma,a,b)$ for $\Gamma\in\Irr(G)$ and $a,b\in\{1,\cdots,d_\Gamma\}$.
\begin{proposition}\label{eq:prop_Avmu}
The generalized vertex operators
\be\label{eq:Avmu_def}
A_v^{(\Gamma,a,b)}:=\frac{d_\Gamma}{|G|}\sum_{g\in G}\Gamma(g)_{ab}A_v^g\,,
\ee
satisfy the matrix-unit algebra
\be\label{eq:Avmu_comm}
A_v^{(\Gamma,a,b)}A_v^{(\Lambda,c,d)}=\delta_{\Gamma,\Lambda}\delta_{b,c}A_v^{(\Gamma,a,d)}\,,
\ee
and
\be\label{eq:Avmu-adjoint}
\bigl(A_v^{(\Gamma,a,b)}\bigr)^\dagger=A_v^{(\Gamma,b,a)}\,.
\ee
\end{proposition}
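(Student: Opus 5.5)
The plan is to reduce both identities to two facts: $g\mapsto A_v^g$ is a unitary representation of $G$, and the matrix entries of irreps obey Schur orthogonality, as already used in Prop.~\ref{prop:Emui_basis_single_site}.

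\textbf{Representation property.} First I will check that $A_v^gA_v^{k}=A_v^{gk}$ and $(A_v^g)^\dagger=A_v^{g^{-1}}$. From \eqref{eq:LR_def} we have $L^gL^k=L^{gk}$ and $R^gR^k\ket{x}=\ket{xk^{-1}g^{-1}}=R^{gk}\ket{x}$. Since $A_v^g$ in \eqref{eq:AvBp} is a tensor product of $L^g$'s and $R^g$'s on distinct edges, it multiplies accordingly. Both $L^g$ and $R^g$ are permutation matrices with inverse $L^{g^{-1}}$, respectively $R^{g^{-1}}$, which gives the adjoint formula.

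\textbf{Matrix-unit algebra.} Expanding the product and substituting $k=g^{-1}m$ gives
\be
A_v^{(\Gamma,a,b)}A_v^{(\Lambda,c,d)}=\frac{d_\Gamma d_\Lambda}{|G|^2}\sum_{m}\Big(\sum_{g}\Gamma(g)_{ab}\Lambda(g^{-1}m)_{cd}\Big)A_v^m\,.
\ee
Next I expand $\Lambda(g^{-1}m)_{cd}=\sum_e\Lambda(g^{-1})_{ce}\Lambda(m)_{ed}$. Unitarity gives $\Lambda(g^{-1})_{ce}=\Lambda(g)^*_{ec}$. The inner sum over $g$ is then Schur orthogonality, as in \eqref{eq:setup-10}:
\be
\sum_g \Lambda(g)^*_{ec}\Gamma(g)_{ab}=\frac{|G|}{d_\Gamma}\delta_{\Gamma\Lambda}\delta_{ea}\delta_{cb}\,.
\ee
This collapses the $e$-sum to $e=a$. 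The prefactor becomes $\frac{d_\Gamma}{|G|}$, and we are left with $\delta_{\Gamma\Lambda}\delta_{bc}\frac{d_\Gamma}{|G|}\sum_m\Gamma(m)_{ad}A_v^m=\delta_{\Gamma\Lambda}\delta_{bc}A_v^{(\Gamma,a,d)}$, which is \eqref{eq:Avmu_comm}.

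\textbf{Adjoint.} Taking the adjoint of \eqref{eq:Avmu_def} gives $\frac{d_\Gamma}{|G|}\sum_g\Gamma(g)^*_{ab}A_v^{g^{-1}}$. Reindexing $g\to g^{-1}$ and using $\Gamma(g^{-1})^*_{ab}=\Gamma(g)_{ba}$, which follows from unitarity, yields $A_v^{(\Gamma,b,a)}$. This is \eqref{eq:Avmu-adjoint}.

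\textbf{Main obstacle.} The only delicate step is bookkeeping: the conventions for $R^h$ and the edge orientations must make $A_v^g$ a homomorphism rather than an anti-homomorphism. If it were an anti-homomorphism, the indices would come out as $\delta_{ad}$-type contractions with the wrong ordering. So I will explicitly verify $R^gR^k=R^{gk}$ with the definition $R^h\ket{g}=\ket{gh^{-1}}$ before running the Schur computation.
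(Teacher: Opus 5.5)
Your proposal is correct and follows the paper's proof essentially step by step: the same substitution $m=gk$, the same expansion of $\Lambda(g^{-1}m)_{cd}$ followed by Schur orthogonality, and the same reindexing $g\to g^{-1}$ with unitarity for the adjoint. Your additions are an explicit check that $R^gR^k=R^{gk}$, so that $g\mapsto A_v^g$ is a homomorphism, and rewriting $\Lambda(g^{-1})_{ce}$ as $\Lambda(g)^*_{ec}$ before invoking Schur orthogonality; the paper uses both facts without spelling them out.
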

This is proven in App.~\ref{app:proof2}.

In this notation, the standard vertex projector $A_v$ \eqref{eq:Av_Bp} is labeled by the trivial irrep $(\Gamma,a,b)=(1,1,1)$:
\be
    A_v^{(1,1,1)}=A_v\,.
\ee
\begin{lemma}[Vertex violations]\label{lemma:vertex-violations}
Let $v\in v(i)$, and suppose that $i$ is the only edge of $\star v$ carrying an error. Then, for every $\mu=(\Gamma,a,b)$ with $\Gamma\neq1$,
\be\label{eq:AvEAv}
A_vE_i^\mu A_v=0\,,
\ee
so every nontrivial vertex constraint in $v(i)$ is violated.
\end{lemma}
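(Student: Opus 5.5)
The plan is to compute $A_vE_i^\mu A_v$ directly. We conjugate the diagonal irrep operator $Z^\Gamma_{ab,i}$ by the vertex operators $A_v^g$ and then average over $g$. Schur orthogonality should then kill the trivial-irrep component.

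First we record how $A_v^g$ acts on the single edge $i$. From \eqref{eq:AvBp}, it acts on $i$ either as $L_i^g$ (edge pointing away from $v$) or as $R_i^g$ (edge pointing into $v$). At a boundary site $s$ the same holds for $A_s^k$ with $k\in K_s$. A direct computation on group basis states gives
\be
(L^g)^\dagger Z^\Gamma_{ab}L^g\ket{x}=\Gamma(gx)_{ab}\ket{x}=\sum_c\Gamma(g)_{ac}Z^\Gamma_{cb}\ket{x}\,.
\ee
Similarly,
\be
(R^g)^\dagger Z^\Gamma_{ab}R^g\ket{x}=\Gamma(xg^{-1})_{ab}\ket{x}=\sum_c Z^\Gamma_{ac}\Gamma(g^{-1})_{cb}\ket{x}\,.
\ee
So in either orientation $A_v^{g\dagger}E_i^\mu A_v^g$ is a linear combination of $Z^\Gamma_{\cdot\cdot,i}$ whose coefficients are matrix elements of $\Gamma(g)$ or $\Gamma(g^{-1})$. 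All the other edges of $\star v$ carry no error, so on them $A_v^g$ simply cancels against its adjoint.

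Next we use $A_v^g A_v=A_v=A_vA_v^g$, which follows from the group law $A_v^gA_v^h=A_v^{gh}$. This gives
\be
A_vE_i^\mu A_v=\frac{1}{|G|}\sum_{g}A_v(A_v^g)^\dagger E_i^\mu A_v^g A_v\,.
\ee
Inserting the conjugation formula, the $g$-sum becomes $\frac{1}{|G|}\sum_g\Gamma(g)_{ac}$ (or its $g^{-1}$ version). By Schur orthogonality against the trivial irrep, this vanishes for $\Gamma\neq1$. Hence $A_vE_i^\mu A_v=0$.

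At a constrained boundary vertex we run the same argument with $A_s^{K}$. The average is then $\frac{1}{|K|}\sum_{k\in K}\Gamma(k)$, which vanishes only if $\Gamma|_K$ contains no trivial component. This is the step I expect to be the main obstacle: the statement must be read with the appropriate nontriviality of $\Gamma$ relative to $K_s$. For bulk vertices the argument is complete as written.

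A cleaner alternative for the bulk is to expand $A_v=A_v^{(1,1,1)}$ and note that conjugation by $A_v^g$ transports $E_i^\mu$ between irrep sectors. Using the matrix-unit algebra of Proposition~\ref{eq:prop_Avmu}, one obtains $A_vE_i^\mu=\sum_c E_i^{(\Gamma,c,b)}A_v^{(\Gamma,\cdot,\cdot)}$. Orthogonality $A_v^{(1,1,1)}A_v^{(\Gamma,\cdot,\cdot)}=0$ for $\Gamma\neq1$ then gives the result, so either route works.
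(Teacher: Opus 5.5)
Your proof is correct, and its main route differs from the paper's.

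\textbf{The paper's route.} It commutes $E_i^\mu$ past $A_v$ using \eqref{eq:ZL_comm}--\eqref{eq:ZR_comm}. This gives the intertwining relation $E_i^{(\Gamma,a,b)}A_v=\frac{1}{d_\Gamma}\sum_c A_v^{(\Gamma,a,c)}E_i^{(\Gamma,c,b)}$ for an outgoing edge, with $A_v^{(\Gamma^*,b,c)}E_i^{(\Gamma,a,c)}$ in the sum for an incoming one. The matrix-unit algebra \eqref{eq:Avmu_comm} then annihilates it.

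\textbf{Your route.} You use $A_v^gA_v=A_vA_v^g=A_v$ to replace $E_i^\mu$ by its twirl $\frac{1}{|G|}\sum_g(A_v^g)^\dagger E_i^\mu A_v^g$. This equals $\sum_c\bigl(\frac{1}{|G|}\sum_g\Gamma(g)_{ac}\bigr)E_i^{(\Gamma,c,b)}$, or the analogous column expression, and is identically zero.

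\textbf{Comparison.} Your argument is more elementary: it needs only $\sum_g\Gamma(g)=0$ for nontrivial irreducible $\Gamma$, not Proposition~\ref{eq:prop_Avmu}. It also carries over verbatim to a truncated check $A_s^K$, where the average becomes the projector $\frac{1}{|K|}\sum_{k\in K}\Gamma(k)$ onto the $K$-invariant subspace. The paper's route yields more information. Its intertwining relation identifies the charge sector and the free matrix index that $E_i^\mu$ leaves at $v$. That is exactly the structure reused in \eqref{A-E-commutation} for the charge-transfer circuit, whereas the twirl only shows that the compression by $A_v$ vanishes.

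Your ``cleaner alternative'' is essentially the paper's argument, but the labels are loose. For an outgoing edge one finds $A_vE_i^{(\Gamma,a,b)}=\frac{1}{d_\Gamma}\sum_c E_i^{(\Gamma,c,b)}A_v^{(\Gamma^*,c,a)}$. The vanishing product is therefore $A_v^{(\Gamma^*,c,a)}A_v^{(1,1,1)}$ rather than the one you wrote, which is harmless since it also vanishes for $\Gamma\neq1$.

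Your boundary caveat is not a gap in your argument. The paper's proof likewise only covers the untruncated average (bulk vertices, or $K=G$). For a proper subgroup $K$, the paper restates the lemma with $G\to K$ and $\Gamma\to\Gamma_K$ in its boundary discussion. In the adapted basis used for \eqref{eq:boundary-edge-collapse}, your twirl vanishes exactly when the relevant index lies outside the $K$-invariant block: the row $a$ for an outgoing edge, the column $b$ for an incoming one. So the condition is index-wise, rather than a condition on $\operatorname{Res}^G_K\Gamma$ as a whole.
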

\begin{proof}
If $i$ is oriented away from $v$, then $A_v^g$ acts by $L_i^g$, and \eqref{eq:ZL_comm} gives
\be
E_i^{(\Gamma,a,b)}A_v=\frac{1}{d_\Gamma}\sum_{c=1}^{d_\Gamma}A_v^{(\Gamma,a,c)}E_i^{(\Gamma,c,b)}\,.
\ee
while if $i$ is oriented towards $v$, $A_v^g$ acts on $i$ by $R_i^g$, and \eqref{eq:ZR_comm} gives
\be
E_i^{(\Gamma,a,b)}A_v=\frac{1}{d_\Gamma}\sum_{c=1}^{d_\Gamma}A_v^{(\Gamma^{*},b,c)}E_i^{(\Gamma,a,c)}\,,
\ee
In either case, left multiplication by $A_v=A_v^{(1,1,1)}$ annihilates the right-hand side for $\Gamma\neq1$ by the matrix-unit algebra \eqref{eq:Avmu_comm}.
\end{proof}

\paragraf{Induced Charge Errors.} A multiplication error $L_i^h$ can also create charge, but only at its source endpoint when that endpoint has a non-trivial vertex check $A_v$. We will denote by $v_-,v_+$ the vertices at the endpoints of $i$, with $i$ oriented from $v_-$ to $v_+$:
\be
\begin{split}
     \begin{tikzpicture}[scale=1.5,xscale=1]
        \draw[step=0.5,black] (1,0.5) grid (2.5,1.5);
		\begin{scope}[shift={(1,0)}]
		\draw[white, thick] (1.5,0) -- (1.5,2);
		\draw[red, thick] (0.5,1) -- (1,1);
		\node[red] at (0.75,1.2) {$i$};
		\node[black] at (1.2,0.9) {${}_{v_+}$};
        \node[black] at (0.3,0.9) {${}_{v_-}$};
		\end{scope}
	\end{tikzpicture}
\end{split}
\ee
$A_{v_-}^g$ acts on $i$ by left multiplication and the error $L_i^h$ acting on a code state $\ket\psi$ obeys
\be\label{eq:Lh-conjugation}
A_{v_-}^gL_i^h\ket\psi=L_i^{ghg^{-1}}A_{v_-}^g\ket\psi=L_i^{ghg^{-1}}\ket\psi\,.
\ee
The centralizer $C_G(h):=\{g\in G:gh=hg\}$ therefore acts trivially at $v_-$, consistently with the anyon label $([h],1)$, in which $1$ is the trivial irrep of $C_G(h)$. As a representation of $G$, the charge at $v_-$ is the induced representation \cite{Komar:2017uuc}
\begin{align}\label{eq:induced-charge}
\Gamma_{[f]}&:=\operatorname{Ind}_{C_G(f)}^G1\cong\bigoplus_{\Gamma\in\Irr(G)}n_\Gamma^{[f]}\,\Gamma\,,\\
n_\Gamma^{[f]}&=\bigl\langle\operatorname{Res}_{C_G(f)}^G\Gamma,1\bigr\rangle_{C_G(f)}=\frac1{|C_G(f)|}\sum_{g\in C_G(f)}\chi_\Gamma(g)^*\,,\nn
\end{align}
with multiplicities from Frobenius reciprocity and $\chi_\Gamma(g)=\Tr[\Gamma(g)]$.\footnote{Since conjugate elements have conjugate centralizers, $\Gamma_{[f]}$ depends only on the conjugacy class.} 
For non-central $h$ it contains nontrivial irreps. At the other endpoint $v_+$, $A_{v_+}^g$ acts by $R_i^g$, which commutes with $L_i^h$, so no induced charge is created at $v_+$.

\subsection{Boundary Checks and Errors}

We now consider errors near a boundary labeled by a subgroup $K\leq G$, as in (\ref{eq:setup-02}). 
Since $A_v^KA_v^k=A_v^K$ for $k\in K$, the code-space conditions $A_v^KP=P$ and $B_s^KP=P$ at a boundary vertex $v$ and a boundary edge $s$ are equivalent to
\be\label{eq:boundary-stabilizers}
A_v^kP=P\quad(k\in K)\,,\qquad B_s^h P=0\quad(h\notin K)\,,
\ee
i.e.\ only the elements of $K$ act as stabilizers at the boundary. The boundary checks therefore detect errors only up to elements of $K$, and Lemmas~\ref{lemma:PBEPB} and \ref{lemma:vertex-violations} are modified as follows. The two limiting cases are $K=G$ (smooth), for which $B_s^G=\bbI$ and $A_v^G$ is the three-edge version of $A_v$, and $K=\{\id\}$ (rough), for which $A_v^{\{\id\}}=\bbI$ and $B_s^{\id}$ pins $s$ to $\ket\id$.

\paragraf{Multiplication Errors.} For $g\in K$ one has $hg\in K$ if and only if $h\in K$, so for a boundary edge $s$
\be\ba\label{eq:boundary-flux}
B_s^KL_s^hB_s^K&=L_s^hB_s^K\quad(h\in K)\,,\cr
B_s^KL_s^hB_s^K&=0\quad(h\notin K)\,.
\ea\ee
This is due to the fact that the fluxes $[k]$, $k\in K$, condense on the boundary \cite{Beigi:2010htr}. 
The boundary site $s$ also lies on a bulk four-edge plaquette $p$, which detects $L_s^k$ for every $k\neq\id$ by Lemma~\ref{lemma:PBEPB} if it is the only multiplication error acting on $\de p$. A {single} $L_s^k$ on the boundary labeled by subgroup $K$ is therefore still a detectable error.
Instead, if $L^k$ is the termination of an anyon ribbon, the bulk plaquette will have another multiplication error and trivial total flux, as follows from the anyon ribbon properties and  and will be detected in its bulk endpoint as we discuss later  in Sec. \ref{sec:RibEr}.

\paragraf{Irrep Operators on Boundary Edges.} 
On the code space, the variable of a boundary edge $s$ lies in $K$. By Frobenius reciprocity, we define
\be\label{eq:nGammaK}
n_\Gamma^K:=\bigl\langle\operatorname{Res}^G_K\Gamma,1\bigr\rangle_K=\bigl\langle\Gamma,\operatorname{Ind}_K^G1\bigr\rangle_G\,,
\ee 
and we choose the basis of $\Gamma$ such that its first $n_\Gamma^K$ vectors span the $K$-invariant subspace. For $k\in K$ the matrix elements $\Gamma(k)_{ab}$ with $a\le n_\Gamma^K$ or $b\le n_\Gamma^K$ equal $\delta_{a,b}$, since $\Gamma(k)$ is unitary and fixes the first $n_\Gamma^K$ basis vectors. The corresponding operators therefore act as scalars on the code space,
\be\label{eq:boundary-edge-collapse}
E_s^{(\Gamma,a,b)}P={\sqrt{d_\Gamma}\,\delta_{a,b}P}\qquad(a\le n_\Gamma^K\ \text{or}\ b\le n_\Gamma^K)\,,
\ee
A special instance is {the} rough boundary, in which $K=\{\id\}$, {$n_\Gamma^{\{\id\}}=d_\Gamma$} and every irrep label acts trivially. More generally, $E_s^{\mu}P$ only depends on $\operatorname{Res}^G_K\Gamma$. In a basis of $\Gamma$ adapted to the decomposition of $\operatorname{Res}^G_K\Gamma$ into irreps of $K$, each $E_s^{(\Gamma,a,b)}P$ is either zero or proportional to one of the $\sum_{{\Gamma_K}\in\Irr(K)}d_{\Gamma_K}^2=|K|$ operators
\be\ba\label{eq:boundary-K-irrep}
E_s^{({\Gamma_K},a,b)}P&:=\sqrt{d_{\Gamma_K}}\,Z^{\Gamma_K}_{ab,s}P\,,\cr
Z^{\Gamma_K}_{ab}\ket{g}&={\Gamma_K}(g)_{ab}\ket{g}\quad(g\in K)\,,
\ea\ee
with ${\Gamma_K}\in\Irr(K)$. The $|G|$ irrep labels on an edge on a boundary labeled by the subgroup $K$ therefore collapse to $|K|$ labels on the code space. By Lemma~\ref{lemma:vertex-violations} with $G$ replaced by $K$ and $\Gamma$ replaced by ${\Gamma_K}$, the nontrivial $K$-irreps are detected by the boundary checks $A_v^K$.

\subsection{Orthogonality on the Code Space}\label{sec:KL}

\paragraf{Check Region.} For each edge $i$, we define the check region as the support of the vertex and plaquette checks acting on $i$:
\be\label{eq:single-local-region}
c(i):=\bigcup_{v\in v(i)}\star v\ \cup\bigcup_{p\in p(i)}\partial p\,,
\ee
which is shown in \eqref{eq:check-region} for a bulk horizontal edge, with the edge $i$ in red, the two plaquettes of $p(i)$ shaded in blue and the other edges of the check region, the stars of $v(i)$ and the boundaries of $p(i)$, in thick blue (the check region also includes the red edge $i$): 
\be \label{eq:check-region}
\begin{split}
\begin{tikzpicture}[scale=1.5,xscale=1]
        \draw[step=0.5,black] (0,0.5) grid (1.5,1.5);
          \fill[blue!12] (0.5,0.5)-- (1,0.5) -- (1,1.5) --(0.5,1.5) -- (0.5,0.5) ;
		\draw[red, thick] (0.5,1) -- (1,1);
        \draw[blue, very thick] (0,1) -- (0.5,1);
		\draw[blue, very thick] (1,1) -- (1.5,1);
        \draw[blue, very thick] (0.5,0.5) -- (0.5,1.5);
	\draw[blue, very thick] (1,0.5) -- (1,1.5);
		\draw[blue, very thick] (0.5,1.5) -- (1,1.5);
		\draw[blue, very thick] (0.5,0.5) -- (1,0.5);
		\node[red] at (0.75,1.2) {${i}$};
	\end{tikzpicture}
\end{split}
\ee
It supports the local checks that act on the edge $i$ and the local code-space projector $P_{c(i)}$:
\be\ba\label{eq:local-checks}
A_{v(i)}&:=\prod_{v\in v(i)}A_v\,,\qquad B_{p(i)}:=\prod_{p\in p(i)}B_p\,,\cr
P_{c(i)}&:=A_{v(i)}B_{p(i)}\,.
\ea\ee
For multiple edges $\vec i=(i_1,\ldots,i_N)$ with, the local checks \eqref{eq:local-checks} for each error cluster are their products
\be\ba\label{eq:multi-local-checks}
v(\vec i)&:=\bigcup_{n=1}^Nv(i_n)\,,\qquad p(\vec i):=\bigcup_{n=1}^Np(i_n)\,,\cr
A_{v(\vec i)}&:=\prod_{v\in v(\vec i)}A_v\,,\quad B_{p(\vec i)}:=\prod_{p\in p(\vec i)}B_p\,,\cr
P_{c(\vec i)}&:=A_{v(\vec i)}B_{p(\vec i)}\,,\quad c(\vec i):=\bigcup_{v\in v(\vec i)}\star v\ \cup\bigcup_{p\in p(\vec i)}\partial p\,,
\ea\ee
with $c(\vec i)$ defined from $v(\vec i)$ and $p(\vec i)$ as in \eqref{eq:single-local-region}. 

\paragraf{Stabilizers and Redundancy.} Each edge carries $|G|$ multiplication labels and $\sum_\Gamma d_\Gamma^2=|G|$ irrep labels. Thus a star or plaquette supports $|G|^{2\times4}=|G|^8$
linearly independent operators. However, on the code space this count is reduced by a factor $|G|$, owing to the vertex or plaquette stabilizers, which form loops on the dual or direct lattice. The stabilizers are the $|G|$ operators $A_v^g$ for $g\in G$ and the linear combinations $D_p$ of the $|G|$ operators $B_p^g$ such that $D_pP=P$. We will now discuss gauge-fixing for both vertex and plaquette stabilizers to obtain an orthogonal error family on the code-space. This is depicted in Fig.~\ref{fig:gauge-fix}.

\begin{figure}
\centering
\begin{tikzpicture}[scale=1.45]
\begin{scope}[shift={(0,0)}]
  \draw[step=0.6,black!55] (0,0) grid (2.4,1.8);
  \draw[red,very thick] (0.6,0.6) -- (1.2,0.6);
  \draw[red,very thick] (1.2,0.6) -- (1.2,1.2);
  \draw[red,very thick] (1.2,0) -- (1.2,0.6);
  \draw[ForestGreen,very thick] (1.2,0.6) -- (1.8,0.6);
  \fill[blue(munsell)] (1.2,0.6) circle (0.05);
  \fill[white] (1.5,0.6) circle (0.07);
  \fill[ForestGreen] (1.5,0.6) +(-0.045,-0.045) rectangle +(0.045,0.045);
  \node[blue,font=\large] at (1.5,0.9) {$\times$};
  \node[blue,font=\large] at (1.5,0.3) {$\times$};
  \node[red] at (0.82,0.74) {${i_1}$};
  \node[red] at (1.06,1.0) {${i_2}$};
  \node[red] at (1.03,0.25) {${i_4}$};
  \node[ForestGreen] at (1.7,0.74) {${i_3}$};
  \node at (1.1,0.47) {${v}$};
  \node at (1.2,-0.3) {(a) $\star v$};
\end{scope}
\begin{scope}[shift={(2.9,0)}]
  \draw[step=0.6,black!55] (0,0) grid (2.4,1.8);
  \draw[red,very thick] (0,0.6) -- (0.6,0.6);
  \draw[red,very thick] (0.6,0.6) -- (0.6,1.2);
  \draw[red,very thick] (0.6,0) -- (0.6,0.6);
  \draw[red,very thick] (1.2,0.6) -- (1.2,1.2);
  \draw[red,very thick] (1.2,0) -- (1.2,0.6);
  \draw[ForestGreen,very thick] (0.6,0.6) -- (1.2,0.6);
  \draw[ForestGreen,very thick] (1.2,0.6) -- (1.8,0.6);
  \fill[blue(munsell)] (0.6,0.6) circle (0.05);
  \fill[blue(munsell)] (1.2,0.6) circle (0.05);
  \fill[white] (0.9,0.6) circle (0.07);
  \fill[ForestGreen] (0.9,0.6) +(-0.045,-0.045) rectangle +(0.045,0.045);
  \fill[white] (1.5,0.6) circle (0.07);
  \fill[ForestGreen] (1.5,0.6) +(-0.045,-0.045) rectangle +(0.045,0.045);
  \node[blue,font=\large] at (1.5,0.9) {$\times$};
  \node[blue,font=\large] at (1.5,0.3) {$\times$};
  \node at (0.5,0.47) {${v}$};
  \node at (1.32,0.47) {${v'}$};
  \node at (1.2,-0.3) {(b) $\star v\cup\star v'$};
\end{scope}
\begin{scope}[shift={(0,-2.3)}]
  \draw[step=0.6,black!55] (0,0) grid (2.4,1.8);
  \draw[red,very thick] (1.2,1.2) -- (1.8,1.2);
  \draw[ForestGreen,very thick] (1.8,0.6) -- (1.8,1.2);
  \draw[red,very thick] (1.2,0.6) -- (1.8,0.6);
  \draw[red,very thick] (1.2,0.6) -- (1.2,1.2);
  \fill[white] (1.8,0.9) circle (0.07);
  \fill[ForestGreen] (1.8,0.9) +(-0.045,-0.045) rectangle +(0.045,0.045);
  \node[blue,font=\large] at (1.8,1.2) {$\times$};
  \node[blue,font=\large] at (1.8,0.6) {$\times$};
  \node[red] at (1.08,0.9) {${g_1}$};
  \node[red] at (1.5,1.32) {${g_2}$};
  \node[ForestGreen] at (2,0.9) {${g_3}$};
  \node[red] at (1.5,0.48) {${g_4}$};
  \node at (1.5,0.9) {${p}$};
  \node at (1.2,-0.3) {(c) $\partial p$};
\end{scope}
\begin{scope}[shift={(2.9,-2.3)}]
  \draw[step=0.6,black!55] (0,0) grid (2.4,1.8);
  \draw[red,very thick] (0.6,1.2) -- (1.2,1.2);
  \draw[red,very thick] (1.2,1.2) -- (1.8,1.2);
  \draw[ForestGreen,very thick] (1.8,0.6) -- (1.8,1.2);
  \draw[red,very thick] (0.6,0.6) -- (1.2,0.6);
  \draw[red,very thick] (1.2,0.6) -- (1.8,0.6);
  \draw[red,very thick] (0.6,0.6) -- (0.6,1.2);
  \draw[ForestGreen,very thick] (1.2,0.6) -- (1.2,1.2);
  \fill[white] (1.2,0.9) circle (0.07);
  \fill[ForestGreen] (1.2,0.9) +(-0.045,-0.045) rectangle +(0.045,0.045);
  \fill[white] (1.8,0.9) circle (0.07);
  \fill[ForestGreen] (1.8,0.9) +(-0.045,-0.045) rectangle +(0.045,0.045);
  \node[blue,font=\large] at (1.8,1.2) {$\times$};
  \node[blue,font=\large] at (1.8,0.6) {$\times$};
  \node at (0.9,0.9) {${p}$};
  \node at (1.5,0.9) {${p'}$};
  \node at (1.2,-0.3) {(d) $\partial p\cup\partial p'$};
\end{scope}
\end{tikzpicture}
\caption{Errors are shown in red, gauge-fixed edges are shown in green and the single-error checks are shown in blue. We gauge-fix the right edge of each vertex star to have $h_3=\id$ and the right edge of each plaquette to have $\mu_3=1$.}
\label{fig:gauge-fix}
\end{figure}

\paragraf{Vertex Gauge Fixing.} On each $\star v$, the stabilizers are
\be\ba\label{eq:Avg-factorized}
A_v^g&=R_{i_1}^gL_{i_2}^gL_{i_3}^gR_{i_4}^g\,,\quad A_v^gP&=P\,,
\ea\ee
and a gauge-fixed error family is given by
\be\label{eq:vertex-fix}
\cE_{\star v}:=\{E_{\star v}^{\vec h,\vec\mu}:h_n=\id\}\,,\quad
|\cE_{\star v}|=|G|^7\,,
\ee
where we fix an edge $i_n$ that leaves vertex $v$.
Fixing the multiplication error label $h_n=\id$ makes every retained error diagonal on $i_n$. Since $L_{i_n}^g$ has no diagonal entries unless $g=\id$, the gauge-fixed family \eqref{eq:vertex-fix} cannot contain any non-trivial vertex stabilizers \eqref{eq:Avg-factorized}.  Conversely, if the multiplication error on edge $i_n$ carries the label $h_n$, then multiplying the error on the right by the stabilizer $A_v^{h_n^{-1}}$ sets this label to $\id$, since $L_{i_n}^{h_n}E_{i_n}^{\mu_n}L_{i_n}^{h_n^{-1}}$ is diagonal by \eqref{eq:ZL_comm}. {On the other edges of $\star v$, $A_v^{h_n^{-1}}$ acts by $L^{h_n^{-1}}$ or $R^{h_n^{-1}}$, and since $R^g=\sum_{k\in G}L^k\Pi^g_k$ with the diagonal projectors $\Pi^g_k$ onto $\{x\in G:xg^{-1}=kx\}$, the result is again a linear combination of the basis errors \eqref{eq:multi-error-basis}, now with $h_n=\id$.} This shows that the original error differs by the gauge-fixed one by a stabilizer and is therefore equivalent to it when acting on the code-space. For neighboring vertices, we can gauge-fix e.g. from right to left, see Fig.~\ref{fig:gauge-fix}(b).

\paragraf{Plaquette Gauge Fixing} For each plaquette, we fix the irrep label on an edge $i_n$ and define the error family:
\be\label{eq:plaquette-fix}
\cE_{\partial p}:=\{E_{\partial p}^{\vec h,\vec\mu}:\mu_{n}=1\}\,,\quad
|\cE_{\partial p}|=|G|^7\,.
\ee
The stabilizer condition $D_pP=P$ imposes the flatness condition $g_1g_2g_3^{-1}g_4^{-1}=\id$, so, using the property of representations, any irrep error on one edge can be expressed as linear combinations of errors on the other three edges, e.g. for $\mu=(\Gamma,a,b)$, and $i_n=i_3$
\be\label{eq:plaquette-move}
E_{i_3}^{\mu}P=\sqrt{d_\Gamma}\sum_{c,d=1}^{d_\Gamma}Z^{\Gamma^*}_{ca,i_4}Z^\Gamma_{cd,i_1}Z^\Gamma_{db,i_2}P\,.
\ee
For each pair of neighboring plaquettes, we will use their two flatness constraints to gauge-fix two edges e.g. we gauge fix the right edge of each plaquette, moving from right to left as in Fig.~\ref{fig:gauge-fix}(d).

{\paragraf{Completed Supports.} An error-free edge carries the trivial labels $h=\id$ and $\mu=1$, so we include it in the support as an effectively gauge-fixed edge. E.g. for errors on the six edges of $\star v_-\cup\star v_+$ other than the shared edge $i_{v_-v_+}$, see Fig.~\ref{fig:gauge-fix-protocol}(a), these six edges form a dual loop without a complete star, and $A_{v_-}^hA_{v_+}^h$ would make the errors non-orthogonal on the code space. Including $i_{v_-v_+}$ as the fixed edge of $\star v_-$ gives the two-star support of Fig.~\ref{fig:gauge-fix}(b), and similarly for an error-free edge shared by two plaquettes, see Fig.~\ref{fig:gauge-fix}(d). We always take $\vec i$ to be completed in this way, so that gauge fixing removes every direct and dual loop in $\vec i$.}

\paragraf{Gauge-fixed Error Family.} On a given support of edges $\vec i=(i_1,\cdots,i_N)$, we will implement the vertex gauge-fixing \eqref{eq:vertex-fix} and denote the fixed edges by $\vec i_A$ and the plaquette gauge-fixing \eqref{eq:plaquette-fix} denoting the fixed edges by $\vec i_B$. The resulting gauge-fixed error basis is then:
\be\ba\label{eq:gauge-fixed-family}
\cE_{\vec i}:=\bigl\{E_{\vec i}^{\vec h,\vec\mu}:\ &h_n=\id\quad(i_n\in\vec i_A)\,,\cr
&\mu_n=1\quad(i_n\in\vec i_B)\bigr\}\,,
\ea\ee
with
\be\label{eq:gauge-fixed-count}
|\cE_{\vec i}|=|G|^{2N-|\vec i_A|-|\vec i_B|}.
\ee
Indeed, each complete star (a dual lattice cycle), removes one factor $|G|$ of multiplication labels, and each complete plaquette (a direct cycle) one factor $|G|$ of irrep labels. This provides a complete error basis, since any multiplication error on an edge in $\vec i_A$ or irrep label on $\vec i_B$ can be written using the stabilizer properties, as we showed above.

\begin{figure*}
\centering
\begin{tikzpicture}[scale=1.45]
\begin{scope}[shift={(0,0)}]
  \draw[step=0.6,black!55] (0,0) grid (2.4,1.8);
  \draw[red,very thick] (0,0.6) -- (0.6,0.6);
  \draw[red,very thick] (0.6,0.6) -- (0.6,1.2);
  \draw[red,very thick] (0.6,0) -- (0.6,0.6);
  \draw[red,very thick] (1.2,0.6) -- (1.2,1.2);
  \draw[red,very thick] (1.2,0) -- (1.2,0.6);
  \draw[red,very thick] (0.6,0.6) -- (1.2,0.6);
  \draw[red,very thick] (1.2,0.6) -- (1.8,0.6);
  \fill[blue(munsell)] (0.6,0.6) circle (0.05);
  \fill[blue(munsell)] (1.2,0.6) circle (0.05);
  \node at (0.47,0.47) {${}_{v_-}$};
  \node at (1.33,0.47) {${}_{v_+}$};
  \node at (1.2,-0.3) {{(a)} seven errors};
\end{scope}
\begin{scope}[shift={(2.9,0)}]
  \draw[step=0.6,black!55] (0,0) grid (2.4,1.8);
  \draw[red,very thick] (0,0.6) -- (0.6,0.6);
  \draw[red,very thick] (0.6,0.6) -- (0.6,1.2);
  \draw[red,very thick] (0.6,0) -- (0.6,0.6);
  \draw[red,very thick] (1.2,0.6) -- (1.2,1.2);
  \draw[red,very thick] (1.2,0) -- (1.2,0.6);
  \draw[red,very thick] (0.6,0.6) -- (1.2,0.6);
  \draw[ForestGreen,very thick] (1.2,0.6) -- (1.8,0.6);
  \fill[blue(munsell)] (0.6,0.6) circle (0.05);
  \fill[blue(munsell)] (1.2,0.6) circle (0.05);
  \fill[white] (1.5,0.6) circle (0.07);
  \fill[ForestGreen] (1.5,0.6) +(-0.045,-0.045) rectangle +(0.045,0.045);
  \node[draw,circle,inner sep=0.6pt,font=\tiny] at (1.55,0.36) {1};
  \draw[->,thick,blue(munsell)] (1.5,0.72) to[bend right=50] (0.9,0.72);
  \node at (0.47,0.47) {${}_{v_-}$};
  \node at (1.33,0.47) {${}_{v_+}$};
  \node at (1.2,-0.3) {{(b)} gauge-fix at $v_+$};
\end{scope}
\begin{scope}[shift={(5.8,0)}]
  \draw[step=0.6,black!55] (0,0) grid (2.4,1.8);
  \draw[red,very thick] (0,0.6) -- (0.6,0.6);
  \draw[red,very thick] (0.6,0.6) -- (0.6,1.2);
  \draw[red,very thick] (0.6,0) -- (0.6,0.6);
  \draw[red,very thick] (1.2,0.6) -- (1.2,1.2);
  \draw[red,very thick] (1.2,0) -- (1.2,0.6);
  \draw[ForestGreen,very thick] (0.6,0.6) -- (1.2,0.6);
  \draw[ForestGreen,very thick] (1.2,0.6) -- (1.8,0.6);
  \fill[blue(munsell)] (0.6,0.6) circle (0.05);
  \fill[blue(munsell)] (1.2,0.6) circle (0.05);
  \fill[white] (0.9,0.6) circle (0.07);
  \fill[ForestGreen] (0.9,0.6) +(-0.045,-0.045) rectangle +(0.045,0.045);
  \fill[white] (1.5,0.6) circle (0.07);
  \fill[ForestGreen] (1.5,0.6) +(-0.045,-0.045) rectangle +(0.045,0.045);
  \node[draw,circle,inner sep=0.6pt,font=\tiny] at (1.55,0.36) {1};
  \node[draw,circle,inner sep=0.6pt,font=\tiny] at (0.9,0.36) {2};
  \draw[->,thick,blue(munsell)] (0.9,0.72) to[bend right=50] (0.3,0.72);
  \node at (0.47,0.47) {${}_{v_-}$};
  \node at (1.33,0.47) {${}_{v_+}$};
  \node at (1.2,-0.3) {{(c)} gauge-fix at $v_-$};
\end{scope}
\begin{scope}[shift={(0,-2.3)}]
  \draw[step=0.6,black!55] (0,0) grid (2.4,1.8);
  \draw[red,very thick] (0.6,0.6) -- (0.6,1.2);
  \draw[red,very thick] (0.6,1.2) -- (1.2,1.2);
  \draw[red,very thick] (0.6,0.6) -- (1.2,0.6);
  \draw[red,very thick] (1.2,1.2) -- (1.8,1.2);
  \draw[red,very thick] (1.2,0.6) -- (1.8,0.6);
  \draw[red,very thick] (1.2,0.6) -- (1.2,1.2);
  \draw[red,very thick] (1.8,0.6) -- (1.8,1.2);
  \node at (0.9,0.75) {${p}$};
  \node at (1.62,0.75) {${p'}$};
  \node at (1.2,-0.3) {{(d)} seven errors};
\end{scope}
\begin{scope}[shift={(2.9,-2.3)}]
  \draw[step=0.6,black!55] (0,0) grid (2.4,1.8);
  \draw[red,very thick] (0.6,0.6) -- (0.6,1.2);
  \draw[red,very thick] (0.6,1.2) -- (1.2,1.2);
  \draw[red,very thick] (0.6,0.6) -- (1.2,0.6);
  \draw[red,very thick] (1.2,1.2) -- (1.8,1.2);
  \draw[red,very thick] (1.2,0.6) -- (1.8,0.6);
  \draw[red,very thick] (1.2,0.6) -- (1.2,1.2);
  \draw[ForestGreen,very thick] (1.8,0.6) -- (1.8,1.2);
  \fill[white] (1.8,0.9) circle (0.07);
  \fill[ForestGreen] (1.8,0.9) +(-0.045,-0.045) rectangle +(0.045,0.045);
  \node[draw,circle,inner sep=0.6pt,font=\tiny] at (1.98,0.9) {1};
  \draw[->,thick,blue(munsell)] (1.72,1.02) to[bend right=50] (1.28,1.02);
  \node at (0.9,0.75) {${p}$};
  \node at (1.62,0.75) {${p'}$};
  \node at (1.2,-0.3) {{(e)} gauge-fix at $p'$};
\end{scope}
\begin{scope}[shift={(5.8,-2.3)}]
  \draw[step=0.6,black!55] (0,0) grid (2.4,1.8);
  \draw[red,very thick] (0.6,0.6) -- (0.6,1.2);
  \draw[red,very thick] (0.6,1.2) -- (1.2,1.2);
  \draw[red,very thick] (0.6,0.6) -- (1.2,0.6);
  \draw[red,very thick] (1.2,1.2) -- (1.8,1.2);
  \draw[red,very thick] (1.2,0.6) -- (1.8,0.6);
  \draw[ForestGreen,very thick] (1.2,0.6) -- (1.2,1.2);
  \draw[ForestGreen,very thick] (1.8,0.6) -- (1.8,1.2);
  \fill[white] (1.2,0.9) circle (0.07);
  \fill[ForestGreen] (1.2,0.9) +(-0.045,-0.045) rectangle +(0.045,0.045);
  \fill[white] (1.8,0.9) circle (0.07);
  \fill[ForestGreen] (1.8,0.9) +(-0.045,-0.045) rectangle +(0.045,0.045);
  \node[draw,circle,inner sep=0.6pt,font=\tiny] at (1.98,0.9) {1};
  \node[draw,circle,inner sep=0.6pt,font=\tiny] at (1.36,0.72) {2};
  \draw[->,thick,blue(munsell)] (1.12,1.02) to[bend right=50] (0.68,1.02);
  \node at (0.9,0.75) {${p}$};
  \node at (1.62,0.75) {${p'}$};
  \node at (1.2,-0.3) {{(f)} gauge-fix at $p$};
\end{scope}
\end{tikzpicture}
\caption{{The gauge-fixing protocol, with the conventions of Fig.~\ref{fig:gauge-fix}. Errors are shown in red, gauge-fixed edges are green and the numbers give the order in which their gauge-fixed labels are set, proceeding along each row from right to left. Top row: (a) errors on the seven edges of $\star v_-\cup\star v_+$. (b) The stabilizer $A_{v_+}^{h^{-1}}$ sets the multiplication label of the right edge of $v_+$ to $\id$, and moves an error onto $i_{v_-v_+}$. (c) $A_{v_-}^{h^{-1}}$ then sets the multiplication label of $i_{v_-v_+}$ to $\id$, moving the error onto edges of $\star v_-$ that already carry errors, which gives the gauge-fixed two-star support of Fig.~\ref{fig:gauge-fix}(b). Bottom row: (d) errors on the seven edges of $\partial p\cup\partial p'$. (e) The flatness of $p'$, \eqref{eq:plaquette-move}, sets the irrep label of the right edge of $p'$ to $1$, and moves an irrep error onto the shared edge. (f) The flatness of $p$ then sets the irrep label of the shared edge to $1$, moving the error onto edges of $\partial p$, which gives the gauge-fixed two-plaquette support of Fig.~\ref{fig:gauge-fix}(d).}}
\label{fig:gauge-fix-protocol}
\end{figure*}

\begin{lemma}[Completeness of the Gauge-Fixed Error family] Any error can be brought into gauge-fixed form, i.e. it is equivalent up to stabilizers to a linear combination of errors in \eqref{eq:gauge-fixed-family}.
\end{lemma}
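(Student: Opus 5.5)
Since the single-edge errors $E_i^{h,\mu}$ of Prop.~\ref{prop:Emui_basis_single_site} span all operators on each edge, by \eqref{eq:multi-error-basis} every operator supported on the completed support $\vec i$ is a linear combination of the $E_{\vec i}^{\vec h,\vec\mu}$. It therefore suffices to show that each basis error $E_{\vec i}^{\vec h,\vec\mu}$, acting on $P$, equals a linear combination of elements of $\cE_{\vec i}$ acting on $P$. We do this in two sweeps: first the vertex (multiplication) labels on $\vec i_A$, then the plaquette (irrep) labels on $\vec i_B$. Each sweep proceeds along the order of Fig.~\ref{fig:gauge-fix-protocol}, i.e.\ from right to left, so that fixing one edge only pushes errors onto edges not yet processed.

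\textbf{Vertex sweep.} Order the stars of $v(\vec i)$ as in Fig.~\ref{fig:gauge-fix-protocol}(a)--(c), so that the fixed edge of each star is shared only with stars already processed, or with none. At a star $v$ whose fixed edge carries label $h_n$, right-multiply by $A_v^{h_n^{-1}}$, which is allowed since $A_v^{g}P=P$ by \eqref{eq:Avg-factorized}. On the fixed edge this gives $L^{h_n}E^{\mu_n}L^{h_n^{-1}}$, which is diagonal by \eqref{eq:ZL_comm}, hence a combination of irrep operators with $h_n=\id$. On the other edges of $\star v$ we use $R^g=\sum_k L^k\Pi^g_k$, and on all edges we re-expand diagonal operators in the $E^\mu$. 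This yields a linear combination of basis errors on the same support, with $h_n=\id$ on the fixed edge. The only edges whose labels change are edges of $\star v$, which have not yet been fixed in the chosen order. Hence induction over the stars produces $h_n=\id$ on all of $\vec i_A$. The completion of supports guarantees that the edges acted on remain inside $\vec i$.

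\textbf{Plaquette sweep.} Next treat the plaquettes of $p(\vec i)$ in the order of Fig.~\ref{fig:gauge-fix-protocol}(d)--(f). For a plaquette $p$ with a nontrivial irrep label on its fixed edge, use \eqref{eq:plaquette-move} (or its analogue for the other edge positions, obtained by solving the flatness relation for that edge) to rewrite the irrep factor as a product of irrep operators on the other three edges of $\partial p$, acting on $P$. Because these are diagonal, they commute with the operators already on those edges except for the multiplication parts. The key point is that $Z^\Gamma_{ab}L^h=\sum_c \Gamma(h^{-1})\cdots L^hZ^\Gamma$ by \eqref{eq:ZL_comm}, so they can be moved to the right of the $L$'s. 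Products of irrep matrix elements on one edge then re-expand in the $E^\mu$ via the Clebsch--Gordan decomposition of tensor products of irreps. None of this changes any multiplication label, so the vertex gauge fixing survives. The irrep label of the fixed edge becomes trivial, and only edges of $p$ not yet processed are altered.

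\textbf{Main obstacle.} The hardest step is showing that the two sweeps, and the successive steps within each sweep, do not undo each other. Three points need care: the fixed edges must be processable in an acyclic order, the plaquette relation must be applied with the stabilizer on the correct side, and the rewriting must not reintroduce off-diagonal labels. The ordering is ensured because the completed support makes the fixed edges form a spanning-tree-like choice: one edge per star or plaquette, swept directionally, with no cycle among them. The side issue is handled by writing $E\,P = E\,D_p P$ and commuting the diagonal relation through. For boundary stars and sites, the same argument applies with $G$ replaced by $K$, using \eqref{eq:boundary-stabilizers} and \eqref{eq:boundary-edge-collapse}.
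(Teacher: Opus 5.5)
Your proposal is correct and follows essentially the same route as the paper: a right-to-left vertex sweep that right-multiplies by $A_v^{h_n^{-1}}$ (using $R^g=\sum_k L^k\Pi^g_k$ and re-expanding diagonals), followed by a right-to-left plaquette sweep via the flatness relation \eqref{eq:plaquette-move}, which leaves all multiplication labels untouched. The only superfluous step is commuting the new irrep factors past the $L$'s, since in $L_{\vec i}^{\vec h}E_{\vec i}^{\vec\mu}P$ the diagonal factors already sit adjacent to $P$ (incidentally, \eqref{eq:ZL_comm} carries $\Gamma(h)$, not $\Gamma(h^{-1})$).
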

\begin{proof}
{We will re-define a general error by multiplying it with stabilizers that act trivially on the code-space and bring it to gauge-fixed form proceeding along each row of the lattice from right to left, see Fig.~\ref{fig:gauge-fix-protocol}.}

{\paragraf{Vertex step.} At the right-most complete star $\star v_+$ of a row (which received the edge $i_n$ with group element $h_n$), we multiply the error by $A_{v_+}^{h_n^{-1}}$, which sets $h_n=\id$ as shown below \eqref{eq:vertex-fix}. This operator also acts on the edge $i_{v_-v_+}$ entering $v_+$, which can now carry an error, see Fig.~\ref{fig:gauge-fix-protocol}(b). Since $i_{v_-v_+}$ is the fixed edge of $\star v_-$, we repeat the gauge-fixing at $v_-$, see Fig.~\ref{fig:gauge-fix-protocol}(c) and proceed along the row.}

{\paragraf{Plaquette step.} At the right-most complete plaquette $p'$ of a row, \eqref{eq:plaquette-move} moves the irrep error of its right edge onto the other three edges of $\partial p'$, where we expand again in the irrep basis \eqref{eq:Emui}. This sets $\mu_n=1$, leaves all multiplication labels unchanged, and can put an irrep error on the shared edge with the next plaquette $p$ to the left, see Fig.~\ref{fig:gauge-fix-protocol}(e). We then repeat this at $p$, see Fig.~\ref{fig:gauge-fix-protocol}(f), which does not change the label fixed at $p'$.}

{\paragraf{Conclusion.} The resulting error operator now has $h_n=\id$ on $\vec i_A$ and $\mu_n=1$ on $\vec i_B$, so $E_{\vec i}^{\vec h,\vec\mu}P$ is a linear combination of the $EP$ with $E\in\cE_{\vec i}$. Since it was constructed from the original error by acting with code stabilizers, it is equivalent to it when acting on the code-space.} On a rough boundary edge, \eqref{eq:boundary-edge-collapse} already gives distinct $G$-irrep labels with identical or zero action. For a proper subgroup $K$, the independent diagonal operators are indexed by $(\Gamma_K,a,b)$, giving $|K|$ labels, as in \eqref{eq:boundary-K-irrep}. 
\end{proof}

We now proceed to prove that the gauge-fixed family \eqref{eq:gauge-fixed-family} obeys the Knill-Laflamme QEC conditions.
\begin{theorem}[Gauge-fixed Knill-Laflamme]\label{thm:KLdegenerate}
For any completed support $\vec i$ that does not host any logical operators, 
any pair $E_{\vec i}^{\vec h,\vec\mu},E_{\vec i}^{\vec k,\vec\nu}\in\cE_{\vec i}$ satisfies
\be\label{eq:KL-degenerate}
P_{c(\vec i)}(E_{\vec i}^{\vec h,\vec\mu})^\dagger E_{\vec i}^{\vec k,\vec\nu}P_{c(\vec i)}
=\delta_{\vec h,\vec k}\delta_{\vec\mu,\vec\nu}P_{c(\vec i)}\,,
\ee
where $P_{c(\vec i)}$ is the local code-space projector \eqref{eq:multi-local-checks} (and the same holds for the global projector $P$). Hence the gauge-fixed errors are orthogonal on the code-space, and their Knill-Laflamme matrix is the identity.
\end{theorem}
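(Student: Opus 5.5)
We reduce the product of two gauge-fixed errors to a single basis operator, then show that every nontrivial one has zero expectation in the local code space.

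\textbf{Reduction.} We first rewrite $(E_{\vec i}^{\vec h,\vec\mu})^\dagger E_{\vec i}^{\vec k,\vec\nu}=(E^{\vec\mu})^\dagger L^{\vec h^{-1}\vec k}E^{\vec\nu}$ edge by edge. Commuting the diagonal factor $(E^{\vec\mu})^\dagger$ through $L^{h_n^{-1}k_n}$ shifts its argument, $Z^\Gamma(g)\mapsto Z^\Gamma(h_n^{-1}k_ng)$. The result is therefore $L^{\vec h^{-1}\vec k}$ times a diagonal operator $D$ on $\vec i$.

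\textbf{Multiplication labels.} Suppose $\vec h\neq\vec k$, and let $n$ be an edge with $h_n\neq k_n$. Sandwiching by $P_{c(\vec i)}$ includes $B_{p(\vec i)}$, and we argue the flux sector of some plaquette changes. Since the support is completed and gauge-fixed, every complete star has its fixed edge carrying the trivial label $\id$. We then pass to the dual graph: edges with nontrivial $h_n^{-1}k_n$ form a subgraph of the dual lattice (plaquettes as nodes) that contains no closed dual loop around a star, because of the fixed edge. Hence there is a plaquette $p\in p(\vec i)$ on which exactly one edge carries a nontrivial multiplication. Lemma~\ref{lemma:PBEPB} (with $D$ preserving flux sectors) then gives $B_pL D B_p=0$.

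This leaf argument is the step I expect to be the main obstacle. The total flux of a plaquette with several nontrivial edges can be trivial, so the leaf existence is essential. It uses the hypothesis that $\vec i$ hosts no logical operator: otherwise a dual path between boundaries could have no leaf inside $p(\vec i)$, its ends lying on smooth boundaries where $B_s^G=\bbI$. At rough/$K$ boundaries we use \eqref{eq:boundary-flux}.

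\textbf{Irrep labels.} For $\vec h=\vec k$ the operator is diagonal, $\prod_n d_{\Gamma_n}^{1/2}d_{\Lambda_n}^{1/2}\,Z^{\Gamma_n*}_{a_nb_n}Z^{\Lambda_n}_{c_nd_n}$. We decompose each product $\Gamma^*\otimes\Lambda$ into irreps via Clebsch--Gordan, giving a sum of $Z^{\Sigma}$ operators. Dually to the previous step, we then evaluate $A_{v(\vec i)}(\cdot)A_{v(\vec i)}$. Since irrep labels on the fixed plaquette edges are trivial, the irrep-carrying edges contain no closed direct loop, so they form a forest on the direct lattice. For a leaf vertex $v$ we apply Lemma~\ref{lemma:vertex-violations}: using \eqref{eq:Avmu_comm}, only the trivial component $\Sigma=1$ survives at $v$, and we peel off the leaf and iterate.

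What survives is the trivial-irrep projection of the whole product. By Schur orthogonality, exactly as in Proposition~\ref{prop:Emui_basis_single_site}, this equals $\delta_{\Gamma\Lambda}\delta_{ac}\delta_{bd}$ per edge with the $\sqrt{d_\Gamma}$ normalization. Concretely, we would verify that $A_v Z^{\Gamma*}_{ab}Z^\Gamma_{cd}A_v=\tfrac{1}{d_\Gamma}\delta_{ac}\delta_{bd}A_v$ on an isolated leaf edge, so that averaging over $A_v^g$ implements the group integral. Iterating along the forest gives $\delta_{\vec\mu,\vec\nu}P_{c(\vec i)}$. For $K$-boundary edges we use the collapsed labels \eqref{eq:boundary-K-irrep} with $K$ in place of $G$.

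\textbf{Global projector.} Since $P=P_{c(\vec i)}\cdot(\text{commuting checks supported away from }\vec i)$, and those checks commute with the errors, the same identity holds with $P$ in place of $P_{c(\vec i)}$.
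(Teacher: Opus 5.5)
Your proposal is correct and follows essentially the same route as the paper: a leaf plaquette of the dual forest of edges with nontrivial multiplication label kills every term with $\vec h\neq\vec k$ via Lemma~\ref{lemma:PBEPB}. A leaf vertex of the direct forest of irrep-carrying edges kills every nontrivial irrep component via Lemma~\ref{lemma:vertex-violations}, and the surviving identity coefficient is $\delta_{\vec h,\vec k}\delta_{\vec\mu,\vec\nu}$ by Schur orthogonality.

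The differences are cosmetic. You split into the cases $\vec h\neq\vec k$ and $\vec h=\vec k$ and peel leaves one at a time, whereas the paper expands the product fully in the basis \eqref{eq:multi-error-basis} and reads the identity coefficient off \eqref{eq:multi-hs}. One point to tighten: the leaf step needs $\vec i\setminus\vec i_A$ to contain no dual loop at all, including loops around several adjacent stars, not only loops around a single star. This is what the completion of the support, together with the right-to-left choice of fixed edges, supplies.
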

\begin{proof}
On each edge, the product $(E_{i_n}^{h_n,\mu_n})^\dagger E_{i_n}^{k_n,\nu_n}$ can be expanded in the single-edge basis \eqref{eq:Emui} and denote the labels by $\vec{g},\vec{\lambda}$. 
First, we consider a term with nontrivial multiplication labels, which are all supported on $\vec i\setminus\vec i_A$. Every dual loop in $\vec i$ is a union of complete stars of $\vec i$, each of which has its fixed edge in $\vec i_A$, so $\vec i\setminus\vec i_A$ contains no dual loop. Since $\vec i$ hosts no logical operator, $\vec i\setminus\vec i_A$ also contains no dual path between two smooth boundary plaquettes, whose end checks would be trivial. Hence there is a plaquette $p\in p(\vec i)$ that contains exactly one of these errored edges. For the full support these are the plaquettes marked by blue crosses in Fig.~\ref{fig:gauge-fix}(a),(b), which contain the last gauge-fixed edge. Lemma~\ref{lemma:PBEPB} then gives $B_pE_{\vec i}^{\vec g,\vec\lambda}B_p=0$, so the term vanishes between the projectors $P_{c(\vec i)}$. The remaining errors are diagonal: those with a nontrivial irrep label are supported on $\vec i\setminus\vec i_B$. Every direct loop in $\vec i$ is a union of complete plaquettes, each with its fixed edge in $\vec i_B$, and $\vec i$ contains no logical and no direct path between two rough boundary vertices. Hence there is a vertex $v\in v(\vec i)$ whose star contains exactly one of the errored edges, with all other edges of $\star v$ without error: these are the vertices marked by blue crosses in Fig.~\ref{fig:gauge-fix}(c),(d). Lemma~\ref{lemma:vertex-violations} then gives $A_vE_{\vec i}^{\vec g,\vec\lambda}A_v=0$. Only the identity term remains, with coefficient $\delta_{\vec h,\vec k}\delta_{\vec\mu,\vec\nu}$ by \eqref{eq:multi-hs}. Since $P_{c(\vec i)}P=P$, the global Knill-Laflamme relation also follows.
\end{proof}

\paragraf{Projector onto Orthogonal Error Sectors.} 
{We can now define the projectors onto the error sectors of the gauge-fixed errors $E_{\vec i}^{\vec h,\vec\mu}\in\cE_{\vec i}$ of \eqref{eq:gauge-fixed-family},
\be\label{eq:error-sector-projector}
P_{c(\vec i)}^{\vec h,\vec\mu}:=E_{\vec i}^{\vec h,\vec\mu}P_{c(\vec i)}(E_{\vec i}^{\vec h,\vec\mu})^\dagger\,.
\ee
These operators are Hermitian and, by Thm.~\ref{thm:KLdegenerate}, they obey 
\be
P_{c(\vec i)}^{\vec h,\vec\mu}P_{c(\vec i)}^{\vec k,\vec\nu}=\delta_{\vec h,\vec k}\delta_{\vec\mu,\vec\nu}P_{c(\vec i)}^{\vec h,\vec\mu}\,,
\ee
so they are mutually orthogonal projectors. The range of $P_{c(\vec i)}^{\vec h,\vec\mu}$ is the image of the local code space under the error $E_{\vec i}^{\vec h,\vec\mu}$. 
We define the \textbf{neutral-cluster projector} as the sum over all the labels of $\cE_{\vec i}$:
\be\label{eq:neutral-cluster-projector}
P_{c(\vec i)}^{\text{neu}}:=\sum_{\vec h,\vec\mu}P_{c(\vec i)}^{\vec h,\vec\mu}\,.
\ee
If $P_{c(\vec i)}^{\text{neu}}\ket\psi=\ket\psi$, then $\ket\psi$ is a code state acted on by gauge-fixed errors supported on $\vec i$, which lie in orthogonal error sectors by Thm.~\ref{thm:KLdegenerate} and  $c(\vec i)$ is a neutral cluster with correctable errors.}

For the following step, we assume a predetermined, neutral-error cluster, such that the support projector \eqref{eq:neutral-cluster-projector} has $+1$ eigenvalue: this identifies the support of each neutral, correctable error cluster.

\section{Error Correction}\label{sec:errorcorrection}

In this section we propose to use a mechanism similar to the deferred measurement, and move the error onto an ancilla, which can then be measured out or discarded. We will first transfer the charge errors onto a vertex ancilla.

\subsection{Charge Error Correction Protocol}\label{sec:charge-transfer}
\paragraf{Setup.} Let $\ket\psi$ be a state invariant under all vertex operators at $v$,
\be
\label{psi-invariant}
A_v^g\ket\psi=\ket\psi\quad\text{for all }g\in G\,.
\ee
Let the edge $i\in\star v$ be acted on by the irrep error $E_i^\mu$ of \eqref{eq:Emui}, with $\mu=(\Gamma,a,b)$, and let $i$ be the only edge of $\star v$ carrying an error. We introduce a $G$-qudit \textbf{vertex ancilla} ($\mathrm{va}$) at $v$, prepared in the state $\ket1_{\mathrm{va}}$ of \eqref{eq:edge-irrep-basis},
\be
\label{ancilla-init}
\ket1_{\mathrm{va}}=\frac{1}{\sqrt{|G|}}\sum_{g\in G}\ket g_{\mathrm{va}}\,,
\ee
and apply the vertex operator controlled by the ancilla,
\be
\label{U-def}
U_v:=\sum_{g\in G}\ketbra g_{\mathrm{va}}\ot A_v^g = \prod_{e\to v}CR_{\mathrm{va},e}\prod_{e\leftarrow v}CL_{\mathrm{va},e}\,,
\ee
where the equality denotes the circuit decomposition in terms of controlled operators, defined as 
\be
\label{controlled-mult}
CL_{c,t}:=\sum_{x\in G}\ketbra x_c\ot L_t^x\,,\quad CR_{c,t}:=\sum_{x\in G}\ketbra x_c\ot R_t^x\,,
\ee

\paragraf{Commutation of $A_v^g$ with the Irrep Error.} The commutation depends on whether $i$ enters or leaves the vertex $v$, in which $A_v^g$ acts on $i$ by $R_i^g$ or $L_i^g$ respectively. As a shorthand, we will denote these cases $i \rightarrow v$ and $i \leftarrow v$, respectively. Using the commutation relations \eqref{eq:ZL_comm2} and \eqref{eq:ZR_comm2}, we obtain 
\be\ba
  R^g_i E_i^{(\Gamma,a,b)} &= \sum_{c=1}^{d_\Gamma}\Gamma(g)_{cb}\,E_i^{(\Gamma,a,c)}R^g_i\\
L^g_i E_i^{(\Gamma,a,b)}& = \sum_{c=1}^{d_\Gamma}\Gamma(g^{-1})_{ac}\,E_i^{(\Gamma,c,b)}L^g_i
\ea\ee

On the other edges of $\star v$ the error acts as the identity,  Therefore
\be
\label{A-E-commutation}
A_v^gE_i^{(\Gamma,a,b)}=
\begin{cases}
\sum_{c=1}^{d_\Gamma}\Gamma(g)_{cb}\,E_i^{(\Gamma,a,c)}A_v^g&i \rightarrow v\,,\\
\sum_{c=1}^{d_\Gamma}\Gamma(g^{-1})_{ac}\,E_i^{(\Gamma,c,b)}A_v^g&i \leftarrow v\,.
\end{cases}
\ee

\paragraf{Action of the Controlled Vertex Operator.} Using \eqref{A-E-commutation} and then \eqref{psi-invariant}, we have
\be
\label{U-on-error}
\ba
\, &
U_v\ket1_{\mathrm{va}}\ot E_i^\mu\ket\psi 
=\frac{1}{\sqrt{|G|}}\sum_{g\in G}\ket g_{\mathrm{va}}\ot A_v^gE_i^{(\Gamma,a,b)}\ket\psi\\
&=\begin{cases}
\sqrt{\frac{d_\Gamma}{|G|}}\sum_{g\in G}\sum_{c=1}^{d_\Gamma}\Gamma(g)_{cb}\ket g_{\mathrm{va}}\ot Z^\Gamma_{ac,i}\ket\psi\,&i \rightarrow v\,,\\
\sqrt{\frac{d_\Gamma}{|G|}}\sum_{g\in G}\sum_{c=1}^{d_\Gamma}\Gamma(g^{-1})_{ac}\ket g_{\mathrm{va}}\ot Z^\Gamma_{cb,i}\ket\psi & i \leftarrow v\,.
\end{cases}
\ea
\ee
Since $\sum_g\Gamma(g)_{cb}\ket g=Z^\Gamma_{cb}\sum_g\ket g=\sqrt{|G|}\,Z^\Gamma_{cb}\ket1$ and $\sum_g\Gamma(g^{-1})_{ac}\ket g=\sqrt{|G|}\,Z^{\Gamma^*}_{ca}\ket1$, this is
\be
\ba
\label{U-on-error-Z}
\, & U_v\ket1_{\mathrm{va}}\ot E_i^\mu\ket\psi\\
&=\begin{cases}
\sqrt{d_\Gamma}\sum_{c=1}^{d_\Gamma}Z^\Gamma_{ac,i}Z^\Gamma_{cb,\mathrm{va}}\bigl(\ket1_{\mathrm{va}}\ot\ket\psi\bigr)\,&i \rightarrow v\,,\\
\sqrt{d_\Gamma}\sum_{c=1}^{d_\Gamma}Z^\Gamma_{cb,i}Z^{\Gamma^*}_{ca,\mathrm{va}}\bigl(\ket1_{\mathrm{va}}\ot\ket\psi\bigr)\,&i \leftarrow v\,,
\end{cases}
\ea
\ee

\paragraf{Controlled Multiplication.} We now apply the controlled left multiplication operator with control $i$ and target $\mathrm{va}$ 

\be\ba
 \label{CL-def}
 CL_{i,\mathrm{va}} &=\sum_{x\in G}\ketbra x_i\ot L_{\mathrm{va}}^x\cr 
 \ea\ee
We use the identities
\be\ba
\label{CL-action}
CL_{i,\mathrm{va}}\Bigl(\sum_{c=1}^{d_\Gamma}Z^\Gamma_{ac,i}Z^\Gamma_{cb,\mathrm{va}}\Bigr)CL_{i,\mathrm{va}}^\dagger=Z^\Gamma_{ab,\mathrm{va}}\,,\\
CL^\dagger_{i,\mathrm{va}}\Bigl(\sum_{c=1}^{d_\Gamma}Z^\Gamma_{cb,i}Z^{\Gamma^*}_{ca,\mathrm{va}}\Bigr)CL_{i,\mathrm{va}}=Z^{\Gamma^*}_{ba,\mathrm{va}}\,.
\ea\ee
which is proven in Appendix ~\ref{app:proof3}. Note that $\sum_{c=1}^{d_\Gamma}Z^\Gamma_{ac,i}Z^\Gamma_{cb,\mathrm{va}}\neq Z^\Gamma_{ab,\mathrm{va}}$ for $\Gamma\neq1$, so \eqref{CL-action} holds only with the conjugation. Moreover, $L^x\ket1=\ket1$ for all $x\in G$, so for every state $\ket\phi$ of the data qudits
\be
\label{CL-vacuum}
CL_{i,\mathrm{va}}\bigl(\ket1_{\mathrm{va}}\ot\ket\phi\bigr)=\ket1_{\mathrm{va}}\ot\ket\phi\,.
\ee
Combining \eqref{U-on-error-Z}, \eqref{CL-action} and \eqref{CL-vacuum}, we obtain
\be
\label{CL-U-result}
\ba
\, &CL_{i,\mathrm{va}}\,U_v\ket1_{\mathrm{va}}\ot E_i^\mu\ket\psi \cr 
&=\sqrt{d_\Gamma}\,Z^\Gamma_{ab,\mathrm{va}}\ket1_{\mathrm{va}}\ot\ket\psi\\
&=\ket{\Gamma_{ab}}_{\mathrm{va}}\ot\ket\psi\,,
\ea
\ee
with $\ket{\Gamma_{ab}}$ the single-qudit state of \eqref{eq:edge-irrep-basis} for the case $i \rightarrow v$ and 
\be
\label{CL-U-result_out}
\ba
\, &CL^\dagger_{i,\mathrm{va}}\,U_v\ket1_{\mathrm{va}}\ot E_i^\mu\ket\psi \cr 
&=\sqrt{d_\Gamma}\,Z^{\Gamma^*}_{ba,\mathrm{va}}\ket1_{\mathrm{va}}\ot\ket\psi\\
&=\ket{\Gamma^*_{ba}}_{\mathrm{va}}\ot\ket\psi\,,
\ea
\ee
for the case $i \leftarrow v$. Thus, define the charge transfer circuit
\be
 T = \begin{cases}
F_\mathrm{va} CL_{i,\mathrm{va}}\,U_v   \,&i \rightarrow v\,, \\
F_\mathrm{va} S_\mathrm{va} CL^\dagger_{i,\mathrm{va}}\,U_v   \,&i \leftarrow v\,, 
\label{charge-transfer-circuit}
 \end{cases}
\ee
where $F$ is the Fourier transform \eqref{eq:fourier-gate} and $S$ is the antipode map which sends $\ket{g} \rightarrow \ket{g^{-1}}$. 
Then
\be
T (\ket1_{\mathrm{va}}\ot E_i^\mu\ket{\psi})  = {\ket{\Gamma_{ab}}_{\mathrm{va}}} \otimes \ket{\psi}\,,
\ee
{where for $i\leftarrow v$ we used $S\ket{\Gamma^*_{ba}}=\ket{\Gamma_{ab}}$, since $\Gamma^*(g^{-1})_{ba}=\Gamma(g)_{ab}$.}

The error is thus moved entirely onto the ancilla, and the data qudits return to $\ket\psi$. Note that since the the ancilla is completely decoupled, and we can in fact discard the ancilla altogether. However, for reusing the ancilla, we can perform a measurement in the group basis, which identifies $\mu$, which can then use to reset the ancilla. This is shown in figure~\ref{fig:charge-transfer-circuit} for a vertex of the square lattice.  

\begin{figure*}
\centering
{\begin{tikzpicture}[x=0.42cm,y=0.62cm,font=\small]
\draw (0,4.8)--(8.4,4.8);
\foreach \y in {0,1.2,2.4,3.6}{\draw (0,\y)--(9.6,\y);}
\node[left] at (0,4.8) {$\ket1_{\mathrm{va}}$};
\node[left] at (0,3.6) {$i\rightarrow v$}; \node[left] at (0,2.4) {$e\to v$};
\node[left] at (0,1.2) {$e\leftarrow v$}; \node[left] at (0,0) {$e\leftarrow v$};
\draw (2.2,4.8)--(2.2,0); \fill (2.2,4.8) circle (2.2pt);
\node[draw,fill=white,minimum size=0.5cm,inner sep=1pt] at (2.2,3.6) {$R$};
\node[draw,fill=white,minimum size=0.5cm,inner sep=1pt] at (2.2,2.4) {$R$};
\node[draw,fill=white,minimum size=0.5cm,inner sep=1pt] at (2.2,1.2) {$L$};
\node[draw,fill=white,minimum size=0.5cm,inner sep=1pt] at (2.2,0) {$L$};
\draw (4.6,3.6)--(4.6,4.8); \fill (4.6,3.6) circle (2.2pt);
\node[draw,fill=white,minimum size=0.5cm,inner sep=1pt] at (4.6,4.8) {$L$};
\node[draw,fill=white,minimum size=0.5cm,inner sep=1pt] at (6.6,4.8) {$F$};
\node[draw,fill=white,minimum width=0.6cm,minimum height=0.5cm] (mc) at (8.6,4.8) {};
\draw (mc.center) ++(-0.2cm,-0.1cm) arc[start angle=160,end angle=20,radius=0.21cm];
\draw[->] (mc.center) ++(0,-0.12cm) -- ++(0.17cm,0.27cm);
\end{tikzpicture}} \hspace{3 cm}
{\begin{tikzpicture}[x=0.42cm,y=0.62cm,font=\small]
\draw (0,4.8)--(10.4,4.8);
\foreach \y in {0,1.2,2.4,3.6}{\draw (0,\y)--(10.8,\y);}
\node[left] at (0,4.8) {$\ket1_{\mathrm{va}}$};
\node[left] at (0,3.6) {$e\rightarrow v$}; \node[left] at (0,2.4) {$e\to v$};
\node[left] at (0,1.2) {$i\leftarrow v$}; \node[left] at (0,0) {$e\leftarrow v$};
\draw (2.2,4.8)--(2.2,0); \fill (2.2,4.8) circle (2.2pt);
\node[draw,fill=white,minimum size=0.5cm,inner sep=1pt] at (2.2,3.6) {$R$};
\node[draw,fill=white,minimum size=0.5cm,inner sep=1pt] at (2.2,2.4) {$R$};
\node[draw,fill=white,minimum size=0.5cm,inner sep=1pt] at (2.2,1.2) {$L$};
\node[draw,fill=white,minimum size=0.5cm,inner sep=1pt] at (2.2,0) {$L$};
\draw (4.6,1.2)--(4.6,4.8); \fill (4.6,1.2) circle (2.2pt);
\node[draw,fill=white,minimum size=0.5cm,inner sep=1pt] at (4.6,4.8) {$L^\dagger$};
\node[draw,fill=white,minimum size=0.5cm,inner sep=1pt] at (6.6,4.8) {$S$};
\node[draw,fill=white,minimum size=0.5cm,inner sep=1pt] at (8.6,4.8) {$F$};
\node[draw,fill=white,minimum width=0.6cm,minimum height=0.5cm] (mc) at (10.6,4.8) {};
\draw (mc.center) ++(-0.2cm,-0.1cm) arc[start angle=160,end angle=20,radius=0.21cm];
\draw[->] (mc.center) ++(0,-0.12cm) -- ++(0.17cm,0.27cm);
\end{tikzpicture}}
\caption{{The circuit \eqref{charge-transfer-circuit} for the charge transfer at a vertex $v$ of the square lattice, with the irrep error $E_i^\mu$ for an edge $i$ entering $v$ (left) and for an edge $i$ exiting $v$ (right). A dot on a control qudit in the group basis state $\ket x$, connected to boxes $L$ or $R$, applies $L^x$ or $R^x$ to the target qudits, as in \eqref{controlled-mult}. The first gate is the controlled vertex operator $U_v$ of \eqref{U-def}, with the ancilla as control, and the second is $CL_{i,\mathrm{va}}$ or $CL_{i,\mathrm{va}}^\dagger$ (depending on the orientation of $i$), with $i$ as control. Afterwards the data qudits are in $\ket\psi$ and the ancilla is in $\ket{\Gamma_{ab}}$, which the Fourier transform $F$ maps to the group basis state $\ket{\Gamma,a,b}$ before the measurement in the group basis.}}
\label{fig:charge-transfer-circuit}
\end{figure*}

\paragraf{Example: $G=\Z_2$.} Let $\Z_2=\{0,1\}$, with group basis $\ket0,\ket1$ and $\ket\pm:=(\ket0\pm\ket1)/\sqrt2$. In this example $\ket+$ is the state $\ket1$ of \eqref{eq:edge-irrep-basis}, and for the sign irrep $\ket{\Gamma_{11}}=\ket-$. Moreover $L^1=R^1=X$, $E_i^\mu=Z_i$, and $A_v^1=\prod_{e\in\star v}X_e$, so we do not have to distinguish the two cases. The controlled vertex operator \eqref{U-def} is
\be
\label{U-Z2}
U_v=\ketbra0_{\mathrm{va}}\ot\bbI+\ketbra1_{\mathrm{va}}\ot A_v^1\,.
\ee
Since $i\in\star v$, the operators $A_v^1$ and $Z_i$ anticommute. Together with $A_v^1\ket\psi=\ket\psi$ this gives
\be
\label{U-Z2-on-error}
\ba
\, &U_v\ket+_{\mathrm{va}}\ot Z_i\ket\psi\cr 
&=\frac{1}{\sqrt2}\Bigl(\ket0_{\mathrm{va}}\ot Z_i\ket\psi+\ket1_{\mathrm{va}}\ot A_v^1Z_i\ket\psi\Bigr)\\
&=\frac{1}{\sqrt2}\Bigl(\ket0_{\mathrm{va}}-\ket1_{\mathrm{va}}\Bigr)\ot Z_i\ket\psi\\
&=\ket-_{\mathrm{va}}\ot Z_i\ket\psi\\
&=Z_{\mathrm{va}}Z_i\,\ket+_{\mathrm{va}}\ot\ket\psi\,.
\ea
\ee
The controlled multiplication \eqref{CL-def} is $CL_{i,\mathrm{va}}=\mathrm{CNOT}_{i,\mathrm{va}}$, with control $i$ and target $\mathrm{va}$. It satisfies
\be
\label{CNOT-Z2}
\ba
\mathrm{CNOT}_{i,\mathrm{va}}\,Z_{\mathrm{va}}Z_i\,\mathrm{CNOT}_{i,\mathrm{va}}^\dagger&=Z_{\mathrm{va}}\,,\\
\mathrm{CNOT}_{i,\mathrm{va}}\bigl(\ket+_{\mathrm{va}}\ot\ket\phi\bigr)&=\ket+_{\mathrm{va}}\ot\ket\phi\,,
\ea
\ee
which are \eqref{CL-action} and \eqref{CL-vacuum} for $\Z_2$. Therefore
\be\ba 
\label{CNOT-Z2-result}
\, &\mathrm{CNOT}_{i,\mathrm{va}}\,U_v\ket+_{\mathrm{va}}\ot Z_i\ket\psi \cr 
&=Z_{\mathrm{va}}\ket+_{\mathrm{va}}\ot\ket\psi=\ket-_{\mathrm{va}}\ot\ket\psi\,,
\ea
\ee
in agreement with \eqref{CL-U-result}.

\paragraf{Example: $G=S_3$.} Let $S_3=\langle r,s\,|\,r^3=s^2=\id,\ srs=r^{-1}\rangle$, with elements $r^k$ and $sr^k$ for $k=0,1,2$. Its irreps are the trivial and the sign irrep, both one-dimensional, and the two-dimensional irrep $\Gamma$ with
\be
\label{S3-irrep}
\Gamma(r^k)=\begin{pmatrix}\omega^k&0\\0&\omega^{-k}\end{pmatrix},\qquad
\Gamma(sr^k)=\begin{pmatrix}0&\omega^{-k}\\\omega^k&0\end{pmatrix}\,,
\ee
where $\omega:=e^{2\pi i/3}$.
Here, we consider the case $i \rightarrow v$ without loss of generality. For the one-dimensional irreps the sum over $c$ in \eqref{U-on-error-Z} has a single term, and the computation is as for $\Z_2$. The new feature appears for $\Gamma$, and we consider $\mu=(\Gamma,1,2)$. The single-qudit states \eqref{eq:edge-irrep-basis} that enter are
\be
\label{S3-states}
\ba
\ket{\Gamma_{12}}&=\frac{1}{\sqrt3}\bigl(\ket s+\omega^2\ket{sr}+\omega\ket{sr^2}\bigr)\,,\\
\ket{\Gamma_{22}}&=\frac{1}{\sqrt3}\bigl(\ket\id+\omega^2\ket r+\omega\ket{r^2}\bigr)\,.
\ea
\ee
Since $\sqrt2\,Z^\Gamma_{cb}\ket1=\ket{\Gamma_{cb}}$, \eqref{U-on-error-Z} gives
\be
\label{S3-U-on-error}
\ba
U_v\ket1_{\mathrm{va}}\ot E_i^{(\Gamma,1,2)}\ket\psi
&=\ket{\Gamma_{12}}_{\mathrm{va}}\ot Z^\Gamma_{11,i}\ket\psi\\
&\quad+\ket{\Gamma_{22}}_{\mathrm{va}}\ot Z^\Gamma_{12,i}\ket\psi\,.
\ea
\ee
Unlike for $\Z_2$, the ancilla is entangled with the data qudits, and the row index $a=1$ of the error is still on the data qudits. A Fourier measurement of the ancilla at this stage would return $(\Gamma,1,2)$ or $(\Gamma,2,2)$, each with probability $\frac12$ since the label of $i$ is uniformly distributed, and leave the data qudits in $Z^\Gamma_{11,i}\ket\psi$ or $Z^\Gamma_{12,i}\ket\psi$, up to normalization. The controlled multiplication removes this entanglement. From $\Gamma(x^{-1}g)=\Gamma(x^{-1})\Gamma(g)$,
\be
\label{L-on-irrep-states}
L^x\ket{\Gamma_{cb}}=\sum_{c'=1}^{d_\Gamma}\Gamma(x^{-1})_{cc'}\ket{\Gamma_{c'b}}\,.
\ee
If $i$ carries the label $x=r^k$, only the first term of \eqref{S3-U-on-error} contributes, with $\Gamma(r^k)_{11}=\omega^k$, and $L^{r^k}\ket{\Gamma_{12}}=\omega^{-k}\ket{\Gamma_{12}}$. If $i$ carries $x=sr^k$, only the second term contributes, with $\Gamma(sr^k)_{12}=\omega^{-k}$, and $L^{sr^k}\ket{\Gamma_{22}}=\omega^k\ket{\Gamma_{12}}$, since $(sr^k)^{-1}=sr^k$. In both cases the phases cancel,
\be
\label{S3-CL}
\ba
&CL_{i,\mathrm{va}}\Bigl(\ket{\Gamma_{12}}_{\mathrm{va}}\ot Z^\Gamma_{11,i}+\ket{\Gamma_{22}}_{\mathrm{va}}\ot Z^\Gamma_{12,i}\Bigr)\ket x_i\\
&=\ket{\Gamma_{12}}_{\mathrm{va}}\ot\ket x_i\quad\text{for all }x\in S_3\,,
\ea
\ee
so that $CL_{i,\mathrm{va}}U_v\ket1_{\mathrm{va}}\ot E_i^{(\Gamma,1,2)}\ket\psi=\ket{\Gamma_{12}}_{\mathrm{va}}\ot\ket\psi$, in agreement with \eqref{CL-U-result}.

\subsection{Flux Error Correction Protocol}\label{sec:flux-transfer}
We now develop the analogous protocol for the plaquette terms, which moves a multiplication error onto an ancilla.

\paragraf{Setup.} Let $\ket\psi$ be a state with trivial flux through a plaquette $p$,
\be
\label{psi-flat}
B_p\ket\psi=\ket\psi\,,
\ee
with $B_p=B_p^{\id}$ of \eqref{eq:Av_Bp}. We consider a multiplication error $L_i^h$ of \eqref{eq:Emui} acting on the edge $i\in\partial p$ and we take $i$ to be the only edge of $\de p$ carrying a multiplication error: as explained in the proof of Thm.~\ref{thm:KLdegenerate}, in each error cluster there exists an edge with this property after suitable gauge-fixing. In the flux correction protocol we now describe, we will always take the basepoint of the plaquette holonomy to be the source vertex $v_-$ of the edge $i$, and follow the path around $p$ starting along $i$ in the direction of the edge orientation 
e.g. in \eqref{eq:AvBp} the basepoint $v_-$ is the bottom-left vertex  and we follow the edges in the order $e_1,\ldots,e_4$, which carry the group labels $g_1,\ldots,g_4$. Then $B_p^g$ projects onto the states with holonomy $g_1g_2g_3^{-1}g_4^{-1}=g$ based at $v_-$. For an error on any other edge of $p$, the holonomy based at its source vertex follows by cyclically relabeling the edges, and reversing the direction of the loop if needed. Changing the basepoint can conjugate the holonomy, but the action of $B_p=B_p^{\id}$ does not depend on the basepoint choice. 
In the following, we will use $B_p^g$ to compute the holonomy with the basepoint and orientation that depends on the edge $i$ we are correcting: the basepoint is $v_-(i)$ and the orientation is the one along edge $i$. For each plaquette $p$ and choice of basepoint and orientation, the $B_p^g$ operators are orthogonal projectors with $\sum_gB_p^g=\bbI$, so \eqref{psi-flat} is equivalent to
\be
\label{psi-flat-g}
B_p^g\ket\psi=\delta_{g,\id}\ket\psi\,.
\ee
We introduce a $G$-qudit \textbf{plaquette ancilla} ($\mathrm{pa}$) at $p$, prepared in the group basis state $\ket\id_{\mathrm{pa}}$, and apply the right multiplication controlled by the holonomy of $p$ via $B_p^g$ and 
acting on the plaquette ancilla (which we write in the first tensor-product)
\be
\label{V-def}
V_p:=\sum_{g\in G}R_{\mathrm{pa}}^{g^{-1}}\ot B_p^g\,.
\ee
It is unitary, since the $B_p^g$ are orthogonal projectors summing to $\bbI$, and on a state with holonomy $g$ it maps the ancilla state $\ket y_{\mathrm{pa}}$ to $\ket{yg}_{\mathrm{pa}}$.

We can commute the $B_p^g$ through the multiplication error as follows: since $L_i^h$ maps $g_i$ to $hg_i$, it maps a state with holonomy $k$ to a state with holonomy $hk$. Therefore
\be
\label{B-L-commutation}
B_p^gL_i^h=L_i^hB_p^{h^{-1}g}\,.
\ee

\paragraf{Action of the Controlled Holonomy.} Using $R^{g^{-1}}\ket\id=\ket g$, \eqref{B-L-commutation} and \eqref{psi-flat-g}, we compute
\be
\label{V-on-error}
\ba
V_p\ket\id_{\mathrm{pa}}\ot L_i^h\ket\psi
&=\sum_{g\in G}\ket g_{\mathrm{pa}}\ot B_p^gL_i^h\ket\psi\\
&=\sum_{g\in G}\ket g_{\mathrm{pa}}\ot L_i^hB_p^{h^{-1}g}\ket\psi\\
&=\ket h_{\mathrm{pa}}\ot L_i^h\ket\psi\,.
\ea
\ee
Unlike \eqref{U-on-error-Z}, the ancilla and the data qudits are now already in a product state.

\paragraf{Error Transfer to the Ancilla.} To remove the error from the state $\ket{\psi}$, we apply the inverse of the controlled left multiplication \eqref{controlled-mult} with control $\mathrm{pa}$ and target $i$,
\be
\label{CL-dagger}
CL_{\mathrm{pa},i}^\dagger=\sum_{h\in G}\ketbra h_{\mathrm{pa}}\ot L_i^{h^{-1}}\,,
\ee
which acts as follows:
\be
\label{flux-transfer-result}
CL_{\mathrm{pa},i}^\dagger\,V_p(\ket\id_{\mathrm{pa}}\ot L_i^h\ket\psi)=\ket h_{\mathrm{pa}}\ot\ket\psi\,.
\ee
The error is therefore moved entirely onto the ancilla, and a measurement of the ancilla in the group basis identifies $h$. Since $E_i^\mu$ is diagonal in the group basis, it commutes with every $B_p^g$, and the same steps for the error $E_i^{h,\mu}=L_i^hE_i^\mu$ of \eqref{eq:Emui} give
\be
\label{flux-transfer-full}
CL_{\mathrm{pa},i}^\dagger\,V_p\ket\id_{\mathrm{pa}}\ot E_i^{h,\mu}\ket\psi=\ket h_{\mathrm{pa}}\ot E_i^\mu\ket\psi\,.
\ee
If $\ket\psi$ also satisfies \eqref{psi-invariant}, the remaining irrep error $E_i^\mu$ is moved onto a second ancilla by the charge transfer of Sec.~\ref{sec:charge-transfer}.

\paragraf{Circuit.} The controlled holonomy is a product of controlled right multiplications of the ancilla, controlled by the edges of $p$. The operator $CR_{e_1,\mathrm{pa}}^\dagger$ acting on $\ket y_{\mathrm{pa}}$ maps $y$ to $yg_1$, then $CR_{e_2,\mathrm{pa}}^\dagger$ to $yg_1g_2$, then $CR_{e_3,\mathrm{pa}}$ and $CR_{e_4,\mathrm{pa}}$ multiply from the right by $g_3^{-1}$ and $g_4^{-1}$. Hence
\be
\label{V-gates}
V_p=CR_{e_4,\mathrm{pa}}\,CR_{e_3,\mathrm{pa}}\,CR_{e_2,\mathrm{pa}}^\dagger\,CR_{e_1,\mathrm{pa}}^\dagger\,,
\ee
where, compared to \eqref{U-def}, the roles of control and target are exchanged. For an edge $i$ in another position on $\partial p$, $V_p$ is the product of $CR_{e,\mathrm{pa}}^\dagger$ for the edges $e$ traversed along their orientation and $CR_{e,\mathrm{pa}}$ for the edges traversed against it, applied in the order of the holonomy based at the source vertex $v_-$ of $i$. The flux error correction is therefore the circuit
\be
\label{flux-transfer-circuit}
CL_{\mathrm{pa},i}^\dagger\,CR_{e_4,\mathrm{pa}}\,CR_{e_3,\mathrm{pa}}\,CR_{e_2,\mathrm{pa}}^\dagger\,CR_{e_1,\mathrm{pa}}^\dagger\,,
\ee
acting on $\ket\id_{\mathrm{pa}}\ot L_i^h\ket\psi$ and followed by a measurement of the ancilla in the group basis, shown in figure~\ref{fig:flux-transfer-circuit}. 

\begin{figure}
\centering
{\begin{tikzpicture}[x=0.42cm,y=0.62cm,font=\small]
\draw (0,4.8)--(10.4,4.8);
\foreach \y in {0,1.2,2.4,3.6}{\draw (0,\y)--(11.6,\y);}
\node[left] at (0,4.8) {$\ket\id_{\mathrm{pa}}$};
\node[left] at (0,3.6) {$i=e_1$}; \node[left] at (0,2.4) {$e_2$};
\node[left] at (0,1.2) {$e_3$}; \node[left] at (0,0) {$e_4$};
\draw (2,3.6)--(2,4.8); \fill (2,3.6) circle (2.2pt);
\node[draw,fill=white,minimum size=0.5cm,inner sep=1pt] at (2,4.8) {$R^\dagger$};
\draw (3.8,2.4)--(3.8,4.8); \fill (3.8,2.4) circle (2.2pt);
\node[draw,fill=white,minimum size=0.5cm,inner sep=1pt] at (3.8,4.8) {$R^\dagger$};
\draw (5.6,1.2)--(5.6,4.8); \fill (5.6,1.2) circle (2.2pt);
\node[draw,fill=white,minimum size=0.5cm,inner sep=1pt] at (5.6,4.8) {$R$};
\draw (7.4,0)--(7.4,4.8); \fill (7.4,0) circle (2.2pt);
\node[draw,fill=white,minimum size=0.5cm,inner sep=1pt] at (7.4,4.8) {$R$};
\draw (9,4.8)--(9,3.6); \fill (9,4.8) circle (2.2pt);
\node[draw,fill=white,minimum size=0.5cm,inner sep=1pt] at (9,3.6) {$L^\dagger$};
\node[draw,fill=white,minimum width=0.6cm,minimum height=0.5cm] (mf) at (10.6,4.8) {};
\draw (mf.center) ++(-0.2cm,-0.1cm) arc[start angle=160,end angle=20,radius=0.21cm];
\draw[->] (mf.center) ++(0,-0.12cm) -- ++(0.17cm,0.27cm);
\end{tikzpicture}}
\caption{{The circuit \eqref{flux-transfer-circuit} for the flux transfer at a plaquette $p$ of the square lattice, with the edges labeled as in \eqref{eq:AvBp} and the multiplication error $L_i^h$ on $i=e_1$. A dot on a control qudit in the group basis state $\ket x$, connected to a box $R$, $R^\dagger$ or $L^\dagger$, applies $R^x$, $R^{x^{-1}}$ or $L^{x^{-1}}$ to the target qudit, as in \eqref{controlled-mult}. The first four gates are the controlled holonomy $V_p$ of \eqref{V-gates}, after which the ancilla is in the state $\ket h$, and the last gate $CL_{\mathrm{pa},i}^\dagger$ removes the error from $i$. The ancilla is then measured in the group basis.}}
\label{fig:flux-transfer-circuit}
\end{figure}

\paragraf{Example: $G=\Z_2$.} For $\Z_2=\{0,1\}$, $L^1=R^1=X$, the multiplication error is $X_i$, the plaquette ancilla starts in $\ket\id=\ket0$, and every controlled multiplication \eqref{controlled-mult} is a CNOT. The controlled holonomy operator \eqref{V-gates} is 
\be
V_p=\prod_{k=1}^4\mathrm{CNOT}_{e_k,\mathrm{pa}}\,,
\ee
which writes the parity of the labels of $p$ onto the ancilla. Since a code-state $\ket\psi$ has even parity and $X_i$ flips it, 
\be
\label{flux-Z2}
\ba
&\mathrm{CNOT}_{\mathrm{pa},i}\prod_{k=1}^4\mathrm{CNOT}_{e_k,\mathrm{pa}} \Big( \ket0_{\mathrm{pa}}\ot X_i\ket\psi\Big) \\
&=\mathrm{CNOT}_{\mathrm{pa},i}\Big( \ket1_{\mathrm{pa}}\ot X_i\ket\psi \Big)
=\ket1_{\mathrm{pa}}\ot\ket\psi\,,
\ea
\ee
in agreement with \eqref{flux-transfer-result} and the standard $\Z_2$ syndrome extraction circuits.

\paragraf{Example: $G=S_3$.} We use the presentation of $S_3$ of Sec.~\ref{sec:charge-transfer} and consider the error $L^s$ on each of the edges $e_1,\ldots,e_4$ of $p$ in turn, with the edges oriented as in \eqref{eq:AvBp}. For the error $L_{e_k}^s$ the basepoint of the plaquette is the source vertex of $e_k$, which is the bottom-left, top-left, bottom-right and bottom-left corner of $p$ for $k=1,2,3,4$, so the holonomy starting along $e_k$ is:
\be\ba\label{S3-basepoints}
e_1&:\ g_1g_2g_3^{-1}g_4^{-1}\,,\qquad e_2:\ g_2g_3^{-1}g_4^{-1}g_1\,,\cr
e_3&:\ g_3g_2^{-1}g_1^{-1}g_4\,,\qquad e_4:\ g_4g_3g_2^{-1}g_1^{-1}\,.
\ea\ee
For $e_1$ and $e_2$ the loop runs clockwise, while for $e_3$ and $e_4$ it runs anticlockwise, so that $e_k$ is always traversed along its orientation. Each of these holonomies is a cyclic permutation of $g_1g_2g_3^{-1}g_4^{-1}$ or of its inverse, so it is trivial on any state with trivial flux $g_1g_2g_3^{-1}g_4^{-1}=\id$. The error replaces $g_k$ by $sg_k$, which, with the basepoint and orientation conventions described above, multiplies the holonomy by $s$ from the left. For example, for an error on $e_1$ the four gates of \eqref{V-gates} act on the ancilla as
\be
\label{S3-flux-gates}
\ba
\ket\id&\mapsto\ket{sg_1}\mapsto\ket{sg_1g_2}\\
&\mapsto\ket{sg_1g_2g_3^{-1}}\mapsto\ket{sg_1g_2g_3^{-1}g_4^{-1}}=\ket s\,,
\ea
\ee
while for $e_3$ the gates $CR_{e_3,\mathrm{pa}}^\dagger$, $CR_{e_2,\mathrm{pa}}$, $CR_{e_1,\mathrm{pa}}$ and $CR_{e_4,\mathrm{pa}}^\dagger$, applied in this order, give
\be
\label{S3-flux-gates-e3}
\ba
\ket\id&\mapsto\ket{sg_3}\mapsto\ket{sg_3g_2^{-1}}\\
&\mapsto\ket{sg_3g_2^{-1}g_1^{-1}}\mapsto\ket{sg_3g_2^{-1}g_1^{-1}g_4}=\ket s\,,
\ea
\ee
and similarly for $e_2$ and $e_4$. In all four cases the ancilla ends in $\ket s$ independently of the labels, in agreement with \eqref{V-on-error}, and the gate $CL_{\mathrm{pa},e_k}^\dagger$ applies $L^{s^{-1}}=L^s$ to $e_k$, which removes the error. The ancilla then records the group element $s$.

\subsection{Correction of Gauge-Fixed Supports}\label{sec:gauge-fixed-correction}

We now combine the charge and flux error transfer circuits we just computed, in order to correct the gauge-fixed errors 
\eqref{eq:gauge-fixed-family} on the supports of Fig.~\ref{fig:gauge-fix}. 
Each error transfer circuit moves a single label of the error onto its own ancilla. 
We use the right gauge-fixing as in Fig.~\ref{fig:gauge-fix} and start at the blue crosses, which show the checks whose support contains a single edge error. We then follow the support from right to left. 
Let us first summarize the flux and charge corrections and then the protocol to correct a general error:

\paragraf{Flux Correction.} The flux correction circuit of Sec.~\ref{sec:flux-transfer} at a plaquette $p$ removes the multiplication error on an edge $i\in\partial p$, under the assumption that $i$ is the only edge of $\partial p$ carrying an uncorrected multiplication error: 
this holds for the plaquettes shown with blue crosses in Fig.~\ref{fig:gauge-fix} (a) and (b). The irrep errors on $\partial p$ are diagonal in the group basis and commute with every $B_p^g$ and the errors on edges outside $\partial p$ commute with $V_p$ and $CL_{\mathrm{pa},i}^\dagger$, so \eqref{flux-transfer-full} applies. 

\paragraf{Charge Correction.}
The charge correction circuit of Sec.~\ref{sec:charge-transfer} at a vertex $v$ removes the irrep error on an edge $i\in\star v$, provided that $i$ is the only edge of $\star v$ carrying an uncorrected irrep error and that no edge of $\star v$ carries a multiplication error: this holds for the vertices marked by blue crosses in Fig.~\ref{fig:gauge-fix} (c)-(d). The remaining errors then commute with $A_v^g$, and \eqref{CL-U-result} applies. 

\paragraf{General Protocol.} The correction protocol proceeds as follows: 
we first remove all multiplication errors using the flux transfer correction, moving along the path of the dual lattice between the marked plaquettes of Fig.~\ref{fig:gauge-fix}. 
The flux correction for each edge of $i$ refers to the basepoint $v_-$ of $i$, as in Sec.~\ref{sec:flux-transfer}. 
After this, all multiplication errors have been corrected and the data qudits are in the state $E_{\vec i}^{\vec\mu}\ket\psi$.
Subsequently, we remove all irrep errors in a charge correction step by 
following Sec.~\ref{sec:charge-transfer}. 
For the flux error correction, the edges with uncorrected multiplication errors never contain a dual loop. Therefore, as in the proof of Theorem~\ref{thm:KLdegenerate}, there is always a plaquette of $p(\vec i)$ whose boundary contains exactly one of them. 
In the charge correction, the edges with uncorrected irrep errors never contain a direct loop, so there is always a vertex of $v(\vec i)$ whose star contains exactly one of them.

\paragraf{Notation.} We write $v_\pm(i)$ for the endpoints of $i$, with $i$ oriented from $v_-(i)$ to $v_+(i)$, and $p_\nearrow(v)$, $p_\searrow(v)$, $p_\swarrow(v)$ and $p_\nwarrow(v)$ for the four plaquettes with corner $v$. 

\paragraf{One Star, Fig.~\ref{fig:gauge-fix}(a).} 
We consider the one-star error configuration, where we provide a few more details compared with Fig.~\ref{fig:gauge-fix}:
\begin{equation}
    \begin{split}
      \begin{tikzpicture}[scale=1.6,>=stealth]
        \draw[step=0.6,black!55] (0,-0.6) grid (2.4,1.8);
        \draw[blue,dashed,thick] (1.5,0.9) -- (0.9,0.9) -- (0.9,0.3) -- (1.5,0.3);
        \draw[red,very thick,->] (0.6,0.6) -- (1.2,0.6);
        \draw[red,very thick,->] (1.2,0.6) -- (1.2,1.2);
        \draw[red,very thick,->] (1.2,0) -- (1.2,0.6);
        \draw[ForestGreen,very thick,->] (1.2,0.6) -- (1.8,0.6);
        \fill[blue(munsell)] (1.2,0.6) circle (0.05);
        \fill[white] (1.5,0.6) circle (0.07);
        \fill[ForestGreen] (1.5,0.6) +(-0.045,-0.045) rectangle +(0.045,0.045);
        \node[blue,font=\large] at (1.5,0.9) {$\times$};
        \node[blue,font=\large] at (1.5,0.3) {$\times$};
        \node[red] at (0.75,0.85) {$i_1$};
        \node[red] at (0.92,1.08) {$i_2$};
        \node[ForestGreen] at (1.7,0.78) {$i_3$};
        \node[red] at (0.92,0.12) {$i_4$};
        \node at (1.03,0.43) {$v$};
        \node at (1.2,-0.9) {(a) $\star v$};
      \end{tikzpicture}
    \end{split}
  \end{equation}
The edges $i_1,i_4$ are oriented towards $v$, while $i_2,i_3$ are oriented away from $v$. The gauge-fixed edge $i_3$ carries $h_3=\id$. The flux correction stage is:
\begin{enumerate}
\item At the blue crosses $p_\nearrow(v)$ and $p_\searrow(v)$, the only multiplication errors are on $i_2$ and $i_4$, since $h_3=\id$. They are corrected using the flux correction protocol of Sec.~\ref{sec:flux-transfer} with the plaquette basepoints $v_-(i_2)$ and $v_-(i_4)$ respectively.
\item Moving left, $i_1$ is now the only edge of the cluster carrying a multiplication error. It is corrected taking the plaquette $p_\nwarrow(v)$ with the basepoint $v_-(i_1)$ and counter-clockwise orientation.
\end{enumerate}
For the charge correction, we use the protocol of Sec.~\ref{sec:charge-transfer} in parallel for the four edges with vertex ancillas on the non-shared endpoints.

\paragraf{Two Stars, Fig.~\ref{fig:gauge-fix}(b).} 
The two-star configuration is 
 \be
 \begin{split}
  \begin{tikzpicture}[scale=1.6,>=stealth]
    \draw[step=0.6,black!55] (-0.6,-0.6) grid (2.4,1.8);
    \draw[blue,dashed,thick] (1.5,0.9) -- (0.3,0.9) -- (0.3,0.3) -- (1.5,0.3);
    \draw[red,very thick,->] (0,0.6) -- (0.6,0.6);
    \draw[red,very thick,->] (0.6,0.6) -- (0.6,1.2);
    \draw[red,very thick,->] (0.6,0) -- (0.6,0.6);
    \draw[red,very thick,->] (1.2,0.6) -- (1.2,1.2);
    \draw[red,very thick,->] (1.2,0) -- (1.2,0.6);
    \draw[ForestGreen,very thick,->] (0.6,0.6) -- (1.2,0.6);
    \draw[ForestGreen,very thick,->] (1.2,0.6) -- (1.8,0.6);
    \fill[blue(munsell)] (0.6,0.6) circle (0.05);
    \fill[blue(munsell)] (1.2,0.6) circle (0.05);
    \fill[white] (0.9,0.6) circle (0.07);
    \fill[ForestGreen] (0.9,0.6) +(-0.045,-0.045) rectangle +(0.045,0.045);
    \fill[white] (1.5,0.6) circle (0.07);
    \fill[ForestGreen] (1.5,0.6) +(-0.045,-0.045) rectangle +(0.045,0.045);
    \node[blue,font=\large] at (1.5,0.9) {$\times$};
    \node[blue,font=\large] at (1.5,0.3) {$\times$};
    \node[red] at (0.15,0.85) {$i_-$};
    \node[red] at (0.32,1.08) {$i_\uparrow$};
    \node[red] at (0.32,0.12) {$i_\downarrow$};
    \node[red] at (1.48,1.12) {$i'_\uparrow$};
    \node[red] at (1.48,0.08) {$i'_\downarrow$};
    \node[ForestGreen] at (0.9,0.78) {$i_{vv'}$};
    \node[ForestGreen] at (1.95,0.78) {$i'_+$};
    \node at (0.43,0.44) {$v^{\phantom{'}}$};
    \node at (1.05,0.44) {$v'$};
    \node at (0.9,-0.9) {(b) $\star v\cup\star v'$};
  \end{tikzpicture}
  \end{split}
  \ee
The gauge-fixed edges $i_{vv'}$ and $i'_+$ carry trivial multiplication labels. The flux correction stage follows the path on the dual lattice between the marked plaquettes is the following:
\begin{enumerate}
\item At the marked plaquettes $p_\nearrow(v')$ and $p_\searrow(v')$, transfer the multiplication errors of $i'_\uparrow$ and $i'_\downarrow$ to ancillas and remove them.
\item Moving left, $i_\uparrow$ and $i_\downarrow$ are now the only edges carrying multiplication errors on $\partial p_\nearrow(v)$ and $\partial p_\searrow(v)$ so we again transfer to ancillas and remove them.
\item Moving left again, we transfer the multiplication error of $i_-$ at $p_\nwarrow(v)$, with the basepoint $v_-(i_-)$.
\end{enumerate}
The charge correction requires two rounds:
\begin{enumerate}
    \item We proceed in parallel for all edges other than $i_{vv'}$, since each of them has an endpoint other than $v,v'$ whose star contains no other errors.
    \item Next, we correct transfer the charge on $i_{vv'}$, transferring it to an ancilla at $v'$, since $i_{vv'}$ is now the only edge of $\star v'$ carrying an error. 
\end{enumerate} 

\paragraf{One Plaquette, Fig.~\ref{fig:gauge-fix}(c).} 
The single plaquette configuration is 
 \begin{equation}
    \begin{split}
      \begin{tikzpicture}[scale=1.6]
        \draw[step=0.6,black!55] (0.6,0) grid (2.4,1.8);
        \draw[red,very thick] (1.2,1.2) -- (1.8,1.2);
        \draw[ForestGreen,very thick] (1.8,0.6) -- (1.8,1.2);
        \draw[red,very thick] (1.2,0.6) -- (1.8,0.6);
        \draw[red,very thick] (1.2,0.6) -- (1.2,1.2);
        \fill[white] (1.8,0.9) circle (0.07);
        \fill[ForestGreen] (1.8,0.9) +(-0.045,-0.045) rectangle +(0.045,0.045);
        \node[blue,font=\large] at (1.8,1.2) {$\times$};
        \node[blue,font=\large] at (1.8,0.6) {$\times$};
        \node[red] at (0.97,0.9) {$e_1$};
        \node[red] at (1.5,1.43) {$e_2$};
        \node[ForestGreen] at (2.07,0.9) {$e_3$};
        \node[red] at (1.5,0.37) {$e_4$};
        \node at (1.5,0.9) {$p$};
        \node at (1.5,-0.3) {(c) $\partial p$};
      \end{tikzpicture}
    \end{split}
  \end{equation}
For each basis state, the edges $e_1,\ldots,e_4$ of $p$ carry the group elements $g_1,\ldots,g_4$ and are oriented as in \eqref{eq:AvBp}. The gauge-fixed edge $e_3$ carries trivial irrep label $\mu_3=1$. 

All four edges of $\partial p$ can carry multiplication errors, and we can correct them in parallel with ancillas for the plaquettes on the outside, whose support doesn't include any other (multiplication) errors. 

The charge correction instead proceeds in two steps:
\begin{enumerate}
\item At the marked vertices $v_+(e_2)$ and $v_+(e_4)$, the only edges carrying irrep errors are $e_2$ and $e_4$, since $\mu_3=1$, so we can transfer their charges to ancillas and correct them.
\item Moving left, $e_1$ is now the only edge of $\star v_+(e_1)$ carrying an irrep error, and its charge is transferred at the top-left corner $v_+(e_1)$ of $p$.
\end{enumerate}

\paragraf{Two Plaquettes, Fig.~\ref{fig:gauge-fix}(d).}
Finally, the two plaquette configuration is 
\begin{equation}
    \begin{split}
      \begin{tikzpicture}[scale=1.6,>=stealth]
        \draw[step=0.6,black!55] (0,0) grid (2.4,1.8);
        \draw[red,very thick,->] (0.6,1.2) -- (1.2,1.2);
        \draw[red,very thick,->] (1.2,1.2) -- (1.8,1.2);
        \draw[ForestGreen,very thick,->] (1.8,0.6) -- (1.8,1.2);
        \draw[red,very thick,->] (0.6,0.6) -- (1.2,0.6);
        \draw[red,very thick,->] (1.2,0.6) -- (1.8,0.6);
        \draw[red,very thick,->] (0.6,0.6) -- (0.6,1.2);
        \draw[ForestGreen,very thick,->] (1.2,0.6) -- (1.2,1.2);
        \fill[white] (1.2,0.9) circle (0.07);
        \fill[ForestGreen] (1.2,0.9) +(-0.045,-0.045) rectangle +(0.045,0.045);
        \fill[white] (1.8,0.9) circle (0.07);
        \fill[ForestGreen] (1.8,0.9) +(-0.045,-0.045) rectangle +(0.045,0.045);
        \node[blue,font=\large] at (1.8,1.2) {$\times$};
        \node[blue,font=\large] at (1.8,0.6) {$\times$};
        \node[red] at (0.37,0.9) {$e_1$};
        \node[red] at (0.9,1.43) {$e_2$};
        \node[red] at (0.9,0.37) {$e_4$};
        \node[red] at (1.5,1.43) {$e'_2$};
        \node[red] at (1.5,0.37) {$e'_4$};
         \node[ForestGreen,anchor=north] at (1.17,0.9) {$e_{pp'}$};
        \node[ForestGreen] at (2.07,0.9) {$e'_3$};
        \node at (0.9,0.9) {$p$};
        \node at (1.5,0.9) {$p'$};
        \node at (1.2,-0.3) {(d) $\partial p\cup\partial p'$};
      \end{tikzpicture}
    \end{split}
  \end{equation}
For the flux correction, we proceed as follows:
\begin{enumerate}
    \item As in the previous example, we we can correct all multiplication errors {other than that of $e_{pp'}$} in parallel with ancillas for the plaquettes other than $p,p'$, whose support doesn't include any other (multiplication) errors. 
    \item Now, $e'_2$, $e'_3$ and $e'_4$ carry no multiplication errors, so we correct the shared edge $e_{pp'}$ with an ancilla for $p'$ with the basepoint $v_-(e_{pp'})$, .
\end{enumerate}
The gauge-fixed edges $e_{pp'}$ and $e'_3$ carry trivial irrep labels. The charge correction can therefore proceed moving along the direct path between the marked vertices (to the right of $p'$ as follows:
\begin{enumerate}
\item At the marked vertices $v_+(e'_2)$ and $v_+(e'_4)$, transfer the charges of $e'_2$ and $e'_4$ to ancillas and remove them.
\item Moving left, $e_2$ and $e_4$ are now the only edges with irrep errors in the stars of $v_+(e_2)$ and $v_+(e_4)$. We can therefore transfer the charges of $e_2$ and $e_4$ to ancillas and remove them.
\item Finally, we transfer the charge of $e_1$ at $v_+(e_1)$, the top-left corner of $p$.
\end{enumerate}

\paragraf{Separated Star and Plaquette.}
Error configurations can contain both star and plaquettes. The simplest configuration of this type is when these are separated as e.g. here: 
\begin{equation} \label{eq:separated-star-plaquette}
  \begin{split}
    \begin{tikzpicture}[scale=1.6,>=stealth]
      \fill[blue!12] (0.6,0) rectangle (1.8,1.2);
      \fill[ForestGreen!14] (1.8,0.6) rectangle (3.6,1.2);
      \fill[ForestGreen!14] (2.4,0) rectangle (3.0,1.8);
      \draw[step=0.6,black!55] (0,-0.6) grid (3.6,1.8);
      \draw[blue,dashed,thick] (1.5,0.9) -- (0.9,0.9) -- (0.9,0.3) -- (1.5,0.3);
      \draw[red,very thick,->] (0.6,0.6) -- (1.2,0.6);
      \draw[red,very thick,->] (1.2,0.6) -- (1.2,1.2);
      \draw[red,very thick,->] (1.2,0) -- (1.2,0.6);
      \draw[ForestGreen,very thick,->] (1.2,0.6) -- (1.8,0.6);
      \fill[blue(munsell)] (1.2,0.6) circle (0.05);
      \fill[white] (1.5,0.6) circle (0.07);
      \fill[ForestGreen] (1.5,0.6) +(-0.045,-0.045) rectangle +(0.045,0.045);
      \node[blue,font=\large] at (1.5,0.9) {$\times$};
      \node[blue,font=\large] at (1.5,0.3) {$\times$};
      \node[red] at (0.78,0.72) {$i_1$};
      \node[red] at (0.92,1.08) {$i_2$};
      \node[ForestGreen] at (1.6,0.78) {$i_3$};
      \node[red] at (0.92,0.12) {$i_4$};
      \node at (1.03,0.43) {$v$};
      \draw[red,very thick,->] (2.4,0.6) -- (3.0,0.6);
      \draw[red,very thick,->] (2.4,1.2) -- (3.0,1.2);
      \draw[red,very thick,->] (2.4,0.6) -- (2.4,1.2);
      \draw[ForestGreen,very thick,->] (3.0,0.6) -- (3.0,1.2);
      \fill[white] (3.0,0.9) circle (0.07);
      \fill[ForestGreen] (3.0,0.9) +(-0.045,-0.045) rectangle +(0.045,0.045);
      \node[blue,font=\large] at (3.0,1.2) {$\times$};
      \node[blue,font=\large] at (3.0,0.6) {$\times$};
      \node[red] at (2.17,0.9) {$e'_1$};
      \node[red] at (2.7,1.43) {$e'_2$};
      \node[ForestGreen] at (3.27,0.9) {$e'_3$};
      \node[red] at (2.7,0.37) {$e'_4$};
      \node at (2.7,0.9) {$p'$};
      \node at (1.8,-0.9) {{(e)} $\star v\cup\partial p'$};
    \end{tikzpicture}
  \end{split}
\end{equation}
The gauge-fixed edges are $i_3$, with $h_3=\id$, and $e'_3$, with $\mu'_3=1$, and the dashed line is the dual path between the marked plaquettes $p_\nearrow(v)$ and $p_\searrow(v)$, which crosses $i_2$, $i_1$ and $i_4$. 
The check region of $\star v$ consists of the four plaquettes with corner $v$, shaded in blue, and of the vertices of its edges, while the check region of $\partial p'$ consists of $p'$, its four neighboring plaquettes, shaded in green, and the four corners of $p'$. 

Although the check region \eqref{eq:check-region} of $i_3$ overlaps with that of $e_1'$, we call the configuration \eqref{eq:separated-star-plaquette} ``separated'' star and plaquette since the errored edges $i_3$ and $e'_1$ are separated by a clean edge and we will show that the errors on $\star v$ and $\partial p'$ can be corrected independently and in parallel. 

The flux correction stage is analogous to the the ones for the previous configurations:
\begin{enumerate}
\item {At the marked plaquettes $p_\nearrow(v)$ and $p_\searrow(v)$, transfer the multiplication errors of $i_2$ and $i_4$ to ancillas and remove them, as for the single star. In parallel, transfer the multiplication errors of $e'_1,\ldots,e'_4$ at the four plaquettes neighboring $p'$, each of which contains a single edge of $\partial p'$, like for the single plaquette.}
\item {Moving from right to left along the dual path, $i_1$ is now the only edge of $\partial p_\nwarrow(v)$ carrying a multiplication error, and it is corrected at $p_\nwarrow(v)$ with the basepoint $v_-(i_1)$.}
\end{enumerate}
{The charge correction stage consists of:}
\begin{enumerate}
\item {Transfer the charges of $i_1,\ldots,i_4$ to ancillas at their endpoints $v_-(i_1)$, $v_+(i_2)$, $v_+(i_3)$ and $v_-(i_4)$ other than $v$, whose stars contain no other error, as for the single star. In parallel, transfer the charges of $e'_2$ and $e'_4$, to ancillas at the marked vertices $v_+(e'_2)$ and $v_+(e'_4)$,  which are the only irrep errors on these stars since $\mu'_3=1$.}
\item {Moving left, $e'_1$ is now the only edge of $\star v_+(e'_1)$ carrying an irrep error, and its charge is transferred at the top-left corner $v_+(e'_1)$ of $p'$.}
\end{enumerate}
{Both stages thus take two transfer rounds, and the protocol uses $7+7=14$ ancillas, in agreement with \eqref{eq:gauge-fixed-count} for $N=8$ and $|\vec i_A|=|\vec i_B|=1$.}

\paragraf{Combined Star and Plaquette.}
Finally, consider when there are both star and plaquette errors that overlap. E.g. from the configuration \eqref{eq:separated-star-plaquette},
we now remove the clean edge between $i_3$ and $e'_1$, so that the bottom-left corner of $p'$ is the endpoint $v_+(i_3)$ of the gauge-fixed edge $i_3$:
\begin{equation}
  \begin{split}
    \begin{tikzpicture}[scale=1.6,>=stealth]
      \fill[blue!12] (0.6,0) rectangle (1.8,1.2);
      \fill[ForestGreen!14] (1.2,0.6) rectangle (3.0,1.2);
      \fill[ForestGreen!14] (1.8,0) rectangle (2.4,1.8);
      \fill[teal!28] (1.2,0.6) rectangle (1.8,1.2);
      \draw[step=0.6,black!55] (0,-0.6) grid (3.0,1.8);
      \draw[blue,dashed,thick] (2.1,0.9) -- (0.9,0.9) -- (0.9,0.3) -- (1.5,0.3);
      \draw[red,very thick,->] (0.6,0.6) -- (1.2,0.6);
      \draw[red,very thick,->] (1.2,0.6) -- (1.2,1.2);
      \draw[red,very thick,->] (1.2,0) -- (1.2,0.6);
      \draw[ForestGreen,very thick,->] (1.2,0.6) -- (1.8,0.6);
      \fill[blue(munsell)] (1.2,0.6) circle (0.05);
      \fill[white] (1.5,0.6) circle (0.07);
      \fill[ForestGreen] (1.5,0.6) +(-0.045,-0.045) rectangle +(0.045,0.045);
      \node[blue,font=\large] at (1.5,0.3) {$\times$};
      \node[red] at (0.78,0.72) {$i_1$};
      \node[red] at (1.05,1.08) {$i_2$};
      \node[ForestGreen] at (1.5,0.75) {$i_3$};
      \node[red] at (1.05,0.15) {$i_4$};
      \node at (1.03,0.43) {$v$};
      \draw[red,very thick,->] (1.8,0.6) -- (2.4,0.6);
      \draw[red,very thick,->] (1.8,1.2) -- (2.4,1.2);
      \draw[red,very thick,->] (1.8,0.6) -- (1.8,1.2);
      \draw[ForestGreen,very thick,->] (2.4,0.6) -- (2.4,1.2);
      \fill[white] (2.4,0.9) circle (0.07);
      \fill[ForestGreen] (2.4,0.9) +(-0.045,-0.045) rectangle +(0.045,0.045);
      \node[blue,font=\large] at (2.4,1.2) {$\times$};
      \node[blue,font=\large] at (2.4,0.6) {$\times$};
      \node[red] at (1.65,1.0) {$e'_1$};
      \node[red] at (2.1,1.43) {$e'_2$};
      \node[ForestGreen] at (2.67,0.9) {$e'_3$};
      \node[red] at (2.1,0.37) {$e'_4$};
      \node at (2.1,1.05) {$p'$};
      \node at (1.5,-0.9) {{(f)} $\star v\cup\partial p'$};
    \end{tikzpicture}
  \end{split}
\end{equation}
{The edge supports $\star v$ and $\partial p'$ are still disjoint, so the gauge fixing is unchanged, with $h_3=\id$ and $\mu'_3=1$, and so is the count \eqref{eq:gauge-fixed-count}. Their check regions, however, now overlap: the plaquette $p_\nearrow(v)$, shaded in teal, contains both $i_2$ and $e'_1$, and the star of the shared vertex $v_+(i_3)=v_-(e'_1)=v_-(e'_4)$ contains $i_3$, $e'_1$ and $e'_4$. Theorem~\ref{thm:KLdegenerate} still applies, since the edges that can carry multiplication errors contain no complete star and the edges that can carry irrep errors contain no complete plaquette. However, $p_\nearrow(v)$ is no longer a single-error check so it no longer marked, and the two (vertex and plaquette) supports can no longer be corrected independently. The dashed dual path now runs from $p'$ to $p_\searrow(v)$ and crosses $e'_1$, $i_2$, $i_1$ and $i_4$. The flux correction stage is:}
\begin{enumerate}
\item {At the marked plaquette $p_\searrow(v)$, transfer the multiplication error of $i_4$ to an ancilla and remove it. In parallel, transfer the multiplication errors of $e'_2$, $e'_3$ and $e'_4$ at the plaquettes above, to the right of and below $p'$, each of which contains a single edge carrying a multiplication error.}
\item {Now $e'_1$ is the only edge of $\partial p'$ carrying a multiplication error, and $i_1$ is the only one of $\partial p_\swarrow(v)$. They are corrected in parallel, at $p'$ with the basepoint $v_-(e'_1)=v_+(i_3)$ and at $p_\swarrow(v)$ with the basepoint $v_-(i_1)$.}
\item {Finally, $i_2$ is the only remaining edge carrying a multiplication error, and it is corrected at $p_\nearrow(v)$ with the basepoint $v_-(i_2)=v$.}
\end{enumerate}
{Along the dual path, the multiplication errors are thus removed from both ends towards $i_2$, whose correction has to wait for that of $e'_1$. The charge correction stage is:}
\begin{enumerate}
\item {Transfer the charges of $i_1$, $i_2$ and $i_4$ at $v_-(i_1)$, $v_+(i_2)$ and $v_-(i_4)$, whose stars contain no other error. In parallel, at the marked vertices $v_+(e'_2)$ and $v_+(e'_4)$, transfer the charges of $e'_2$ and $e'_4$, since $\mu'_3=1$.}
\item {Moving left, $e'_1$ is now the only edge of $\star v_+(e'_1)$ carrying an irrep error, and its charge is transferred at the top-left corner $v_+(e'_1)$ of $p'$.}
\item {Finally, the star of the shared vertex $v_+(i_3)$ now contains only the irrep error on $i_3$, since $e'_1$ and $e'_4$ have been corrected, and the charge of $i_3$ is transferred there.}
\end{enumerate}
{Compared with the separated configuration, the overlap of the check regions costs one additional step in each stage, since $i_2$ has to wait for $e'_1$ and $i_3$ has to wait for $e'_1$ and $e'_4$, while the number of ancillas, $14$, is unchanged.}

\paragraf{Result.} In each case every free label of \eqref{eq:gauge-fixed-family} is moved onto its own ancilla, so the protocol uses $2N-|\vec i_A|-|\vec i_B|$ ancillas, in agreement with the gauge-fixed error space dimension \eqref{eq:gauge-fixed-count}. 
By \eqref{flux-transfer-full} and \eqref{CL-U-result} the data qudits are returned to $\ket\psi$ and the measurements of the ancillas give the error labels $(\vec h,\vec\mu)$.

\subsection{Ribbon Errors and Correction}
\label{sec:RibEr}

An important type of error arises from ribbon operators. The details are tedious and not containing any new insights compared to what we have discussed already. Nevertheless these are important types of errors so we discuss them in App.~\ref{app:ribbon}.
A ribbon error carries a single pair of labels $(h,\mu)$ on an extended 1d support on the lattice or dual lattice, and is detected only by the check violations at its two end-points. 
It is essentially an extended single-edge error. A ribbon error basis is determined in App.~\ref{app:REB} and we show the orthogonality on the code space.In App.~\ref{app:REC} the ancilla-based recovery is given, following the single-edge constructions above.

\section{Conclusion}

Non-abelian surface codes have recently seen resurged interest due to their favorably universal quantum computing properties. Unlike the surface codes for abelian groups, the QEC for non-abelian groups is still poorly understood. In this paper we address two questions that bring us closer to a universally applicable QEC protocol for such codes: we determine an orthogonal error basis, and we devise an exact recovery protocol that moves error configurations in charge neutral clusters, onto ancillas and projects them out. 

One might worry that because the error transfer must initially be performed only on the boundary of a cluster, the correction scales linearly with the size of the cluster. Nevertheless, one might hope that because large clusters are very rare, it might still be possible to obtain a threshold if we only shrink the cluster by a constant amount in each error correction round rather than performing the full sequential unitary to completely shrink the cluster. We leave this numerical verification as well as finding protocols that also deal with measurement errors, to future work. 

An alternative approach, which we will explore in future work, is to evade even the initial identification of the charge neutral error clusters, and to apply the error transfer protocol to the entire code, iteratively. Such a parallel QEC protocol may be even more favorable and gets around the assumption that neutral error clusters have been identified.

\subsection*{Acknowledgments}
\noindent
We thank Noa Feldman, Julio Magdalena de la Fuente, Rajath Radhakrishnan, Steve Simon and Dominic Williamson for discussions of related matters. Anthropic's Claude Fable 5.1 and Opus 5.5 and ChatGPT Astra were used in parts of this work to improve formulations, do numerical and theorem checks. 
The work of SSN and AW is supported by the UKRI Frontier Research Grant, underwriting the ERC Advanced Grant ``Generalized Symmetries in Quantum Field Theory and Quantum Gravity''.

\appendix

\section{Proofs}

\subsection{Quantum Error Correction Conditions}

Here we review the Knill-Laflamme quantum error correction (QEC) conditions \cite{Knill:1996ny,Bennett:1996gf}, see also \cite[Theorem~10.1]{Nielsen:2012yss}.

\begin{theorem}[QEC conditions] \label{thm:Knill-Laflamme}
Let $C$ be a quantum code with code-space projector $P$, and let $\mathcal E$ be a quantum operation with Kraus operators $\{\wt{E}_\mu\}$. A necessary and sufficient condition for the existence of a recovery operation $\mathcal R$ correcting $\mathcal E$ on $C$ is
\be\label{eq:Knill-Laflamme}
P\wt{E}_\mu^\dagger\wt{E}_\nu P=\alpha_{\mu\nu}P\,.
\ee
Here $\alpha=(\alpha_{\mu\nu})$ is positive semidefinite and Hermitian.
\end{theorem}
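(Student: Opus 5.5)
The plan is to prove the two directions of Theorem~\ref{thm:Knill-Laflamme} separately: necessity from the isometry property of recovery, and sufficiency by building an explicit recovery channel.

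\emph{Necessity.} Assume a recovery $\mathcal R$ with Kraus operators $\{R_k\}$ satisfies $\mathcal R(\mathcal E(P\rho P))\propto P\rho P$ for every $\rho$. Since this holds for all pure code states, each composed Kraus operator $R_k\wt E_\mu$ must act on the code space as a scalar, $R_kE_\mu P=c_{k\mu}P$. Summing over $k$ and using $\sum_kR_k^\dagger R_k=\bbI$ gives
\[
P\wt E_\mu^\dagger\wt E_\nu P=\sum_k c_{k\mu}^*c_{k\nu}P=\alpha_{\mu\nu}P .
\]
The matrix $\alpha=c^\dagger c$ is then manifestly Hermitian and positive semidefinite. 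The one step needing care is the claim that each $R_kE_\mu$ is scalar on $C$. I would argue it from the fact that $\sum_{k,\mu}R_k\wt E_\mu\ket\psi\bra\psi\wt E_\mu^\dagger R_k^\dagger$ is proportional to the pure state $\ket\psi\bra\psi$, so each term lies along $\ket\psi$. Because this must hold for every $\ket\psi\in C$, a linear operator mapping every vector of $C$ to a multiple of itself is a scalar on $C$.

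\emph{Sufficiency.} First diagonalize $\alpha=u\,d\,u^\dagger$ with $u$ unitary, and pass to new Kraus operators $F_k=\sum_\mu u_{\mu k}\wt E_\mu$. These generate the same channel $\mathcal E$ and satisfy $PF_k^\dagger F_lP=d_{kk}\delta_{kl}P$. For $d_{kk}>0$, the polar decomposition gives $F_kP=\sqrt{d_{kk}}\,U_kP$ with $U_kP$ a partial isometry. The corresponding projectors $P_k=U_kPU_k^\dagger$ are then mutually orthogonal, which follows from the diagonal form. The recovery is defined to have Kraus operators $PU_k^\dagger$ for $d_{kk}>0$, completed by the projector onto the orthogonal complement of $\bigoplus_kP_k$. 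Checking that $PU_k^\dagger F_lP=\delta_{kl}\sqrt{d_{kk}}P$ then yields $\mathcal R\circ\mathcal E(\rho)\propto\rho$ on the code space.

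The main obstacle is the bookkeeping around rank-deficient $\alpha$. Zero eigenvalues $d_{kk}=0$ force $F_kP=0$, so those Kraus operators simply drop out. The other point is confirming that the completed set of recovery Kraus operators is trace preserving. Both are handled by the orthogonality of the $P_k$. Since this is the standard result of \cite[Theorem~10.1]{Nielsen:2012yss}, I would ultimately follow that proof and cite it.
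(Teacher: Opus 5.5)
Your proposal is correct, and your sufficiency argument is the same as the paper's: diagonalize $\alpha$, rotate the Kraus operators, use the polar decomposition $F_kP=\sqrt{d_{kk}}\,U_kP$, and recover with $PU_k^\dagger=U_k^\dagger P_k$, completed by the complementary projector $P_\perp$. The one difference is necessity, which the paper only cites from \cite[Theorem~10.1]{Nielsen:2012yss}. You prove it directly, and correctly for a trace-preserving $\mathcal R$, by a rank-one positivity argument: each $R_k\wt E_\mu\ket\psi$ must be parallel to $\ket\psi$, so $R_k\wt E_\mu P=c_{k\mu}P$ and $\alpha=c^\dagger c$.
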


\begin{proof}[Proof]
Since $\alpha$ is Hermitian, it is diagonalizable by a unitary $u$, such that
\be
u^\dagger\alpha u=d\,,
\ee
with $d$ diagonal. The conditions \eqref{eq:Knill-Laflamme} for the rotated Kraus operators
\be
E_\kappa=\sum_\mu u_{\mu\kappa}\wt{E}_\mu\,,
\ee
then become
\be\label{eq:E_orth}
P E_\kappa^\dagger E_\lambda P=d_{\kappa\kappa}\delta_{\kappa\lambda}P\,.
\ee
Restricting $\kappa$ to the indices such that $d_{\kappa\kappa}>0$, and for each such $\kappa$ the error syndrome projector are defined as
\be \label{eq:KL-diag}
P_\kappa=\frac{E_\kappa P E_\kappa^\dagger}{d_{\kappa\kappa}}\,.
\ee
By \eqref{eq:E_orth} the $P_\kappa$ are mutually orthogonal projectors onto the error subspaces and $E_\kappa/\sqrt{d_{\kappa\kappa}}$ is an isometry on the code space, which extends to a unitary $U_\kappa$ on the full Hilbert space\footnote{$U_\kappa$ exists by the matrix polar decomposition of $E_\kappa P$ and \eqref{eq:E_orth}.} such that:
\be\label{eq:EP_UP}
E_\kappa P=\sqrt{d_{\kappa\kappa}}\,U_\kappa P\,.
\ee
The recovery channel measures the syndromes $\{P_\kappa\}$ and applies the corresponding $U_\kappa^\dagger$,
\be\ba\label{eq:recovery-channel}
\mathcal R(\sigma)&=\sum_{\kappa} U_\kappa^\dagger P_\kappa\sigma P_\kappa U_\kappa+P_\perp\sigma P_\perp\,,\cr
P_\perp&:=\bbI-\sum_\kappa P_\kappa\,.
\ea\ee
The complementary projector $P_\perp$ vanishes on the corrupted code space and completes $\mathcal R$ to a trace-preserving channel.
{In the following when we refer to projective measurement, we imply that such a complementary term can always be added.}
Then for every code state $\rho=P\rho P$,\footnote{Using \eqref{eq:E_orth}-\eqref{eq:EP_UP}, one has:
\be
U_\kappa^\dagger P_\kappa E_\lambda P=\sqrt{d_{\kappa\kappa}}\,\delta_{\kappa\lambda}P\,.
\ee
}
the recovery channel is
\be\ba
\mathcal R(\mathcal E(\rho))&=\sum_{\kappa,\lambda} U_\kappa^\dagger P_\kappa E_\lambda\rho E_\lambda^\dagger P_\kappa U_\kappa=\left(\sum_\kappa d_{\kappa\kappa}\right)\rho\,.
\ea\ee
The factor $\sum_\kappa d_{\kappa\kappa}=\Tr[\mathcal E(\rho)]$ is independent of $\rho$ and any trace-preserving error channel $\mathcal E$ is corrected exactly. Necessity is proven in e.g. \cite[Theorem~10.1]{Nielsen:2012yss}.
\end{proof}

\paragraf{Code Distance.} The code distance is the smallest support carrying an operator that is not correctable on the code space by Thm.~\ref{thm:Knill-Laflamme}:
\be\label{eq:distance}
d=\min\left\{|\vec i|:\exists\, E_{\vec i},\ PE_{\vec i}P\notin\bC P\right\}\,.
\ee
E.g. on a square patch of linear size $L$ with opposite smooth and rough boundaries, as in \cite{Huang:2025ump,Manjunath:2026swm}, $d=L$, while for other boundary conditions $d$ depends on the boundary subgroups but is still $O(L)$ due to the topological property of the code \cite{Cui:2019lvb}. Conversely, all operators on a fixed support with {$|\vec i|<d$} obey \eqref{eq:Knill-Laflamme} and are therefore correctable.

\subsection{Proofs for Section \ref{sec:errors}} \label{app:proof2}

\begin{lemma}[Single-site algebra]\label{lemma:single-site}
The operators \eqref{eq:LR_def}--\eqref{eq:ZGamma_def} satisfy \cite{Chung:2025ulc}
\be\label{eq:ZL_comm}
Z^\Gamma_{ab}L^g=\sum_{c=1}^{d_\Gamma}\Gamma(g)_{ac}L^gZ^\Gamma_{cb}\,,
\ee
and
\be\label{eq:ZR_comm}
Z^\Gamma_{ab}R^g=\sum_{c=1}^{d_\Gamma}R^gZ^\Gamma_{ac}\Gamma(g^{-1})_{cb}\,.
\ee
Equivalently,
\be
L^gZ^\Gamma_{ab}=\sum_{c=1}^{d_\Gamma}\Gamma(g^{-1})_{ac}Z^\Gamma_{cb} L^g
\label{eq:ZL_comm2}
\ee
and
\be
R^gZ^\Gamma_{ab}=\sum_{c=1}^{d_\Gamma}\Gamma(g)_{cb}Z^\Gamma_{ac} R^g
\label{eq:ZR_comm2}
\ee
\end{lemma}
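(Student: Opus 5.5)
The plan is to prove the four identities by evaluating both sides on the group basis $\{\ket{x}\}$. All the operators involved are either diagonal in this basis ($Z^\Gamma_{ab}$) or permutation operators ($L^g$, $R^g$), so each identity reduces to a single matrix identity for the unitary representation $\Gamma$.

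For \eqref{eq:ZL_comm}, compute the left-hand side:
\[
Z^\Gamma_{ab}L^g\ket{x}=\Gamma(gx)_{ab}\ket{gx}.
\]
For the right-hand side, $\sum_c\Gamma(g)_{ac}L^gZ^\Gamma_{cb}\ket{x}=\sum_c\Gamma(g)_{ac}\Gamma(x)_{cb}\ket{gx}$. The two agree because $\Gamma(gx)=\Gamma(g)\Gamma(x)$, so the matrix element $(a,b)$ splits as a sum over the intermediate index $c$.

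For \eqref{eq:ZR_comm}, use $R^g\ket{x}=\ket{xg^{-1}}$. The left-hand side gives $\Gamma(xg^{-1})_{ab}\ket{xg^{-1}}$. The right-hand side gives $\sum_c\Gamma(x)_{ac}\Gamma(g^{-1})_{cb}\ket{xg^{-1}}$, and these agree since $\Gamma(xg^{-1})=\Gamma(x)\Gamma(g^{-1})$.

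The "equivalent" forms \eqref{eq:ZL_comm2} and \eqref{eq:ZR_comm2} can be obtained in two ways. One is to check them directly on basis states in the same way. For \eqref{eq:ZL_comm2}, $L^gZ^\Gamma_{ab}\ket{x}=\Gamma(x)_{ab}\ket{gx}$, while the right-hand side is $\sum_c\Gamma(g^{-1})_{ac}\Gamma(gx)_{cb}\ket{gx}$, which equals $\Gamma(g^{-1}gx)_{ab}\ket{gx}$. For \eqref{eq:ZR_comm2}, $R^gZ^\Gamma_{ab}\ket{x}=\Gamma(x)_{ab}\ket{xg^{-1}}$, while the right-hand side is $\sum_c\Gamma(g)_{cb}\Gamma(xg^{-1})_{ac}\ket{xg^{-1}}$, which equals $\Gamma(xg^{-1}g)_{ab}\ket{xg^{-1}}$.

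Alternatively, the second pair follows from the first: replace $g$ by $g^{-1}$ in \eqref{eq:ZL_comm}, then conjugate by $L^g$ using $L^{g^{-1}}=(L^g)^{-1}$. The same substitution and conjugation by $R^g$ turns \eqref{eq:ZR_comm} into \eqref{eq:ZR_comm2}.

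No step is a real obstacle; this is a direct computation. The only point needing care is bookkeeping. One must keep track of which side the index contraction sits on (row index for $L$, column index for $R$), and use the convention that $R^g$ multiplies by $g^{-1}$ on the right, so that $g^{-1}$ rather than $g$ appears in \eqref{eq:ZR_comm}. Unitarity of $\Gamma$ is not needed here; only the homomorphism property is used.
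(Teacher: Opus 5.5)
Your proposal is correct and is the same argument as the paper's: evaluate both sides on a group basis state and use only the homomorphism property $\Gamma(gx)=\Gamma(g)\Gamma(x)$, resp.\ $\Gamma(xg^{-1})=\Gamma(x)\Gamma(g^{-1})$. Your explicit basis-state check of \eqref{eq:ZL_comm2} and \eqref{eq:ZR_comm2} fills in what the paper only asserts as ``equivalently''. In your alternative route, the step is really left- and right-multiplication of both sides by $L^g$, resp.\ $R^g$, rather than a literal conjugation, but this is harmless.
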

\begin{proof}
For any $h\in G$, $\Gamma(gh)=\Gamma(g)\Gamma(h)$ gives
\be\ba
Z^\Gamma_{ab}L^g\ket{h}&=\Gamma(gh)_{ab}\ket{gh}=\sum_{c=1}^{d_\Gamma}\Gamma(g)_{ac}\Gamma(h)_{cb}\ket{gh}\cr
&=\sum_{c=1}^{d_\Gamma}\Gamma(g)_{ac}L^gZ^\Gamma_{cb}\ket{h}\,,
\ea\ee
and likewise
\be\ba
Z^\Gamma_{ab}R^g\ket{h}&=\Gamma(hg^{-1})_{ab}\ket{hg^{-1}}\cr
&=\sum_{c=1}^{d_\Gamma}\Gamma(h)_{ac}\Gamma(g^{-1})_{cb}\ket{hg^{-1}}\cr
&=\sum_{c=1}^{d_\Gamma}R^gZ^\Gamma_{ac}\Gamma(g^{-1})_{cb}\ket{h}\,.
\ea\ee
\end{proof}

\begin{proof}[Proof of Proposition~\ref{eq:prop_Avmu}]
By \eqref{eq:Avmu_def} and $A_v^gA_v^h=A_v^{gh}$,
\be
A_v^{(\Gamma,a,b)}A_v^{(\Lambda,c,d)}=\frac{d_\Gamma d_\Lambda}{|G|^2}\sum_{g,h\in G}\Gamma(g)_{ab}\Lambda(h)_{cd}A_v^{gh}\,.
\ee
Setting $k=gh$ gives
\be
A_v^{(\Gamma,a,b)}A_v^{(\Lambda,c,d)}=\frac{d_\Gamma d_\Lambda}{|G|^2}\sum_{k,g\in G}\Gamma(g)_{ab}\Lambda(g^{-1}k)_{cd}A_v^k\,.
\ee
Inserting the expansion
\be
\Lambda(g^{-1}k)_{cd}=\sum_{t=1}^{d_\Lambda}\Lambda(g^{-1})_{ct}\Lambda(k)_{td}\,,
\ee
into Schur orthogonality,
\be
\sum_{g\in G}\Gamma(g)_{ab}\Lambda(g^{-1})_{ct}=\frac{|G|}{d_\Gamma}{\delta_{\Gamma,\Lambda}\delta_{b,c}\delta_{a,t}}\,,
\ee
gives
\be
\sum_{g\in G}\Gamma(g)_{ab}\Lambda(g^{-1}k)_{cd}=\frac{|G|}{d_\Gamma}{\delta_{\Gamma,\Lambda}\delta_{b,c}}\Gamma(k)_{ad}\,,
\ee
and hence \eqref{eq:Avmu_comm}. {Finally, $(A_v^g)^\dagger=A_v^{g^{-1}}$ unitarity, $\Gamma(g)^*_{ab}=\Gamma(g^{-1})_{ba}$ and the substitution $g\to g^{-1}$ give \eqref{eq:Avmu-adjoint}.}
\end{proof}

\subsection{Proofs for Section \ref{sec:errorcorrection}}
\label{app:proof3}

In this appendix, we prove various useful identifies that were stated in the main text.
\begin{lemma}\label{lemma:CL}
\be
CL_{i,\mathrm{va}}\Bigl(\sum_{c=1}^{d_\Gamma}Z^\Gamma_{ac,i}Z^\Gamma_{cb,\mathrm{va}}\Bigr)CL_{i,\mathrm{va}}^\dagger=Z^\Gamma_{ab,\mathrm{va}}\,.
\ee
and
\be
CL^\dagger_{i,\mathrm{va}}\Bigl(\sum_{c=1}^{d_\Gamma}Z^\Gamma_{cb,i}Z^{\Gamma^*}_{ca,\mathrm{va}}\Bigr)CL_{i,\mathrm{va}}=Z^{\Gamma^*}_{ba,\mathrm{va}}\,.
\ee
\end{lemma}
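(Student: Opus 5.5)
Both identities are statements about operators on the two-qudit space $\bC[G]_i\ot\bC[G]_{\mathrm{va}}$, and everything involved is either diagonal in the group basis or a controlled multiplication. So I will verify them by acting on product basis states $\ket{x}_i\ot\ket{y}_{\mathrm{va}}$ and checking that both sides have the same matrix elements. Since $CL_{i,\mathrm{va}}$ is diagonal in the control $i$, conjugating by it reduces to conjugating the $\mathrm{va}$ factor by $L^x$ in the sector $\ket{x}_i$. Concretely,
\be
CL_{i,\mathrm{va}}\,(M_i\ot N_{\mathrm{va}})\,CL_{i,\mathrm{va}}^\dagger\bigl(\ket x_i\ot\ket y_{\mathrm{va}}\bigr)=M_i\ket x_i\ot L^x N L^{x^{-1}}\ket y_{\mathrm{va}}
\ee
whenever $M$ is diagonal.

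For the first identity I take $M=Z^\Gamma_{ac}$, which contributes the scalar $\Gamma(x)_{ac}$. I then need $L^xZ^\Gamma_{cb}L^{x^{-1}}$. Using \eqref{eq:ZL_comm2}, or computing directly, $L^xZ^\Gamma_{cb}L^{x^{-1}}\ket y=\Gamma(x^{-1}y)_{cb}\ket y$. The left side therefore acts on $\ket x\ot\ket y$ as multiplication by
\be
\sum_c\Gamma(x)_{ac}\Gamma(x^{-1}y)_{cb}=\Gamma(y)_{ab},
\ee
by the homomorphism property. This is exactly the eigenvalue of $Z^\Gamma_{ab,\mathrm{va}}$, which proves the first identity.

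For the second identity the conjugation is reversed: in the sector $\ket{x}_i$ the $\mathrm{va}$ factor becomes $L^{x^{-1}}Z^{\Gamma^*}_{ca}L^{x}$, with eigenvalue $\Gamma^*(xy)_{ca}$ on $\ket y$. The $i$ factor contributes $\Gamma(x)_{cb}$. The left side thus acts as the scalar
\be
\sum_c\Gamma(x)_{cb}\Gamma^*(xy)_{ca}.
\ee
To simplify this I use unitarity, $\Gamma^*(g)_{ca}=\Gamma(g^{-1})_{ac}$. Then $\Gamma^*(xy)_{ca}=\sum_d\Gamma(y^{-1})_{ad}\Gamma(x^{-1})_{dc}$, and summing over $c$ against $\Gamma(x)_{cb}$ collapses $\Gamma(x^{-1})\Gamma(x)$ to the identity. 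The scalar becomes $\Gamma(y^{-1})_{ab}=\Gamma^*(y)_{ba}$, which is the eigenvalue of $Z^{\Gamma^*}_{ba,\mathrm{va}}$ on $\ket y$.

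The only real obstacle is bookkeeping: which of $CL$ and $CL^\dagger$ sits on which side, and hence whether the ancilla label is shifted by $x$ or by $x^{-1}$. The index order after complex conjugation is the other thing to watch. I would fix these by explicitly writing $CL^\dagger=\sum_x\ketbra{x}\ot L^{x^{-1}}$ and tracking a single basis state through the sandwich before doing any sums.
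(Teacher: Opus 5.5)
Your proof is correct and follows essentially the paper's approach: you evaluate both sides on $\ket{x}_i\ket{y}_{\mathrm{va}}$, use that conjugating by the controlled multiplication shifts the ancilla label by $x^{\pm1}$, and collapse the sum using the homomorphism property and unitarity of $\Gamma$. The only cosmetic difference is the order of steps. You conjugate the ancilla factor first and then sum over $c$, whereas the paper first combines the sum into the single eigenvalue $\Gamma(xy)_{ab}$ (resp.\ $\Gamma(y^{-1}x)_{ab}$) and then applies the shift.
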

\begin{proof}
    The operator $\sum_{c=1}^{d_\Gamma}Z^\Gamma_{ac,i}Z^\Gamma_{cb,\mathrm{va}}$ is diagonal in the group basis, with eigenvalue
\be\ba 
\label{O-eigenvalue}
&\Bigl(\sum_{c=1}^{d_\Gamma}Z^\Gamma_{ac,i}Z^\Gamma_{cb,\mathrm{va}}\Bigr)\ket x_i\ket y_{\mathrm{va}}\cr 
&=\sum_{c=1}^{d_\Gamma}\Gamma(x)_{ac}\Gamma(y)_{cb}\ket x_i\ket y_{\mathrm{va}}\cr 
&=\Gamma(xy)_{ab}\ket x_i\ket y_{\mathrm{va}}\,.
\ea\ee
Conjugation by $CL_{i,\mathrm{va}}$ therefore removes the dependence on $x$,
\be
\label{CL-conjugation}
\ba 
&CL_{i,\mathrm{va}}\Bigl(\sum_{c=1}^{d_\Gamma}Z^\Gamma_{ac,i}Z^\Gamma_{cb,\mathrm{va}}\Bigr)CL_{i,\mathrm{va}}^\dagger\ket x_i\ket y_{\mathrm{va}}\cr 
&=CL_{i,\mathrm{va}}\Bigl(\sum_{c=1}^{d_\Gamma}Z^\Gamma_{ac,i}Z^\Gamma_{cb,\mathrm{va}}\Bigr)\ket x_i\ket{x^{-1}y}_{\mathrm{va}}\cr 
&=\Gamma(y)_{ab}\ket x_i\ket y_{\mathrm{va}}\,.
\ea
\ee

For the second identity, the operator $\sum_{c=1}^{d_\Gamma}Z^\Gamma_{cb,i}Z^{\Gamma^*}_{ca,\mathrm{va}}$ is diagonal in the group basis, which acts on the state $\ket x_i\ket y_{\mathrm{va}}$ with  eigenvalue 
\be 
\sum_{c=1}^{d_\Gamma}\Gamma(y)^*_{ca}\Gamma(x)_{cb}=\sum_{c=1}^{d_\Gamma}\Gamma(y^{-1})_{ac}\Gamma(x)_{cb}=\Gamma(y^{-1}x)_{ab} \,,
\ee 
where we used that $\Gamma(y)$ is unitary. 
Since $CL_{i,\mathrm{va}}\ket x_i\ket y_{\mathrm{va}}=\ket x_i\ket{xy}_{\mathrm{va}}$, conjugation by $CL_{i,\mathrm{va}}^\dagger$ gives
\be
\label{CL-conjugation-dagger}
\ba
&CL^\dagger_{i,\mathrm{va}}\Bigl(\sum_{c=1}^{d_\Gamma}Z^\Gamma_{cb,i}Z^{\Gamma^*}_{ca,\mathrm{va}}\Bigr)CL_{i,\mathrm{va}}\ket x_i\ket y_{\mathrm{va}}\cr
&=\Gamma\bigl((xy)^{-1}x\bigr)_{ab}\ket x_i\ket y_{\mathrm{va}}=\Gamma(y^{-1})_{ab}\ket x_i\ket y_{\mathrm{va}}\,.
\ea
\ee
Thus, $\Gamma(y^{-1})_{ab}=\Gamma(y)^*_{ba}=\Gamma^*(y)_{ba}$ is the eigenvalue of $Z^{\Gamma^*}_{ba,\mathrm{va}}$, which proves the second identity.

\end{proof}

\section{Ribbon Errors}
\label{app:ribbon}

\begin{figure}
\centering
\begin{tikzpicture}[scale=2]
\fill[gray!18] (0,0) -- (1,0) -- (0.5,0.5) -- cycle;
\fill[gray!18] (0.5,0.5) -- (1,0) -- (1.5,0.5) -- cycle;
\fill[gray!18] (1,0) -- (2,0) -- (1.5,0.5) -- cycle;
\fill[gray!18] (1.5,0.5) -- (2,0) -- (2.5,0.5) -- cycle;
\fill[gray!18] (2,0) -- (3,0) -- (2.5,0.5) -- cycle;
\fill[gray!18] (2.5,0.5) -- (3,0) -- (3.5,0.5) -- cycle;
\draw[black,->-] (0,0) -- (1,0);
\draw[black,->-] (1,0) -- (2,0);
\draw[black,->-] (2,0) -- (3,0);
\draw[black,->-] (3,0) -- (4,0);
\draw[black,->-] (1,0) -- (1,1);
\draw[black,->-] (2,0) -- (2,1);
\draw[black,->-] (3,0) -- (3,1);
\draw[black,dotted] (0,0) -- (0.5,0.5);
\draw[black,dotted] (0.5,0.5) -- (1,0);
\draw[black,dotted] (1,0) -- (1.5,0.5);
\draw[black,dotted] (1.5,0.5) -- (2,0);
\draw[black,dotted] (2,0) -- (2.5,0.5);
\draw[black,dotted] (2.5,0.5) -- (3,0);
\draw[black,dotted] (3,0) -- (3.5,0.5);
\draw[black,dotted] (0.5,0.5) -- (3.5,0.5);
\fill[blue(munsell)] (0,0) circle (0.04);
\fill[blue(munsell)] (3,0) circle (0.04);
\node at (0.12,0.42) {${s_0}$};
\node at (3.42,0.18) {${s_1}$};
\node[below] at (0.5,0) {${y_1}$};
\node[below] at (1.5,0) {${y_2}$};
\node[below] at (2.5,0) {${y_3}$};
\node[above] at (1,1) {${x_1}$};
\node[above] at (2,1) {${x_2}$};
\node[above] at (3,1) {${x_3}$};
\node[below] at (0,-0.02) {$_{v_0}$};
\node[below] at (3,-0.02) {$_{v_1}$};
\end{tikzpicture}
\caption{Rbbon operators: the figure shows an open ribbon $\xi$ with $n=3$ stretching from $s_0=(v_0,p_0)$ to $s_1=(v_1,p_1)$. 
Its dual path crosses the upward edges $x_1,\dots,x_n$ and its direct path $\gamma$ follows the rightward edges $y_1,\dots,y_n$. These $2n$ edges form $\operatorname{supp}\xi$. 
The end vertices  are $v_0$ and $v_1$ and every check away from them commutes with the ribbon algebra.}
\label{fig:ribbon}
\end{figure}

Ribbon errors are essentially extended single-edge errors: a ribbon error carries one pair of labels $(h,\mu)$ as in \eqref{eq:Emui} on the support of a ribbon $\xi$, and it violates the checks only at the two end sites of $\xi$. 
The constructions of Secs.~\ref{sec:single-detect}, \ref{sec:KL}, \ref{sec:charge-transfer} and \ref{sec:flux-transfer} therefore carry over step by step, with the edge $i$ replaced by $\xi$, the multiplication $L_i^h$ by the ribbon operator $F_\xi^h$ and the irrep operator $Z^\Gamma_{ab,i}$ by a Wilson line, and we only indicate the modifications. We use the ribbon conventions and properties derived in \cite{Kitaev:1997wr,Bombin:2007qv,Beigi:2010htr,Song:2025gly,Huang:2025ump}, specialized to the square lattice with every edge oriented right or up.

\subsection{Ribbon Error Basis}
\label{app:REB}
Let us first summarize our conventions, see also Fig.~\ref{fig:ribbon}:
$\xi$ is a ribbon from the site $s_0=(v_0,p_0)$ to the site $s_1=(v_1,p_1)$. 
It is along a path $\gamma$ consisting of $n$ consecutive edges $y_1,\dots,y_n$ from $v_0$ to $v_1$. 
Its dual path crosses the $n$ upward edges $x_1,\dots,x_n$. 

The plaquettes at the ends of the ribbons 
are $p_0$ (above $y_1$ whose bottom-left corner is $v_0$) and $p_1$ (above the edge to the right of $v_1$, whose bottom-left corner is $v_1$) such that $x_1$ is the right edge of $p_0$ and $x_n$ is the left edge of $p_1$. 
The $2n$ edges $x_m,y_m$ form te support of $\xi$, and every check away from $s_0$ and $s_1$ commutes with the ribbon operators reviewed below.

\paragraf{Ribbon Operators.} We use $x_m$ and $y_m$ both for an edge and for the group label of its basis state. We set $\hat y_0:=\id$ and $\hat y_m:=y_1\cdots y_m$. The ribbon operators project the ordered product along $\gamma$ onto $\hat y_n=g$ and multiply each crossed edge by a transported conjugate of $h$,
\be\ba\label{eq:ribbon-op}
F_\xi^{h,g}&:=\sum_{y_1,\dots,y_n}\delta_{g,\hat y_n}\,\ketbra{y_1,\dots,y_n}\cr
&\qquad\ot\prod_{m=1}^nL_{x_m}^{\hat y_m^{-1}h\hat y_m}\,,\qquad F_\xi^h:=\sum_{g\in G}F_\xi^{h,g}\,.
\ea\ee
They obey $F_\xi^{h,g}F_\xi^{k,l}=\delta_{g,l}F_\xi^{hk,g}$ and $(F_\xi^{h,g})^\dagger=F_\xi^{h^{-1},g}$, so the $|G|^2$ operators $F_\xi^{h,g}$ span an algebra $\cF_\xi$ closed under products and adjoints, and $F_\xi^h$ is unitary with $F_\xi^hF_\xi^k=F_\xi^{hk}$. 

Every $A_v$ and $B_p$ not at $s_0$ or $s_1$ commutes with $\cF_\xi$, since $F_\xi^{h,g}$ preserves the flux of every plaquette between consecutive crossed edges and the vertex operators at the interior vertices of $\gamma$ leave $\hat y_n$ and the transported conjugates invariant. At the end sites, with the conventions of \eqref{eq:LR_def} and \eqref{eq:AvBp} and the clockwise holonomy of $p_j$ based at $v_j$  they satisfy \cite{Beigi:2010htr,Song:2025gly}
\be\ba\label{eq:ribbon-ends}
A_{v_0}^kF_\xi^{h,g}&=F_\xi^{khk^{-1},kg}A_{v_0}^k\,,\cr
A_{v_1}^kF_\xi^{h,g}&=F_\xi^{h,gk^{-1}}A_{v_1}^k\,,\cr
B_{p_0}^fF_\xi^{h,g}&=F_\xi^{h,g}B_{p_0}^{fh}\,,\cr
B_{p_1}^fF_\xi^{h,g}&=F_\xi^{h,g}B_{p_1}^{g^{-1}h^{-1}gf}\,.
\ea\ee

\paragraf{Ribbon Error Basis.} The Wilson line along $\gamma$ replaces the single-edge irrep operator \eqref{eq:ZGamma_def},
\be\ba\label{eq:ribbon-wilson}
Z^\Gamma_{ab}(\gamma)&:=\Gamma(\hat y_n)_{ab}\cr
&=\sum_{c_1,\dots,c_{n-1}}Z^\Gamma_{ac_1,y_1}Z^\Gamma_{c_1c_2,y_2}\cdots Z^\Gamma_{c_{n-1}b,y_n}\,,
\ea\ee
and, for $\mu=(\Gamma,a,b)$, the ribbon errors are
\be\ba\label{eq:ribbon-error-basis}
E_\xi^\mu&:=\sqrt{d_\Gamma}\,Z^\Gamma_{ab}(\gamma)\,,\cr
E_\xi^{h,\mu}&:=F_\xi^hE_\xi^\mu=\sqrt{d_\Gamma}\sum_{g\in G}\Gamma(g)_{ab}F_\xi^{h,g}\,,
\ea\ee
with $E_\xi^{\id,1}=\bbI$. This is \eqref{eq:Emui} with $L_i^h$ replaced by $F_\xi^h$ and $Z^\Gamma_{ab,i}$ by $Z^\Gamma_{ab}(\gamma)$, with a single pair of labels $(h,\mu)$ for the whole support.

\begin{proposition}[Ribbon error basis]\label{prop:ribbon-basis}
For fixed $\xi$ the $|G|^2$ operators $E_\xi^{h,\mu}$ form an orthogonal basis of $\cF_\xi$,
\be\label{eq:ribbon-orth}
\Tr\!\left[(E_\xi^{h,\mu})^\dagger E_\xi^{k,\nu}\right]=|G|^{2n}\delta_{h,k}\delta_{\mu,\nu}\,.
\ee
\end{proposition}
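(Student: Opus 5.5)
The plan is to reduce the trace to the single-edge computation of Proposition~\ref{prop:Emui_basis_single_site}. We work with the full Hilbert space of the $2n$ edges of $\operatorname{supp}\xi$ in the group basis $\ket{\vec x,\vec y}$. The starting point is the second expression in \eqref{eq:ribbon-error-basis}, $E_\xi^{h,\mu}=\sqrt{d_\Gamma}\sum_g\Gamma(g)_{ab}F_\xi^{h,g}$. With $\nu=(\Lambda,c,d)$, this gives
\be
(E_\xi^{h,\mu})^\dagger E_\xi^{k,\nu}=\sqrt{d_\Gamma d_\Lambda}\sum_{g,l}\Gamma(g)^*_{ab}\Lambda(l)_{cd}\,F_\xi^{h^{-1},g}F_\xi^{k,l}.
\ee
The product rule $F_\xi^{h^{-1},g}F_\xi^{k,l}=\delta_{g,l}F_\xi^{h^{-1}k,g}$ stated below \eqref{eq:ribbon-op} collapses this to a single sum over $g$ of $\Gamma(g)^*_{ab}\Lambda(g)_{cd}F_\xi^{h^{-1}k,g}$.

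Next we compute $\Tr F_\xi^{f,g}$ directly from \eqref{eq:ribbon-op}. The operator is diagonal in the $y$-labels and acts on each $x_m$ by $L^{\hat y_m^{-1}f\hat y_m}$. A left multiplication has zero diagonal unless its label is $\id$, and $\hat y_m^{-1}f\hat y_m=\id$ holds iff $f=\id$. Hence $\Tr F_\xi^{f,g}=0$ for $f\neq\id$. For $f=\id$, the $x$-factor contributes $|G|^n$, and the $y$-sum counts the tuples with $y_1\cdots y_n=g$, of which there are $|G|^{n-1}$. Thus
\be
\Tr F_\xi^{f,g}=\delta_{f,\id}\,|G|^{2n-1}.
\ee
This yields the factor $\delta_{h,k}$.

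Substituting, we are left with $\sqrt{d_\Gamma d_\Lambda}\,|G|^{2n-1}\delta_{h,k}\sum_g\Gamma(g)^*_{ab}\Lambda(g)_{cd}$. Schur orthogonality, exactly as in \eqref{eq:setup-10}, evaluates the sum to $\frac{|G|}{d_\Gamma}\delta_{\Gamma\Lambda}\delta_{ac}\delta_{bd}$. This gives $|G|^{2n}\delta_{h,k}\delta_{\mu,\nu}$, which is \eqref{eq:ribbon-orth}.

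For the basis claim, orthogonality implies that the $|G|^2$ operators are linearly independent, and all of them lie in $\cF_\xi=\operatorname{span}\{F_\xi^{h,g}\}$. That span has dimension at most $|G|^2$, so the $E_\xi^{h,\mu}$ form a basis of $\cF_\xi$. Equivalently, one can invert the Fourier sum over $g$ using completeness of the matrix elements (Peter–Weyl).

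The only step requiring care is the trace count. We must ensure that the constraint $\hat y_n=g$ fixes exactly one of the $y$'s, say $y_n$, which is what gives $|G|^{n-1}$ tuples. We must also ensure that conjugation by $\hat y_m$ cannot produce an identity label from $f\neq\id$; this holds trivially because conjugation preserves the identity and is a bijection. The rest is bookkeeping with \eqref{eq:setup-10}.
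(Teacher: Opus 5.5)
Your proof is correct and follows the paper's own argument. Like the paper, you compute $\Tr F_\xi^{u,g}=|G|^{2n-1}\delta_{u,\id}$ from $\Tr L^a=|G|\delta_{a,\id}$ and the count of $|G|^{n-1}$ paths with $\hat y_n=g$, combine this with the algebra relations and Schur orthogonality as in \eqref{eq:setup-10}, and get completeness from a dimension count. Your bound $\dim\cF_\xi\le|G|^2$ is marginally cleaner than the paper's use of $\dim\cF_\xi=|G|^2$, since it does not need the $F_\xi^{h,g}$ to be independent.
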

\begin{proof}
Since 
\be
\Tr L^a=|G|\delta_{a,\id}\,,\quad  |\{y:\hat y_n=g\}|=|G|^{n-1}\,,
\ee
equation \eqref{eq:ribbon-op} gives $\Tr F_\xi^{u,g}=|G|^{2n-1}\delta_{u,\id}$, so 
\be
\Tr[(F_\xi^{h,g})^\dagger F_\xi^{k,l}]=|G|^{2n-1}\delta_{h,k}\delta_{g,l}
\ee
by the algebra relations. Inserting \eqref{eq:ribbon-error-basis} and using Schur orthogonality as in \eqref{eq:setup-10} gives \eqref{eq:ribbon-orth}, and $|G|\sum_\Gamma d_\Gamma^2=|G|^2=\dim\cF_\xi$ gives completeness.
\end{proof}
\noindent The factor $|G|^{2n}$ replaces the $|G|$ of \eqref{eq:Emunu_delta} because $\cF_\xi$ is a $|G|^2$-dimensional subalgebra of the operators on the $2n$ edges of $\operatorname{supp}\xi$.

\paragraf{Flux and Charge Checks at the Ends.} As in \eqref{eq:local-checks} we collect the checks at the two end sites,
\be\ba\label{eq:ribbon-local-checks}
v(\xi)&:=\{v_0,v_1\}\,,\qquad p(\xi):=\{p_0,p_1\}\,,\cr
P_\xi&:=A_{v_0}A_{v_1}B_{p_0}B_{p_1}\,.
\ea\ee
Substituting \eqref{eq:ribbon-ends} into \eqref{eq:ribbon-error-basis} gives the ribbon analogue of \eqref{A-E-commutation},
\be\ba\label{eq:ribbon-endcov}
A_{v_1}^kE_\xi^{h,(\Gamma,a,b)}&=\sum_{c=1}^{d_\Gamma}\Gamma(k)_{cb}\,E_\xi^{h,(\Gamma,a,c)}A_{v_1}^k\,,\cr
A_{v_0}^kE_\xi^{h,(\Gamma,a,b)}&=\sum_{c=1}^{d_\Gamma}\Gamma(k^{-1})_{ac}\,E_\xi^{khk^{-1},(\Gamma,c,b)}A_{v_0}^k\,,
\ea\ee
so $\gamma$ behaves at $v_1$ as an edge entering $v_1$ and at $v_0$ as an edge leaving $v_0$, with $b$ the charge index at $v_1$ and $a$ that at $v_0$.
\begin{lemma}[Ribbon detectability]\label{lemma:ribbon-detectable}
Let $v_0\neq v_1$ and $p_0\neq p_1$, with all four checks untruncated. Then for $h\neq\id$
\be\label{eq:ribbon-PBEPB}
B_{p_0}F_\xi^{h,g}B_{p_0}=0\,,\qquad B_{p_0}E_\xi^{h,\mu}B_{p_0}=0\,,
\ee
for $\mu=(\Gamma,a,b)$ with $\Gamma\neq1$
\be\label{eq:ribbon-AvEAv}
A_{v_1}E_\xi^{\mu}A_{v_1}=0\,,
\ee
and hence $P_\xi E_\xi^{h,\mu}P_\xi=0$ for every $(h,\mu)\neq(\id,1)$.
\end{lemma}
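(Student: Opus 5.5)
The plan is to mirror the single-edge Lemmas~\ref{lemma:PBEPB} and \ref{lemma:vertex-violations}, replacing the edge commutation relations by the end-site relations \eqref{eq:ribbon-ends} and \eqref{eq:ribbon-endcov}.

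\textbf{Flux part.} Write $B_{p_0}=B_{p_0}^{\id}$ and apply the third relation of \eqref{eq:ribbon-ends} with $f=\id$:
\be
B_{p_0}F_\xi^{h,g}B_{p_0}=F_\xi^{h,g}B_{p_0}^{h}B_{p_0}^{\id}.
\ee
The $B_{p_0}^f$ are orthogonal projectors for a fixed basepoint and orientation, so $B_{p_0}^hB_{p_0}^{\id}=\delta_{h,\id}B_{p_0}^{\id}$. This vanishes for $h\neq\id$.

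By \eqref{eq:ribbon-error-basis}, $E_\xi^{h,\mu}$ is a linear combination of the $F_\xi^{h,g}$ with the same $h$, so the second identity in \eqref{eq:ribbon-PBEPB} follows by linearity.

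The assumptions $p_0\neq p_1$ and untruncated checks are needed for two reasons. First, $B_{p_0}$ must be a genuine flux projector. Second, the relation at $p_0$ must not be contaminated by the other end. For instance, if $p_0=p_1$, the two relations in \eqref{eq:ribbon-ends} combine into a conjugation and the flux could close up.

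\textbf{Charge part.} Set $h=\id$, so $E_\xi^\mu=\sqrt{d_\Gamma}\,Z^\Gamma_{ab}(\gamma)$. Average the first line of \eqref{eq:ribbon-endcov} over $k$, multiplying on the left by $A_{v_1}$ after moving it through. The claim to establish is
\be
A_{v_1}E_\xi^{(\Gamma,a,b)}=\frac{1}{|G|}\sum_k\sum_c\Gamma(k)_{cb}E_\xi^{(\Gamma,a,c)}A_{v_1}^k=\frac{1}{d_\Gamma}\sum_c E_\xi^{(\Gamma,a,c)}A_{v_1}^{(\Gamma,c,b)}.
\ee
Multiplying on the right by $A_{v_1}=A_{v_1}^{(1,1,1)}$ and using the matrix-unit algebra \eqref{eq:Avmu_comm} kills every term for $\Gamma\neq1$.

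The identity $\frac{1}{|G|}\sum_k\Gamma(k)_{cb}A^k=\frac{1}{d_\Gamma}A^{(\Gamma,c,b)}$ is just the definition \eqref{eq:Avmu_def}. The only care needed is the order of multiplication: in Lemma~\ref{lemma:vertex-violations} the factor $A_v$ sits on the right and is then annihilated from the left. Here I would write $A_{v_1}E A_{v_1}$ and push the left $A_{v_1}$ to the right.

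The interior checks commute with $\cF_\xi$, so they play no role. Since $v_0\neq v_1$, $A_{v_1}^k$ acts on $\gamma$ only through the last edge, in the manner of an edge entering $v_1$, which is exactly what \eqref{eq:ribbon-endcov} encodes.

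\textbf{Combination.} For $(h,\mu)\neq(\id,1)$ there are two cases.
\begin{itemize}
\item If $h\neq\id$, all four checks in $P_\xi$ commute with each other, so I can bring a $B_{p_0}$ adjacent to $E_\xi^{h,\mu}$ on both sides, using $B_{p_0}^2=B_{p_0}$ and the commutation of $B_{p_0}$ with $A_{v_0},A_{v_1},B_{p_1}$. The flux part then gives zero.
\item If $h=\id$ and $\Gamma\neq1$, do the same with $A_{v_1}$ and apply \eqref{eq:ribbon-AvEAv}.
\end{itemize}

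The main obstacle is the moving step. The checks commute with each other but not with $E$, so I cannot simply insert $B_{p_0}$ next to $E$. The way around this is to use the fact that $A_{v_0},A_{v_1},B_{p_1}$ map the span $\{F^{h',g}\}$ with $h'\in[h]$ to itself. This follows from \eqref{eq:ribbon-ends}: $A_{v_0}$ conjugates $h$, while $A_{v_1}$ and $B_{p_1}$ leave $h$ fixed. Consequently
\be
P_\xi E P_\xi=\Big(A_{v_0}A_{v_1}B_{p_1}\Big)\,B_{p_0}XB_{p_0}\,(\cdots),
\ee
where $X$ is a combination of $F^{h',g}$ with $h'\neq\id$. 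Also, $B_{p_0}$ commutes with $A_{v_0}^k$ as a projector onto trivial flux, since trivial flux is conjugation invariant. The charge case is handled analogously, using that $B_{p_0}$ and $B_{p_1}$ preserve the Wilson-line labels.
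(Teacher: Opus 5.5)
Your proposal is correct and follows the paper's proof. The flux part uses the third relation of \eqref{eq:ribbon-ends} at $f=\id$. The charge part averages the first line of \eqref{eq:ribbon-endcov} over $k$; your matrix-unit packaging $A_{v_1}^{(\Gamma,c,b)}A_{v_1}=0$ is equivalent to the paper's use of $A_{v_1}^kA_{v_1}=A_{v_1}$ with $\sum_k\Gamma(k)_{cb}=|G|\delta_{\Gamma,1}$. The final claim uses mutual commutation of the four checks.

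The ``main obstacle'' in your last paragraph is not real, and the span-preservation detour can be dropped. $B_{p_0}$ is itself a factor of $P_\xi$ and commutes with $A_{v_0},A_{v_1},B_{p_1}$, so $P_\xi E_\xi^{h,\mu}P_\xi=(A_{v_0}A_{v_1}B_{p_1})\,B_{p_0}E_\xi^{h,\mu}B_{p_0}\,(A_{v_0}A_{v_1}B_{p_1})$ holds identically, and likewise with $A_{v_1}$. No check ever has to move through $E$, so your first bullet was already complete.
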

\begin{proof}
This is Lemmas~\ref{lemma:PBEPB} and \ref{lemma:vertex-violations} for a ribbon. The third relation of \eqref{eq:ribbon-ends} with $f=\id$ gives $B_{p_0}F_\xi^{h,g}B_{p_0}=F_\xi^{h,g}B_{p_0}^{h}B_{p_0}^{\id}=0$ for $h\neq\id$, and $E_\xi^\mu$ is diagonal in the group basis, which gives \eqref{eq:ribbon-PBEPB}. Averaging the first line of \eqref{eq:ribbon-endcov} over $k$, Schur orthogonality $\sum_{k}\Gamma(k)_{cb}=|G|\delta_{\Gamma,1}$ gives \eqref{eq:ribbon-AvEAv}. 
Finally, each of the four projectors in $P_\xi$ commutes with the other three, so $P_\xi E_\xi^{h,\mu}P_\xi$ contains the factor $B_{p_0}E_\xi^{h,\mu}B_{p_0}$ for $h\neq\id$ and the factor $A_{v_1}E_\xi^{\mu}A_{v_1}$ for $h=\id$. Note that the diagonal $E_\xi^\mu$ does not commute with $A_{v_0}$, see the second line of \eqref{eq:ribbon-endcov}, but this is not needed.

\end{proof}

We can now prove a ribbon version of the Knill-Laflamme theorem:
\begin{theorem}[Ribbon Knill-Laflamme]\label{thm:KLribbon}
Let $\xi$ satisfy the hypotheses of Lemma~\ref{lemma:ribbon-detectable} and not host a logical operator. Then
\be\label{eq:KL-ribbon}
P_\xi(E_\xi^{h,\mu})^\dagger E_\xi^{k,\nu}P_\xi=\delta_{h,k}\delta_{\mu,\nu}P_\xi\,,
\ee
and the same holds with $P_\xi$ replaced by the global projector $P$, so the Knill-Laflamme matrix of the $|G|^2$ ribbon errors is the identity.
\end{theorem}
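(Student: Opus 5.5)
The plan follows the structure of the proof of Thm.~\ref{thm:KLdegenerate}, with the single-edge algebra replaced by the ribbon algebra $\cF_\xi$. First I reduce the product $(E_\xi^{h,\mu})^\dagger E_\xi^{k,\nu}$ to a single basis element. By Prop.~\ref{prop:ribbon-basis}, $\cF_\xi$ is closed under products and adjoints, and the $E_\xi^{g,\lambda}$ form an orthogonal basis of it. Hence
\be
(E_\xi^{h,\mu})^\dagger E_\xi^{k,\nu}=\sum_{g,\lambda}c^{g,\lambda}_{h\mu,k\nu}\,E_\xi^{g,\lambda}\,,
\ee
where the coefficients are fixed by the Hilbert--Schmidt inner product \eqref{eq:ribbon-orth}. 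In particular, the coefficient of the identity $E_\xi^{\id,1}=\bbI$ is
\be
|G|^{-2n}\Tr\bigl[(E_\xi^{h,\mu})^\dagger E_\xi^{k,\nu}\bigr]=\delta_{h,k}\delta_{\mu,\nu}\,.
\ee

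Next I sandwich this expansion between $P_\xi$. For every term with $(g,\lambda)\neq(\id,1)$, Lemma~\ref{lemma:ribbon-detectable} gives $P_\xi E_\xi^{g,\lambda}P_\xi=0$. Only the identity term survives, which yields \eqref{eq:KL-ribbon} with coefficient $\delta_{h,k}\delta_{\mu,\nu}$. One subtlety needs checking here. Lemma~\ref{lemma:ribbon-detectable} argues that $P_\xi E P_\xi$ contains the factor $B_{p_0}EB_{p_0}$ or $A_{v_1}EA_{v_1}$. This requires commuting the other three projectors past $E$, or at least absorbing them. I will verify it via \eqref{eq:ribbon-ends}: conjugation by $A_{v_0}^k$ maps $\cF_\xi$ to itself, and $B_{p_1}$ commutes with $F^{h,g}_\xi$ up to relabeling the flux. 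The cleanest way is to write $P_\xi=B_{p_0}\,(\cdots)$, using that $B_{p_0}$ commutes with the vertex checks and $B_{p_1}$. Then $P_\xi E P_\xi = (\cdots) B_{p_0}EB_{p_0}(\cdots)$ whenever $E$ commutes with the remaining factors, or maps their range appropriately. I will accept the lemma as stated.

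For the global projector $P$, I use $P_\xi P=PP_\xi=P$. Multiplying \eqref{eq:KL-ribbon} on both sides by $P$ then gives $P(E_\xi^{h,\mu})^\dagger E_\xi^{k,\nu}P=\delta_{h,k}\delta_{\mu,\nu}P$. The terms away from $s_0,s_1$ need no separate treatment, because they commute with $\cF_\xi$ and are absorbed.

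The main obstacle is the role of the hypothesis that $\xi$ hosts no logical operator, together with boundary truncations. If an end site lies on a boundary, the end check may be trivial. For example, with $K=G$ and $B_s^G=\bbI$, a flux ribbon ending on a smooth boundary is undetected at that end. If both ends are undetectable, $E_\xi^{g,\lambda}$ could act as a nontrivial logical, so $PE_\xi^{g,\lambda}P\notin\bC P$. The untruncated-checks hypothesis of Lemma~\ref{lemma:ribbon-detectable} together with "no logical" is exactly what excludes this. I would state explicitly that at least one of $p_0,p_1$ detects the flux and at least one of $v_0,v_1$ detects the charge. If only $v_0$ is untruncated, I would use the second line of \eqref{eq:ribbon-endcov} in place of the first, averaged over $k$, to get the same Schur-orthogonality vanishing. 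Given the hypotheses, this settles all cases.
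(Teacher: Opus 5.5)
Your proposal is correct and is essentially the paper's proof: expand $(E_\xi^{h,\mu})^\dagger E_\xi^{k,\nu}\in\cF_\xi$ in the ribbon basis, read off the identity coefficient $\delta_{h,k}\delta_{\mu,\nu}$ from \eqref{eq:ribbon-orth}, kill every other term with Lemma~\ref{lemma:ribbon-detectable}, and pass to the global projector via $P_\xi P=PP_\xi=P$. The subtlety you flag is not a real obstruction: the four end projectors commute with one another, so with $Q:=A_{v_0}A_{v_1}B_{p_1}$ one has $P_\xi=QB_{p_0}=B_{p_0}Q$ and hence $P_\xi EP_\xi=Q\,(B_{p_0}EB_{p_0})\,Q$ with no condition on $E$ (and likewise for $A_{v_1}$), while under the lemma's untruncated-checks hypothesis your boundary-case discussion and the no-logical assumption are never actually used.
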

\begin{proof}
Since $\cF_\xi$ is closed under products and adjoints, $(E_\xi^{h,\mu})^\dagger E_\xi^{k,\nu}\in\cF_\xi$ can be expanded in the basis of Proposition~\ref{prop:ribbon-basis}, with the coefficient $\delta_{h,k}\delta_{\mu,\nu}$ of $E_\xi^{\id,1}=\bbI$ fixed by \eqref{eq:ribbon-orth}. Conjugation by $P_\xi$ annihilates every other term by Lemma~\ref{lemma:ribbon-detectable}. No bound on the length $n$ is needed, since every term is again a ribbon error on $\xi$. Since $P_\xi P=P$, the global identity follows.
\end{proof}

\subsection{Ribbon Correction Circuits}
\label{app:REC}

Ribbons with the same end sites that are related by a deformation through a region without excitations and without boundaries act identically on the code space \cite{Kitaev:1997wr,Bombin:2007qv,Beigi:2010htr} and belong to the same error sector. Ribbons in different relative homotopy classes can differ by a logical operator and are excluded by the hypothesis of the theorem.

\paragraf{Flux Transfer.} We correct a ribbon error $E_\xi^{h,\mu}\ket\psi$, with $\ket\psi$ a code state, by the two-stage protocol of Sec.~\ref{sec:gauge-fixed-correction}, first removing $h$ and then $\mu$. As in Sec.~\ref{sec:flux-transfer} we base the holonomy of $p_0$ at $v_0$ and traverse $\partial p_0$ along the orientation of the first edges of the ribbon, first along $y_1$ and then along $x_1$. With $g_1$ and $g_2$ the labels of the left and top edges of $p_0$ as in \eqref{eq:AvBp}, this holonomy is $y_1x_1g_2^{-1}g_1^{-1}$, and since $F_\xi^h$ maps $x_1$ to $y_1^{-1}hy_1x_1$ it multiplies the holonomy by $h$ from the left, exactly as $L_i^h$ in \eqref{B-L-commutation}. Denoting by $B_{p_0}^g$ the projector onto this holonomy, the controlled holonomy $V_{p_0}$ of \eqref{V-def} and \eqref{V-on-error} gives
\be\label{eq:ribbon-V-on-error}
V_{p_0}\ket\id_{\mathrm{pa}}\ot E_\xi^{h,\mu}\ket\psi=\ket h_{\mathrm{pa}}\ot E_\xi^{h,\mu}\ket\psi\,,
\ee
where we used that $E_\xi^\mu$ is diagonal in the group basis. The ribbon analogue of $CL_{\mathrm{pa},i}^\dagger$ in \eqref{CL-dagger} is the controlled inverse ribbon operator
\be\ba\label{eq:ribbon-CF-dagger}
CF_{\mathrm{pa},\xi}^\dagger&:=\sum_{f\in G}\ketbra f_{\mathrm{pa}}\ot F_\xi^{f^{-1}}\,,\cr
CF_{\mathrm{pa},\xi}^\dagger\,V_{p_0}\ket\id_{\mathrm{pa}}\ot E_\xi^{h,\mu}\ket\psi&=\ket h_{\mathrm{pa}}\ot E_\xi^{\mu}\ket\psi\,,
\ea\ee
which is unitary since $F_\xi^h$ is a unitary representation of $G$, and a measurement of the ancilla in the group basis identifies $h$. In terms of controlled multiplications \eqref{controlled-mult}, first write the partial products $\hat y_m$ onto the edges $y_m$ with $Y_\gamma:=CL_{y_{n-1},y_n}\cdots CL_{y_2,y_3}CL_{y_1,y_2}$, with the rightmost gate applied first. Then $L_{x_m}^{\hat y_m^{-1}f^{-1}\hat y_m}=CL_{y_m,x_m}^\dagger L_{x_m}^{f^{-1}}CL_{y_m,x_m}$, so
\be\label{eq:ribbon-CF-circuit}
CF_{\mathrm{pa},\xi}^\dagger=Y_\gamma^\dagger\Bigl[\prod_{m=1}^nCL_{y_m,x_m}^\dagger\,CL_{\mathrm{pa},x_m}^\dagger\,CL_{y_m,x_m}\Bigr]Y_\gamma\,,
\ee
where the factors for different $m$ commute. For $n=1$ this reduces to the single-edge gate $CL_{\mathrm{pa},x_1}^\dagger$ conjugated by $CL_{y_1,x_1}$, which transports the error to the basepoint $v_0$.

\paragraf{Charge Transfer.} After the flux stage the data qudits are in $E_\xi^\mu\ket\psi$, with $A_{v_1}^k\ket\psi=\ket\psi$. By the first line of \eqref{eq:ribbon-endcov}, the ribbon behaves at $v_1$ as an edge entering $v_1$, so the controlled vertex operator $U_{v_1}$ of \eqref{U-def} acts as in \eqref{U-on-error-Z},
\be\label{eq:ribbon-U-on-error}
U_{v_1}\ket1_{\mathrm{va}}\ot E_\xi^\mu\ket\psi=\sqrt{d_\Gamma}\sum_{c=1}^{d_\Gamma}Z^\Gamma_{ac}(\gamma)Z^\Gamma_{cb,\mathrm{va}}\bigl(\ket1_{\mathrm{va}}\ot\ket\psi\bigr)\,.
\ee
The operator in the sum has eigenvalue $\Gamma(\hat y_nw)_{ab}$ on $\ket{y_1,\dots,y_n}\ot\ket w_{\mathrm{va}}$, so the single-edge gate $CL_{i,\mathrm{va}}$ of \eqref{CL-def} is replaced by the controlled multiplication by the ordered product along $\gamma$,
\be\ba\label{eq:ribbon-CL}
CL_{\gamma,\mathrm{va}}&:=CL_{y_1,\mathrm{va}}CL_{y_2,\mathrm{va}}\cdots CL_{y_n,\mathrm{va}}\,,\cr
CL_{\gamma,\mathrm{va}}\ket{\vec y}\ot\ket w_{\mathrm{va}}&=\ket{\vec y}\ot\ket{\hat y_nw}_{\mathrm{va}}\,,
\ea\ee
with $\vec y:=(y_1,\dots,y_n)$ and the rightmost gate applied first. As in \eqref{CL-action} and \eqref{CL-vacuum}, conjugation by $CL_{\gamma,\mathrm{va}}$ maps the operator in \eqref{eq:ribbon-U-on-error} to $Z^\Gamma_{ab,\mathrm{va}}$ and $CL_{\gamma,\mathrm{va}}$ fixes $\ket1_{\mathrm{va}}$, so
\be\label{eq:ribbon-CL-U-result}
CL_{\gamma,\mathrm{va}}\,U_{v_1}\ket1_{\mathrm{va}}\ot E_\xi^\mu\ket\psi=\ket{\Gamma_{ab}}_{\mathrm{va}}\ot\ket\psi\,.
\ee
The data qudits return to $\ket\psi$ and the Fourier transform \eqref{eq:fourier-gate} followed by a measurement in the group basis returns $\mu=(\Gamma,a,b)$. The complete recovery of a ribbon error therefore uses two ancillas, one per label, and consists of controlled group multiplications and one Fourier transform, exactly as for a single edge. For $n=1$ with $\gamma$ a single edge the ribbon transfers reduce to the single-edge transfers of Secs.~\ref{sec:charge-transfer} and \ref{sec:flux-transfer}, up to the transport by $y_1$ in \eqref{eq:ribbon-CF-circuit}.

\paragraf{Junctions and Boundary Ends.} Several ribbons with pairwise disjoint supports that share one end site and whose other end sites carry distinct, untruncated checks form a multi-ribbon support, with one pair of labels per ribbon. 
Each nontrivial label pair is detected at the private far end by Lemma~\ref{lemma:ribbon-detectable} {if the shared site is $s_1$. If the shared site is $s_0$, the flux is detected at $p_1$ by the last line of \eqref{eq:ribbon-ends}, $B_{p_1}F_\xi^{h,g}B_{p_1}=F_\xi^{h,g}B_{p_1}^{g^{-1}h^{-1}g}B_{p_1}=0$ for $h\neq\id$, and the charge is detected at $v_0$ by averaging the second line of \eqref{eq:ribbon-endcov} over $k$, so Lemma~\ref{lemma:ribbon-detectable} holds with $(p_0,v_1)$ replaced by $(p_1,v_0)$}, so the proof of Theorem~\ref{thm:KLribbon} applies with \eqref{eq:ribbon-orth} tensored over the ribbons, and the transfers above are applied ribbon by ribbon, starting from the far ends as in Sec.~\ref{sec:gauge-fixed-correction}. 
A ribbon may also end on a boundary without leaving a boundary excitation when the anyon it deposits there condenses \cite{Beigi:2010htr}. At a smooth end $v_1$, $B_{p_1}=\bbI$ while the flux is still read at $p_0$, so Lemma~\ref{lemma:ribbon-detectable} and the flux transfer are unchanged. At a rough end $v_1$, $A_{v_1}=\bbI$ and the charge is read at $v_0$ instead, where by the second line of \eqref{eq:ribbon-endcov} the ribbon behaves as an edge leaving $v_0$, so the charge transfer is the second line of \eqref{charge-transfer-circuit}, with $CL_{\gamma,\mathrm{va}}^\dagger$ in place of $CL_{i,\mathrm{va}}^\dagger$.


\twocolumngrid
\bibliographystyle{ytphys}
\small 
\baselineskip=.94\baselineskip
\let\bbb\bibitem\def\bibitem{\itemsep4pt\bbb}
\bibliography{ref}
\end{document}